\newcounter{lst}
\definecolor{sectioncolor}{rgb}
{0.7,0.3,0.3}
\definecolor{hillencolor}{rgb}{0.50, 0.03, 0.46}
\definecolor{stepiencolor}{rgb}{0.09, 0.45, 0.27}
\definecolor{contecolor}{rgb}{0.9, 0.3, 0.6}
\definecolor{ryancolor}{rgb}{0, 0.31, 0.78}
\definecolor{lightgray}{rgb}{0.1, 0.1, 0.1}
\newcommand{\RR}{\mathbb{R}}
\newcommand{\bnu}{{\bf \nu}}
\newcommand{\ep}{\varepsilon}
\definecolor{hintergrundfarbe}{rgb}{0.9,0.95,0.95}
\newtheorem{theorem}{Theorem}
\newtheorem{lemma}[theorem]{Lemma}
\newtheorem{corollary}[theorem]{Corollary}
\newtheorem{definition}{Definition}
\providecommand{\keywords}[1]{\textbf{\textit{Keywords --}} #1}
\newenvironment{sistem}
{\left\lbrace\begin{array}{@{}l@{}}}
{\end{array}\right.}
\begin{document}
\title{\textcolor{sectioncolor}{\sf Go-or-Grow Models in Biology:\\ a Monster on a Leash}}
\date{\today}
\author[a,$^\ddag$]{Ryan Thiessen}
\author[,b,$^\ddag$]{Martina Conte\footnote{Corresponding author: conte.martina.93@gmail.com\\ $^\ddag$These authors contributed equally to the work.}}
\author[c]{Tracy L. Stepien}
\author[a]{Thomas Hillen}
\affil[a]{\centerline{\small Department of Mathematical and Statistical Sciences, University of Alberta, Edmonton, Canada}}
\affil[b]{\centerline{\small Department of Mathematical, Physical, and Computer Sciences, University of Parma, Italy}}
\affil[c]{\centerline{\small Department of Mathematics, University of Florida, Gainesville, FL, USA}}

\setcounter{Maxaffil}{0}
\renewcommand\Affilfont{\itshape\small}
\maketitle

\begin{abstract}
    Go-or-grow approaches represent a specific class of mathematical models used to describe populations where individuals either migrate or reproduce, but not both simultaneously. These models have a wide range of applications in biology and medicine, chiefly among those the modeling of brain cancer spread. The analysis of go-or-grow models has inspired new mathematics, and it is the purpose of this review to highlight interesting and challenging mathematical properties of reaction--diffusion models of the go-or-grow type. 
    We provide a detailed review of biological and medical applications before focusing on key results concerning solution existence and uniqueness, pattern formation, critical domain size problems, and traveling waves. We present new general results related to the critical domain size and traveling wave problems, and we connect these findings to the existing literature. Moreover, we demonstrate the high level of instability inherent in go-or-grow models. We argue that there is currently no accurate numerical solver for these models, and emphasize that special care must be taken when dealing with the ``monster on a leash''.
\end{abstract}

\keywords{Go-or-grow dichotomy $|$  Glioma $|$ Traveling Waves $|$ Instabilities and Patterns $|$ Critical domain size} 

\paragraph{\indent \emph{\textbf{MSC codes}} --}{92B05, 35B36, 35M30}

\section{Introduction}
Various biological processes involve a dichotomy between proliferation and migration: individuals in a population either remain relatively stationary and reproduce, or they migrate with minimal reproduction. This phenomenon has been particularly emphasized in the study of glioma progression, where the ``go-or-grow" hypothesis was coined to describe the behavior of brain tumor cells \cite{hatzikirou2012}.

The seminal work by Murray et al.~\cite{murray2003-II} laid the groundwork for much of the existing mathematical modeling of glioma spread,  which is based on the classical Fisher--Kolmogorov--Petrovsky--Piskounov (FKPP) equation. This reaction--diffusion macroscopic mathematical framework has been termed as the proliferation--infiltration (PI) model, as it captures both the proliferative and invasive behaviors of glioma cells observed in experiments and clinical settings. The strength of the PI model lies in its simplicity, requiring only a small set of parameters, which facilitates its calibration to patient-specific data. Extensions of the PI model account for the influence of surrounding tissue, chemical signals, and/or vasculature on glioma cell density \cite{Swanson2003, jbabdi2005, saut2014, Hormuth2022, swanson2008mathematical, swanson2011quantifying, Conte2021PLOS,curtin2020mechanistic,curtin2020speed}. Advection bias in response to environmental cues has also been incorporated, either directly in the macroscopic model, based on mass/flux/momentum balance \cite{colombo2015, kim2009mathematical}, or through more detailed descriptions derived from lower-scale dynamics, as initially proposed in \cite{painter2013mathematical} and further developed in \cite{Conte2020, engwer2015, conte2021, engwer2016effective, engwer2015multiscale, swan2018patient, conte2023mathematical}. Many models have also been developed in (semi)discrete settings \cite{bottger2012investigation, hatzikirou2012, kim2013hybrid}. However, due to the complexity of glioma growth---spanning interactions between glioma cells, their microenvironment, immune cells, vasculature, and molecular signalling systems, all within the heterogeneous landscape of brain tissue---an exhaustive model remains elusive. As a result, models typically focus on well-defined subprocesses. Among these, the phenotypic heterogeneity of glioma cells plays a central role. While there are numerous reviews of mathematical modeling of gliomas \cite{hatzikirou-etal-2005, martirosyan-etal-2015, alfonso2017, jorgensen-etal-2023}, this paper specifically focuses on a family of reaction--diffusion models of the go-or-grow type. In these models, a distinct separation between proliferating and migrating cells leads to systems of two coupled reaction--diffusion equations.

We will focus our attention on a minimal go-or-grow model consisting of two coupled differential equations for the evolution of a moving subpopulation and a resting and proliferating subpopulation of cells. We assume linear diffusion for the moving population and neglect any small diffusivity in the resting population. Concerning the transition functions between the two subpopulations, we account for a general dependence on both subpopulation densities, defining this setting as the {\it general go-or-grow model}. As described above, there are several extensions of this framework that may include nonlinear diffusion of the moving compartment, general waiting time distributions, small diffusivity in the stationary compartment, or more complex population dynamics. Their analysis is rich, well-developed, and their analytical properties are vast \cite{murray2003-II,Fedotov2008,smoller,britton1986reaction,Kuznetsov2020}. 

\subsection{Paper outline}
We first provide in Section~\ref{sec:biomath_overview} a detailed overview of the use of go-or-grow models for describing glioma dynamics (Section \ref{sec:glioma}), reviewing the biological evidence of the go-or-grow dichotomy in glioma (Section \ref{sec:bioEvid}) and the mathematical literature of go-or-grow models formulated with discrete, continuous, deterministic and stochastic approaches (Section \ref{sec:previousmodels}). We also discuss applications of the go-or-grow dichotomy in ecology and other biological settings (Section \ref{sec:other}).

In Section \ref{sec:go-or-grow-models}, we first review the classical FKPP equation as a reference point for further modeling and analysis. We then introduce the main players of this review article, namely the {\it general go-or-grow model} and its simplifications, which we call the {\it total population go-or-grow model}, where the transition functions depend on the sum of the two subpopulation densities \cite{pham2012,stepien2018,tursynkozha2023traveling,Syga2024,bottger2015,falco2024traveling,hatzikirou2012}, and the {\it balanced go-or-grow model} and {\it constant rates go-or-grow model}, which fit into the context analyzed in \cite{Gerlee2012,gerlee2016traveling,Fedotov2008,falco2024traveling,bottger2015,pham2012,Syga2024}. We show that for fast transition rates, a FKPP-type model can be found as the leading order approximation.

After the models are defined, we focus on some mathematical properties. In Section \ref{sec:existence}, we review a method by Rothe~\cite{Rothe} to obtain a general local and global existence result. This does, however, not prevent strange behavior, as we show in Section \ref{sec:instabilities}. Here we employ methods that were developed by Marciniak-Czochra et al.~\cite{Anna1,Anna2}, to find that high-frequency instabilities are possible, and in fact, likely to occur. Section \ref{sec:criticaldomainsize} focuses on the classical ecological problem of a critical domain size, investigating what is the minimum size of a domain to support a go-or-grow population by reviewing previous results of Hadeler et al.~\cite{HadelerLewis2002,lewis1996biological}. Section \ref{sec:tw} then focuses on the existence of invasion waves and the estimation of the invasion speeds. We review previous results by Hadeler et al.~\cite{HadelerLewis2002}, Pham et al.~\cite{pham2012}, Stepien et al.~\cite{stepien2018}, and Falc\'{o} et al.~\cite{falco2024traveling}, and put them into an extended general framework. For this, we employ the general theory of invasion fronts of cooperative systems to generate a new result that applies in our more general framework. This section closes with a comparison of the wave speeds of the go-or-grow models to the FKPP wave speed.

This paper is rounded off with a discussion (Section~\ref{sec:discussion}) of the relevance of these models to both biology and mathematics and the mathematical challenges that are still open. The biological applications are manifold, and the mathematical problems are interesting and challenging. 

\section{Biological background and previous mathematical models} \label{sec:biomath_overview}

\subsection{Motivation: glioma dynamics}\label{sec:glioma}
Gliomas are the most frequent type of primary brain tumors. They originate from glial cells in the central nervous system and account for $78\%$ percent of malignant brain tumors, of which glioblastoma multiforme (GBM) is the most aggressive subtype. It is characterized by infiltrative spread and fast proliferation. These features make GBMs relentless and deadly brain cancers, claiming the lives of most patients within eighteen months of diagnosis, with only a small percentage surviving beyond five years \cite{ostrom2014,wick2018}.

A significant challenge stems from the highly invasive nature of gliomas, characterized by diffuse borders and recurrence of the disease around or within the treated area \cite{hou2006}. Glioma cells not only invade locally, navigating through the extracellular matrix (ECM), but also exploit structured pathways such as myelinated fibers, vasculature, and white matter tracts, enabling them to travel great distances \cite{cuddapah2014, Hormuth2022, saut2014, osswald2015, swan2018patient, engwer2015}.  This results in GBMs often presenting with a heterogeneous and anisotropic shape.

\subsubsection{Biological evidence of go-or-grow dichotomy in glioma}\label{sec:bioEvid}
Many experiments with cultures of glioma cells suggest mutual exclusion of migratory and proliferative behavior, a mechanism known as the go-or-grow dichotomy \cite{godlewski2010microrna,giese2003cost,xie2014targeting,berens1999those,giese1996,wang2012ephb2,farin-etal-2006}, schematized in Figure \ref{fig:experimentaldata}. Precisely, in \cite{giese2003cost,giese1996} experimental evidence suggests that there may be an inherent and inverse correlation between cell motility and proliferation within a malignant glioma cell population, i.e., highly migratory glioma cells have a lower proliferation rate compared to those cells with weak migratory ability that proliferate much more. In \cite{godlewski2010microrna}, a glioma-expressed microRNA that regulates the balance of proliferation and migration in response to metabolic stress is identified, revealing that, in addition to inhibiting glioma cell migration, this microRNA expression also promotes cell proliferation. The role of the receptor tyrosine kinase EphB2 in controlling the proliferation/migration dichotomy in GBMs and its pro-invasive and anti-proliferative effects is later shown in \cite{wang2012ephb2}. In \cite{hoek2008}, the go-or-grow behavior is also observed in melanoma cells, although other studies suggest that the behavior appears to be specific to tumors of glial origin \cite{garay-etal-2013}. 

However, the go-or-grow hypothesis does not apply to all tumor growth in general. For example, in \cite{ratliff2023individual,odde2023glioblastoma} the authors identified glioblastoma cells that exhibit both high proliferative and invasive capabilities simultaneously. In \cite{ayuso2017glioblastoma,scribner2017single}, the emergence of a go-and-grow phenotype under hypoxic conditions is observed. Similarly, in \cite{vittadello2020} it is shown for melanoma that differentiating between solely moving or proliferating cancer cells could not be determined. Hence, the go-or-grow mechanism is not a universal phenomenon in cancer, although it has been reported in many cases and most prominently in glioma.

\begin{figure}[t]
\centering
    \includegraphics[width=10cm]{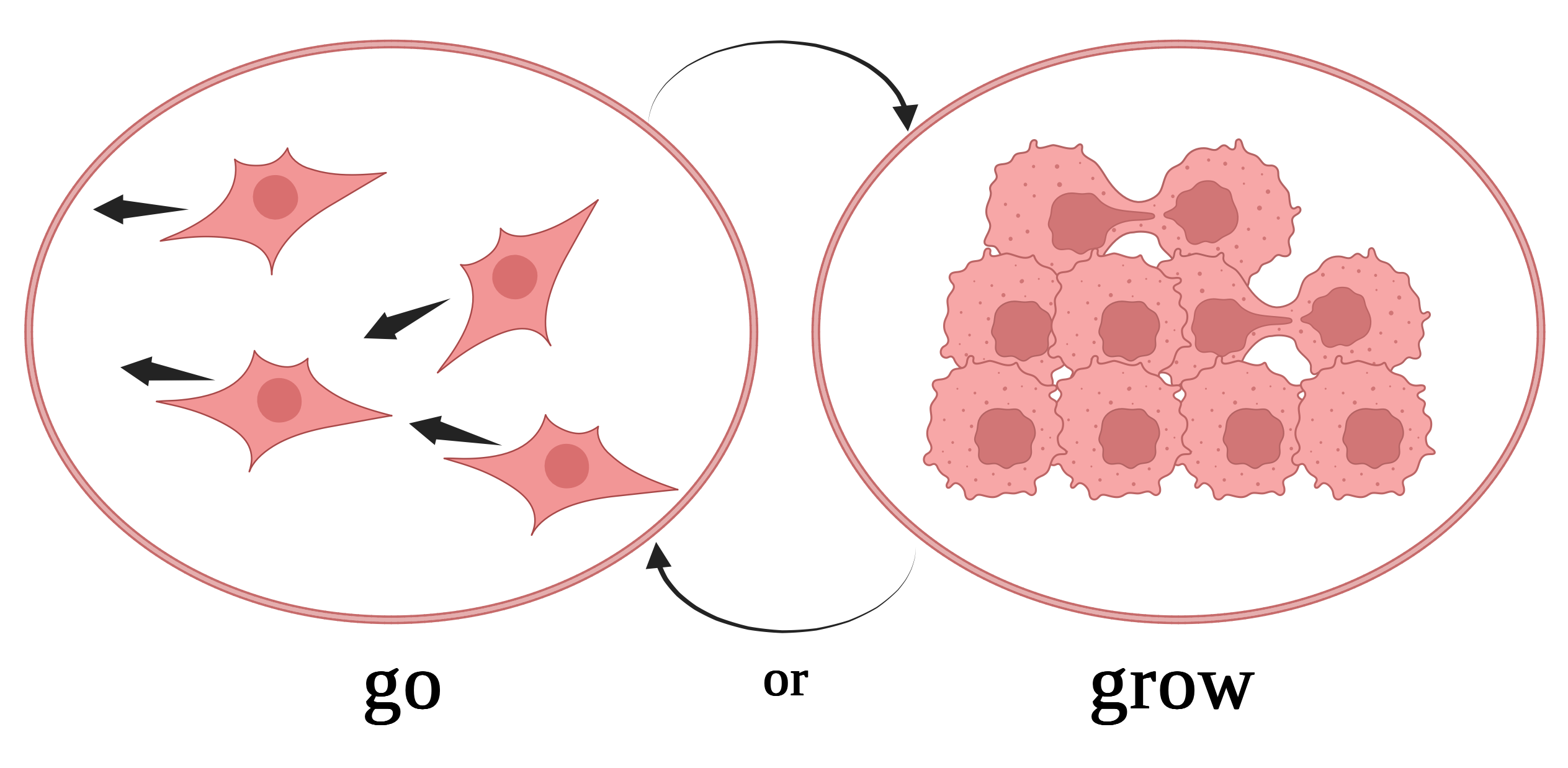}
    \caption{{\bf Sketch of the go-or grow dichotomy.} In this framework, cells can switch between a moving phenotype and a stationary and reproducing phenotype. Created in BioRender. Conte, M. (2024) \url{https://BioRender.com/i82o612}}
    \label{fig:experimentaldata}
\end{figure}

\subsection{Mathematical approaches to the go-or-grow dichotomy in glioma}\label{sec:previousmodels}
The large literature concerning mathematical models for the description of the go-or-growth mechanism in gliomas ranges from (semi)discrete to continuous approaches, from hybrid to stochastic methods. Moreover, within models of the same family, further classifications can be considered in relation to either the type of considered phenotypic switch functions (e.g.\ constant, density-dependent rates, or environmental-driven switching) or the characteristic of the operators modeling migration (e.g.\ linear or nonlinear diffusion, as well as additional tactic terms).

\subsubsection{Stochastic and discrete models}
One of the first publications investigating the go-or-grow mechanism is the model by Fedotov and Iomin \cite{Fedotov2008,fedotov2007migration} that uses a continuous time random walk (CTRW) to formulate the balance equations for the cancer cells of two phenotypes with random switching between cell proliferation and migration. The CTRW describes tumor cell transport, assuming an arbitrary waiting time distribution between jumps and general jump kernels, while proliferation is modeled by a logistic growth. The authors study the overall spreading rate, defined as the speed of the traveling wave solution of their system, and find its dependence on the statistical characteristics of the random switching process. From these first works, the literature on this topic started to expand.

A series of papers from Hatzikirou, Deutsch, and co-authors analyze the problem using lattice-gas cellular automaton (LGCA) models that account for unbiased cell random walk and cell proliferation influenced by the local oxygen concentration and according to a carrying capacity-limited mechanism. Phenotype changes either with constant probabilities \cite{bottger2012investigation,hatzikirou2012}, local cell density-dependent rates \cite{bottger2015,tektonidis2011identification,Syga2024}, or in response to environmental stimuli \cite{kim2013hybrid}. The width of the invasive zone and the speed of the invasive front are analyzed in relation to the phenotypic transition parameter. In particular, in \cite{bottger2015,Syga2024} the authors distinguish two scenarios depending on the parameter regions. In one case, called repulsive regime, the tumor cell population will inevitably grow. In the other case, called attractive regime, they identified conditions under which sufficiently small tumors die out. This is a phenomenon known as the  Allee effect and it is deeply studied in ecology. Mean field approximations of the LGCA models are also provided, under the assumption that the switch dynamics are much faster than cell proliferation and death. The authors derive macroscopic descriptions of FKPP type, showing that there is a trade-off between diffusion and proliferation, reflecting the inability of cells to migrate and proliferate simultaneously.

With a methodology similar to \cite{Fedotov2008}, in \cite{gerlee2016traveling,Gerlee2012} Gerlee and Nelander develop an individual-based (IB) stochastic model with constant switching rates between proliferating or migrating states. From it, they derive a system of macroscopic PDEs that establishes a connection between single cell characteristics and macroscopic behaviors. Precisely, their strategy is to formulate two coupled master equations for the occupation probabilities of the lattice sites and then approximate them by a set of PDEs. The derived macroscopic system features nonlinear diffusion for both cell populations, as well as additional transport for the moving cells, which makes the model different from the our minimal go-or-growth setting. Wave speed analysis is performed to show the influence of the parameters (switching, proliferation, motility, and apoptosis rate) on the wave speed. Moreover, under suitable simplified assumptions, a closed expression of the minimum wave speed is derived, as well as an approximate expression of the wavefront shape, using singular perturbation methods.

\subsubsection{Continuous models} \label{sec:continuous-models}

Inspired by the macroscopic counterparts of the above-described discrete approaches modeling the go-or-growth dichotomy in glioma cells, continuous formulations of such mechanism have been proposed. These approaches are either derived from multi-scale descriptions of cell dynamics or directly stated at the macroscopic level as reaction--diffusion models.

Concerning multi-scale models, the first setting is described in \cite{Chauviere2010}, where the authors build a model for two interacting cell populations with different migratory properties whose corresponding phenotype changes are induced by cell density effects. In particular, they investigate the phenotypic adaptation to dense and sparse environments, showing that when the cell migratory ability is reduced by a crowded environment, a diffusive instability may appear, leading to the formation of aggregates of cells of the same phenotype. Similar approaches to the same problem have been later proposed in a series of works accounting for the go-or-grow cell mechanism in response to constant or density-dependent rates \cite{engwer2015multiscale,zhigun2018strongly,HillenMicrotubes,alsisi2021non} as well as different environmental factor, such as oxygen and nutrient availability, acidity \cite{conte2021}, therapy effects \cite{hunt2017multiscale}, or cell membrane receptors' concentration \cite{stinner2016global}. These approaches feature non-linear diffusion and transport terms for the moving cells. Moreover, in some of these works, results concerning the global existence of weak solutions for the derived macroscopic setting are proved \cite{stinner2016global,zhigun2018strongly}. In this category, we also have to mention a recent work from Hillen and co-authors \cite{HillenMicrotubes}, who propose a mathematical model for the microtube-induced glioma invasion in which, under suitable scaling arguments, a go-or-grow mechanism could be identified, {\it a posteriori}, justifying the earlier approaches. The idea of microtube-mediated cell invasion comes from some recent experiments \cite{osswald2015} showing how glioma cells send out thin cell protrusions - the tumor microtubes - that form a network invading the healthy tissue, making connections between cells and enabling long-range cell-to-cell signaling.

Considering models directly stated at the macroscopic level in terms of reaction--diffusion systems of equations for the two cell sub-populations, a first approach can be found in \cite{pham2012}. The described setting models a density-dependent go-or-grow mechanism, characterizing the development of spatio-temporal instabilities and traveling wave solutions. In the same context of density-dependent phenotypic changes, we must also account for the investigations collected in \cite{stepien2018,tursynkozha2023traveling,falco2024traveling}. Precisely, in \cite{stepien2018} the authors analyze the speed and shape of traveling wave solutions of a go-or-grow model for some specific expression of the phenotypic transition functions. They derive a closed form for the minimum wave speed and, adapting the method of Canosa \cite{canosa1973nonlinear}, obtain an approximation of the traveling wave solution. This study has been then extended in \cite{tursynkozha2023traveling} for slightly more general phenotypic transition functions. Recently, in \cite{falco2024traveling}, a formal analysis of the traveling wave solutions has been presented in comparison with the classical results of the well-known FKPP equation. However, this has been obtained with simplified assumptions on the transition parameters, comparable with considering constant rate functions. Besides these studies, which relate to a minimal go-or-grow model similar to \eqref{general-model}, it is worthy to mention some other works developed within the same biological context that, however, either account for small diffusion terms in the equation for the proliferative population \cite{burger2020reaction,pardo2016nonlinear} or consider phenotypic transition rates depending on external signals \cite{alfonso2016one,saut2014,martinez2012hypoxic,stein2007mathematical,stein2018mathematical,crossley2024phenotypic,ayuso2017glioblastoma}. Models that incorporate the microenvironment include \cite{crossley2024phenotypic} where the authors analyze a heterogeneous population comprising cells that can either move and degrade ECM or proliferate, including both nonlinear diffusion and an additional tactic term for the former, as well as the evolution of the ECM. They explore the impact of different switching mechanisms on the speed of the populations, including constant, ECM-dependent, and density-dependent switching rates.

\subsection{Other applications} \label{sec:other}
Go-or-grow models have also been used in ecological contexts for the description of immobile and dispersing populations and two-species competition settings. Studies of the ``drift paradox'', where aquatic insects in streams with downstream drift are able to remain upstream, have included two-phase models where an insect is either mobile in the stream (moving with diffusion and advection) or stationary on the benthos (and reproducing), and insects transition between each phase \cite{pachepsky-etal-2005,lutchser-etal-2006,lutchser-etal-2007,lutscher-seo-2011,huang-etal-2016,liang-etal-2017,wang-shi-2020,yan-etal-2022}. The benthic-drift model has also been extended to study the movement of zebra mussels in rivers \cite{huang-etal-2017,jin-zhao-2021}. Some models of predator-prey systems or spreading of epidemics take on the form of two-equation reaction--diffusion systems with one population as the prey or the susceptible population and the other as the predator or the infected population, respectively \cite{capasso1981convergence,dunbar1983traveling,dunbar1984traveling,dunbar1986traveling,huang2003existence,li2008traveling,lin2011traveling,hsu2012existence,huang2012traveling,huang2016geometric,zhang2016minimal,zhao-zhou-2016,lam2018traveling}. Here, the populations only transition in one direction (from energy conversion from prey to predator or transfer from susceptible to infected population) and both may be mobile.

Other examples of go-or-grow mechanisms in biological modeling studies include movement of forest boundaries \cite{kuznetsov-etal-1994,chuan-etal-2006}, amoebae \cite{cochet-escartin-etal-2021}, genetically engineered microbes \cite{lewis-GEM-etal-1996}, cells during wound healing \cite{landman-etal-2007}, genetic traits \cite{venegas-ortiz-etal-2014}, mRNA cargo \cite{ciocanel-etal-2018}, and bacteria \cite{patra-klumpp-2021}.

Further afield, studies of autocatalytic chemical reactions often include reaction--diffusion equation systems with one-directional transitions and equal or differing diffusion constants for the reactant and autocatalyst  \cite{billingham-needham-1991,merkin-sadiq-1996,merkin-etal-1996,sadiq-merkin-1996,sadiq-merkin-1996,finlayson-merkin-1997,xin-2000}. The reaction terms in the autocatalytic models inspired the form of the Klausmeier--Gray--Scott model, which examines stripe and pattern formation of vegetation depending on water movement (\cite{vanderstelt-etal-2013,sewalt-doelman-2017,chen-wang-2024}).

\section{The go-or-grow model} \label{sec:go-or-grow-models} 

In this section, we first introduce the classical FKPP equation \cite{murray2002-I,Levin,britton1986reaction}, for which there is a vast developed literature and because it forms the basis for analysis of go-or-grow models due to their similar structure. Various go-or-grow models are subsequently stated, and the connection between the FKPP equation and go-or-grow models is demonstrated in the last subsection via the fast transition rate scaling.

\subsection{FKPP equation} 
Letting $m(t,x)$ denote the population size at time  $t\geq 0$ and spatial position $x\in \Omega\subset\mathbb{R}^k$, $k\geq 1$,  then the FKPP reaction--diffusion equation reads 
\begin{equation}\label{FKPP}
m_t = d \Delta m + f(m), 
\end{equation}
where the subscript $t$ denotes the partial time derivative and $\Delta$ denotes the spatial Laplacian on $\Omega$. Typically, $f(m)$ is of \ul{\it logistic type}, i.e., it satisfies the following properties
\begin{equation}\label{logistic}
f(0) = 0, \quad f(K) = 0,\quad f'(0)>0,\quad f'(K)<0, \,\,\,\mbox{and}\,\, f(m)>0 \mbox{ for } 0<m<K, 
\end{equation}
where $K>0$ denotes the carrying capacity. In addition, we will often consider the \ul{\it subtangential condition}, which is 
\begin{equation}\label{st}
f'(0) m \geq f(m),\qquad \text{for }\,\, 0\le m\le K. 
\end{equation}
In the following, we will use the notation $
{f(m)=g(m)m}$, where $g(m)$ denotes the per-capita growth rate of the population $m(t,x)$.

\subsection{Go-or-grow model types}
Go-or-grow models can be seen as a generalization of the FKPP equation. We first define the general continuous go-or-grow model that represents the focus of this review, and then several special variations that are used to address some of the mathematical challenges described in the subsequent sections. 

\paragraph{General go-or-grow model.} Let $u(t,x)$ denote the density of migrating cells and $v(t,x)$ the density of stationary and proliferating cells. The {\it \ul{general go-or-grow model}} is the system of equations
\begin{equation} \label{general-model}
\begin{sistem}
u_t = d\Delta u -\mu u - \alpha(u,v) u +\beta(u,v) v, \\[0.2cm]
v_t = g(u+v) v +\alpha(u,v) u -\beta(u,v) v ,
\end{sistem}
\end{equation}
where $d>0$ is a constant diffusion coefficient, $\mu\geq 0$ is a natural death rate for the moving compartment, $\alpha(u,v)$ denotes the transition rate function from the moving cell compartment to the stationary one, and $\beta(u,v)$ denotes the transition rate function from the stationary cell compartment to the moving one. The function $g(u+v)v$ accounts for the proliferation of the cell population, and, in the literature, it often takes a logistic form.

\paragraph{Total population go-or-grow model.} Starting from the general model \eqref{general-model}, a first variation of it is obtained by assuming that the transition rates $\alpha$ and $\beta$ depend on the sum of the two populations, i.e., $\alpha=\alpha(n)$ and $\beta=\beta(n)$, with 
\begin{equation}\label{n_def}
    n(t,x) = u(t,x) + v(t,x)\,.
\end{equation}
This version of the go-or-grow model has been used in many papers (e.g. \cite{pham2012,falco2024traveling,Syga2024,tektonidis2011identification,tursynkozha2023traveling}). We refer to it as the {\it  \ul{total population go-or-grow model}}, whose equations are given by
\begin{equation}\label{totalpop-model}
\begin{sistem}
    u_t = d\Delta u -\mu u - \alpha(n) u +\beta(n) v, \\[0.2cm]
    v_t = g(n) v +\alpha(n) u -\beta(n) v.
\end{sistem}
\end{equation}

\paragraph{Balanced go-or-grow model.} If we consider that the transition probabilities from moving to stationary compartment and from stationary to moving compartment are balanced in the total population model~\eqref{totalpop-model}, i.e., $\alpha(n)+\beta(n)=1$, we obtain the {\it \ul{balanced go-or-grow model}}. Used in several works (e.g. \cite{pham2012,bottger2015,Syga2024}), its equations read
\begin{equation} \label{balanced-model}
\begin{sistem}
    u_t = d\Delta u -\mu u + \gamma[\Gamma(n) v - (1-\Gamma(n)) u],\\[0.2cm]
    v_t = g(n) v -\gamma[\Gamma(n) v - (1-\Gamma(n)) u],
\end{sistem}
\end{equation}
where $\gamma>0$ denotes the rate at which a switch happens and $\Gamma(n)$ and $(1-\Gamma(n))$ are the probabilities that the switch is from moving to stationary compartment or vice versa, respectively.

\paragraph{Constant rates go-or-grow model.} A further simplification of the previous models can be considered by assuming that the transition rates are constant, such as in \cite{lewis1996biological,gerlee2016traveling,Fedotov2008}. We refer to this model as the {\it \ul{constant rates go-or-grow model}}, which reads
\begin{equation} \label{constantrates-model}
\begin{sistem}
    u_t = d\Delta u -\mu u - \alpha u +\beta v, \\[0.2cm]
    v_t = g(u+v) v +\alpha u -\beta v.  
\end{sistem}
\end{equation}
The constant rates model~\eqref{constantrates-model} may be seen as a special case of the balanced model~\eqref{balanced-model} by setting 
\begin{equation}\label{cond_models}
    \gamma=\alpha+\beta \quad \mbox{and} \quad \Gamma=\frac{\beta}{\alpha+\beta} .
\end{equation}
Analogously, the balanced model~\eqref{balanced-model} may be seen a special case of the total population model~\eqref{totalpop-model} using the same assumption \eqref{cond_models} and considering the rate $\gamma$ depending on the total population.\\
\indent For later use, it is useful to write the above models in vector form as
\begin{equation}\label{vector}
{\bf w}_t = D \Delta {\bf w} + {\bf F}({\bf w}), 
\end{equation}
where ${\bf w} = (u,v)^T$, $D = \mathrm{diag}(d,0)$, and 
\begin{align}\label{F_vec}
    {\bf F}({\bf w}) = \begin{pmatrix}
        -\mu u -\alpha(u,v)u + \beta(u,v)v \\[0.3cm]
        g(u+v)v + \alpha(u,v)u -\beta(u,v)v 
    \end{pmatrix}\,.
\end{align}
In this formulation, (\ref{vector}) clearly resembles (\ref{FKPP}). 

\subsection{Fast transition rate scaling}\label{sec:fast_trans_rate}
We begin by reviewing a well-known scaling technique used to connect the mathematical properties of the go-or-grow type of models with the classical FKPP equation \eqref{FKPP}.
The \ul{\emph{fast transition rate scaling}} is based on the natural assumption that the transitions between stationary and moving states are fast as compared to proliferation and diffusion. Letting $\ep > 0$ be a small parameter, we consider the general go-or-grow model \eqref{general-model} with fast transition rates $\alpha \sim \alpha/\ep$ and $\beta \sim \beta/\ep$, resulting in
\begin{equation}\label{Scaling_FTR}
\begin{sistem}
    u_t = d\Delta u -\mu u - \dfrac{1}{\ep}\left(\alpha(u,v) u -\beta(u,v) v\right), \\[0.4cm]
    v_t = g(u+v) v +\dfrac{1}{\ep}\left(\alpha(u,v) u -\beta(u,v) v \right).
\end{sistem}
\end{equation}
\noindent We introduce the regular perturbation expansion of $u(t,x)$ and $v(t,x)$,
\begin{align*}
    u(x,t;\ep) &= u_0(t,x) + \sum^{\infty}_{n = 1}u_n(t,x) \ep^n, \\
    v(x,t;\ep) &= v_0(t,x) + \sum^{\infty}_{n = 1}v_n(t,x) \ep^n,
\end{align*}
and we assume that $\alpha$, $\beta$, and $g$ are of order O(1) and sufficiently smooth to allow for a Taylor expansion about $(u_0,v_0)$. Further assuming that each order of $\ep$-terms vanishes independently, the zeroth order of $\ep$ reads
\begin{align} \label{perturbexpand_order0}
     -\alpha(u_0,v_0)u_0 + \beta(u_0,v_0)v_0 = 0\,, 
\end{align}
while the first order of $\ep$ gives
\begin{subequations}
\begin{align}
    (u_0)_t  &= d\Delta u_0 -\mu u_0 - \alpha(u_0,v_0)u_1 +\beta(u_0,v_0)v_1 \nonumber \\
    & \qquad - (u_1,v_1)\cdot\nabla_{uv}\alpha(u_0,v_0)u_0
    + (u_1,v_1)\cdot\nabla_{uv}\beta(u_0,v_0)v_0,  \label{perturbexpand_order1_a}\\[0.3cm]
    (v_0)_t  &= g(u_0+v_0)v_0 + \alpha(u_0,v_0)u_1 - \beta(u_0,v_0)v_1 \nonumber \\
    & \qquad + (u_1,v_1)\cdot\nabla_{uv}\alpha(u_0,v_0)u_0
    - (u_1,v_1)\cdot\nabla_{uv}\beta(u_0,v_0)v_0 ,  \label{perturbexpand_order1_b}
\end{align}
\end{subequations}
where $\nabla_{uv}$ is used to distinguish the derivatives with respect to $u$ and $v$ from the spatial derivatives.
Summing \eqref{perturbexpand_order1_a} and \eqref{perturbexpand_order1_b}, and then together with \eqref{perturbexpand_order0}, we get the system
\begin{equation}\label{gen-fast-transition1} 
\begin{sistem}
    (u_0+v_0)_t = (d \Delta - \mu) u_0  + g(u_0+v_0)v_0,  \\[0.4cm]
    -\alpha(u_0,v_0)u_0 + \beta(u_0,v_0)v_0 = 0. 
\end{sistem}
\end{equation}

\noindent The second equation in \eqref{gen-fast-transition1} is an implicit equation for $u_0$ and  $v_0$. We can use the implicit function theorem to find $v_0$ as a function of $u_0$ if
\[
\dfrac{\partial \beta (u_0,v_0)}{\partial v_0}v_0+\beta (u_0,v_0)-\dfrac{\partial \alpha (u_0,v_0)}{\partial v_0} u_0\ne 0 .
\]
For particular forms of the transition rate functions $\alpha$ and $\beta$, it is possible to explicitly compute the inverse function. 

The first example of using the fast-transition rate scaling in go-or-grow models appears in \cite{HadelerLewis2002} for the constant rates model~\eqref{constantrates-model}. Under this model, it is possible to simplify system \eqref{gen-fast-transition1} to a single equation for the total population $n$, obtaining a convex combination of diffusion and growth as 
\begin{align}\label{FKPP_FTS}
    n_t  = \frac{\beta}{\alpha +\beta}d\Delta n + \frac{\alpha}{\alpha + \beta}g(n)- \mu n .
\end{align}
Here, the ratios $\beta/(\alpha + \beta)$ and $\alpha/(\alpha + \beta)$ represent the average fractions of the total cell population that belongs to the moving and stationary phase, respectively.

In the context of the total population go-or-grow model~\eqref{totalpop-model}, an example of fast-transition rate scaling is done in \cite{falco2024traveling} with $\mu=0$ and $d=1$. The following single governing equation in terms of the total population $n$ is obtained:
\begin{equation}
    n_t = \nabla \cdot \left(D(n)\nabla n\right) + \frac{\alpha(n)}{\alpha(n)+\beta(n)}n(1-n),
\end{equation}
where the density-dependent diffusion takes the form
\begin{equation}
    D(n) = \dfrac{\alpha(n)\beta'(n)- \alpha'(n)\beta(n)}{(\beta(n)+ \alpha(n))^2} \dfrac{\beta(n)}{\beta(n)+ \alpha(n)} .
\end{equation}
The natural assumptions that $\alpha(n)$ is a non-increasing function of the total cell density and $\beta(n)$ is a non-decreasing function guarantee that $D(n) \geq 0$ for any $n\geq 0$. If $\alpha$ and $\beta$ are constant, we return to equation \eqref{FKPP_FTS}, which is of FKPP type \eqref{FKPP}. Moreover, under the assumption that cells remain proliferative at low densities, i.e., $\beta(0)=0$, a diffusion coefficient is obtained which resembles typical non-linear equations with degenerate diffusion. 

Finally, in \cite{scribner2017single}, the authors compare a go-or-grow type model to a single-population model, showing how the single-population model replicate key features of glioblastoma progression. While their go-or-grow model includes switching functions based on factors other than cell density, it is important to note that the parameters used for comparison in the go-or-grow model (see Table 2 in \cite{scribner2017single}) are an order of magnitude larger than the other rates. This discrepancy satisfies the assumption of the fast transition rate approximation, which helps explain why the simulation results from both models are similar.

\section{A general existence result} \label{sec:existence}
All the models introduced in the previous section are based on the coupling of an ordinary differential equation with a reaction--diffusion equation. It is possible to prove a general existence result by using the existence theory for mixed type equations presented by Rothe in \cite{Rothe}. Here, we recall the main assumptions and definition.

Let us consider system \eqref{general-model} in the form 
\begin{equation}\label{generalRothe}
\begin{sistem}
    u_t = d\Delta u -d u +(d-\mu) u - \alpha(u,v) u +\beta(u,v) v, \\[0.2cm]
    v_t = g(u+v) v +\alpha(u,v) u -\beta(u,v) v, 
\end{sistem}
\end{equation}
where we explicitly write the terms $-du + du$ so that $d(\Delta-1)$ can be identified as a generator that has no eigenvalue equal to 0 in typical Neumann and periodic boundary conditions.
In the case of Dirichlet boundary conditions, we do not need to use this trick and instead consider $d\Delta$ as the generator.
\begin{definition}
    A \ul{\emph{mild solution}} $(u,v)$ of \eqref{generalRothe}  satisfies
    \begin{eqnarray*} (u,v)&\in& L^\infty([0,T)\times\Omega)^2 \\
    u(\cdot,t) &=& S(t) u_0 +\int_0^t S(t-s)\Bigl((d-\mu)u(s) -\alpha(u(s), v(s)) u(s) + \beta(u(s), v(s)) v(s) \Bigr) ds\\
    v(\cdot,t) &=& v_0 +\int_0^t g(u(s)+v(s))v(s)  +\alpha(u(s), v(s)) u(s) - \beta(u(s), v(s)) v(s)\; ds         
    \end{eqnarray*}
where $u(s) = u(\cdot,s)$ and $v(s) = v(\cdot,s)$, while $S(t)$ denotes the semigroup generated by $d(\Delta-1)$ on $L^\infty(\Omega)$ with the boundary conditions \eqref{BOU} defined below.
\end{definition}

Let us introduce the following set of assumptions.
{\allowdisplaybreaks
\begin{subequations}
\begin{align}
\begin{split}
\bullet&\,\,\Omega\subset\RR^n\,\text{is a smooth bounded domain with H\"older continuous boundary, i.e.,}\\
&{\hspace{5.3cm}\partial\Omega\in C^{2+\alpha}\,,\,\,\,\, \alpha\in(0,1)}.
\end{split}\label{DOM}\\[0.2cm]
\bullet&\, \,\text{The initial conditions satisfy }\, \bar{u}_0, \bar{v}_0\in L^\infty(\Omega)\,.\label{INI}\\[0.2cm]
\begin{split}
\bullet&\,\, \text{There are differentiable functions }\, a(x)\geq 0\,,\,b(x)\geq 0\,,\,a(x)\, \text{ and } b(x)\, \text{not simultaneously}\\ 
&\text{ equal to 0, such that for all } x\in \partial \Omega,\\
&{\hspace{4.5cm} b(x) u(t,x) +a(x) \frac{\partial}{\partial n} u(t,x) =0,}\\[0.1cm]
&\text{ where $n$ denotes the outer normal vector on } \partial\Omega.
\end{split}\label{BOU}\\
\begin{split}
\bullet&\,\, \text{For each bounded set }\, B\in \RR^2\,, \text{ there exists a constant }\, C_B\,\text{ such that for all }\\
& {(u,v), (w,y)\in B}, \\
&\hspace{3.5cm} |g(u+v)+\alpha(u, v) +\beta(u,v) | \leq C_B\,,\\
&\hspace{3.5cm} |g(u)- g(w)|\leq C_B|u-w|\,,\\
&\hspace{3.5cm} |\alpha(u,v)-\alpha(w,y)|\leq C_B (|u-w| + |v-y|)\,,\\
&\hspace{3.5cm} |\beta(u,v) - \beta(w,y)| \leq  C_B (|u-w| + |v-y|)\,.     
\end{split}\label{LIP}\\[0.2cm]
\begin{split}
\bullet&\,\text{ The functions }\, g,\alpha,\beta \in C^2;\,\text{ moreover, }\, \bar{u}_0\in C^{2+\alpha}\,,\,\bar{v}_0 \in C^{\alpha}, \text{ and } \bar{u}_0\, \text{ satisfies the}\\
&\text{ boundary conditions \eqref{BOU}.} 
\end{split}\label{STR}
\end{align}
\end{subequations}}

\begin{theorem}\label{Rothe}[Theorem 1 in \cite{Rothe} (pages 111-112)]
Assume \eqref{DOM}--\eqref{LIP}, then \eqref{general-model} has a unique mild solution which is bounded in $L^\infty$. If the maximal time of existence $T_{\tiny\mbox{max}} <\infty$, then 
\[ \lim_{t\to T_{\tiny\mbox{max}}} \| (u(t,x), v(t,x))\|_\infty = \infty. \]
Moreover, if \eqref{STR} holds, then, for all $T<T_{\tiny\mbox{max}}$, the solutions are classical with 
\[ u\in C^{2+\alpha,1+\alpha/2}(\bar \Omega\times [0,T)), \qquad v\in C^{\alpha, 1+\alpha/2} (\bar\Omega\times [0,T))\,. \] 
\end{theorem}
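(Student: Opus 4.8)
The plan is to follow the abstract semigroup approach to mixed ODE--PDE systems, treating the first equation parabolically and the second as an ODE integrated pointwise in space; this is the structure underlying Rothe's abstract result. First I would fix the generator: let $A = d(\Delta - 1)$ with the boundary conditions \eqref{BOU}, which by the $-du+du$ trick has no zero eigenvalue and therefore generates the bounded semigroup $S(t)$ on $L^\infty(\Omega)$ used in the definition (for Dirichlet data one takes $A = d\Delta$ directly). The system is then recast exactly as the pair of integral equations in the definition of a mild solution: $u$ via Duhamel's formula with $S(t)$, and $v$ via direct time integration, reflecting the absence of diffusion in the $v$-equation.

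For local existence and uniqueness I would run a contraction mapping argument. Work in the Banach space $Y = C([0,\tau]; L^\infty(\Omega))^2$ with the supremum norm, and define the solution map $\Phi(u,v) = (\tilde u, \tilde v)$ by the right-hand sides of the mild formulation. On a closed ball of radius $R$ about the initial data $(\bar u_0, \bar v_0)$, regarded as constant in time, the local Lipschitz bounds \eqref{LIP} control the nonlinear terms $-\alpha u + \beta v$ and $g v + \alpha u - \beta v$ uniformly, and the boundedness of $S(t)$ on $L^\infty$ together with a factor of $\tau$ from the time integrals shows that $\Phi$ maps the ball into itself and is a contraction once $\tau$ is small. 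The Banach fixed-point theorem then yields a unique mild solution on $[0,\tau]$, bounded in $L^\infty$.

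Next I would establish the maximal interval of existence and the blow-up alternative. The local solution extends to a maximal $[0, T_{\max})$ by the usual continuation argument. If $T_{\max} < \infty$ while $\|(u,v)\|_\infty$ remained bounded near $T_{\max}$, then the length $\tau$ of the local existence interval can be chosen uniformly, depending only on that bound, so restarting the fixed point near $T_{\max}$ would push the solution past $T_{\max}$, a contradiction; hence $\lim_{t \to T_{\max}} \|(u,v)\|_\infty = \infty$. For the regularity conclusion under the stronger hypotheses \eqref{STR}, I would bootstrap: since $v(\cdot,t)$ is the time integral of a bounded, spatially Hölder right-hand side, it inherits the regularity $C^{\alpha, 1+\alpha/2}$; the $u$-equation is then a linear parabolic equation with Hölder coefficients and forcing, so parabolic Schauder theory gives $u \in C^{2+\alpha, 1+\alpha/2}$, using that $g,\alpha,\beta \in C^2$ and that $\bar u_0$ satisfies the compatibility condition with \eqref{BOU}.

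The main obstacle I anticipate is precisely the missing smoothing in the $v$-component. In a fully parabolic system both unknowns gain regularity from the semigroup, but here $v$ solves an ODE at each point $x$ and can be no smoother in space than the coefficients and $u$ permit. Every estimate for $v$ must therefore be extracted from time integration and from the coupling to $u$, and in the bootstrap one cannot invoke regularity for the pair simultaneously but must alternate: first read off the time regularity and spatial Hölder continuity of $v$ from boundedness, then apply Schauder to $u$, and finally propagate the improved regularity of $u$ back into $v$. This asymmetry is exactly what produces the different regularity classes $C^{2+\alpha,1+\alpha/2}$ for $u$ and $C^{\alpha,1+\alpha/2}$ for $v$ in the statement.
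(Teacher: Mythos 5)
Your proposal is essentially correct, but note that the paper does not prove this theorem at all: it is quoted directly as Theorem 1 of Rothe's monograph, and the paper's own work consists only of putting the go-or-grow system into Rothe's framework --- rewriting \eqref{general-model} as \eqref{generalRothe} so that $d(\Delta-1)$ (with no zero eigenvalue under the boundary conditions \eqref{BOU}) generates the semigroup $S(t)$ on $L^\infty(\Omega)$, defining the mild solution accordingly, and matching the hypotheses \eqref{DOM}--\eqref{STR} to Rothe's assumptions. What you have done instead is reconstruct the argument behind the citation: Duhamel for $u$ and pointwise-in-$x$ time integration for $v$, a Banach fixed point driven by the local Lipschitz bounds \eqref{LIP}, a continuation argument whose uniform local existence time (depending only on the $L^\infty$ bound, since $C_B$ depends only on the bounded set $B$) yields the blow-up alternative, and an alternating bootstrap --- Hölder regularity of $v$ from time integration, parabolic Schauder theory for $u$, then propagation back into $v$ --- that correctly explains the asymmetric classes $C^{2+\alpha,1+\alpha/2}$ for $u$ and $C^{\alpha,1+\alpha/2}$ for $v$. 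Your route buys a self-contained proof that makes transparent where each hypothesis enters; the paper's route buys brevity and offloads the functional-analytic care onto an established reference.

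One technical caveat in your fixed-point setup: the heat semigroup is not strongly continuous on $L^\infty(\Omega)$, so for initial data merely in $L^\infty$ (assumption \eqref{INI}) the map $t\mapsto S(t)\bar u_0$ need not belong to $C([0,\tau];L^\infty(\Omega))$, and a ball in your space $Y=C([0,\tau];L^\infty(\Omega))^2$ centered at the constant-in-time initial data may fail to contain the candidate solution. This is precisely why the paper's definition of a mild solution requires only $(u,v)\in L^\infty([0,T)\times\Omega)^2$. The repair is routine: run the identical contraction in $L^\infty((0,\tau)\times\Omega)^2$ with the essential supremum norm, since your estimates use only boundedness of $S(t)$ on $L^\infty$, the factor of $\tau$ from the time integrals, and \eqref{LIP}, none of which need continuity at $t=0$. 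With that adjustment your proof is sound and is, in substance, the strategy underlying the cited theorem.
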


\noindent Theorem \ref{Rothe} applies to all our variants of the go-or-grow models \eqref{general-model}--\eqref{constantrates-model} with appropriate adjustments in the assumptions on the parameters. If a small diffusion term ($\epsilon\Delta v$) is considered in the equation for the proliferative population, e.g., as done in \cite{burger2020reaction}, then standard semigroup methods can be applied to find solutions to the system (see \cite{Robinson,HillenFA}).

\section{Instabilities and pattern formation}\label{sec:instabilities}

It is time to meet the monster on a leash. Marciniak-Czochra and collaborators have worked intensively on systems of reaction--diffusion equations coupled to ODEs \cite{Anna1,Anna2}. In the study of Turing-like pattern formation, they made an interesting discovery, which is that all smooth patterns are unstable, and non-constant steady states need to have jumps. They also found that high frequencies are unstable in the linearization. This has far-reaching consequences, also for our study here, as we will outline in this section. 

\subsection{The general case} 
The ability of reaction--diffusion systems to form spatial patterns is a well researched area \cite{turing1952chemical,murray2003-II,britton1986reaction,Levin}. In a typical activator-inhibitor situation, for example, we consider a system of two reaction--diffusion equations, where one species is the activator and the other the inhibitor. Turing patterns arise for a short range activator and a long range inhibitor \cite{turing1952chemical,murray2003-II,britton1986reaction,Levin}. The extreme situation would be a fully local activator, i.e., an activator without a diffusion term in the equations, leading to an ODE coupled to the PDE of the inhibitor. Hence, it is natural to expect that ODE-PDE systems can generate diffusion driven instabilities \cite{Anna1,Anna2}, and we review some of the results here. 

For a coupled system of the form 
\begin{equation}\label{MCmodel}
\begin{sistem}
    u_t = d_u \Delta u + k(u,v) ,\\[0.2cm]
    v_t = h(u,v) ,
\end{sistem}
\end{equation}
on a smooth domain with standard boundary conditions,  a necessary and sufficient condition for diffusion driven instability at a steady state $(\bar u, \bar v)$ is the \ul{\it autocatalysis condition}
\begin{equation}\label{autocatalysis}
\frac{\partial h}{\partial v } (\bar u, \bar v) >0. 
\end{equation}
A proof is given in Theorem 4 in \cite{Anna1}, and we sketch some of the important steps here. If we abbreviate the partial derivatives at steady state as 
\[ k_u = k_u(\bar u, \bar v), \qquad k_v=k_v(\bar u, \bar v), \qquad h_u=h_u(\bar u, \bar v), \qquad h_v = h_v(\bar u, \bar v),\]
then the Jacobian of the kinetic part of (\ref{MCmodel}) evaluated at $(\bar u, \bar v)$, together with its trace and determinant is 
\[ A := \left(\begin{matrix}k_u & k_v\\ h_u & h_v\end{matrix} \right), \qquad\mathrm{tr} A = k_u + h_v,\qquad \det A = k_u h_v- k_v h_u. \]
To study diffusion-driven instability, we assume that $(\bar u, \bar v)$ is stable for the kinetic part, i.e. 
\begin{equation}\label{linstable}
\mathrm{tr} A<0 \quad \mbox{ and } \quad \det A > 0.
\end{equation}
We linearize the full system \eqref{MCmodel} at the homogeneous steady state $(\bar u, \bar v)$, and we use the Fourier transform with dual variable $\omega$. The dual variable $\omega$ plays the role of the mode. For example, if we restrict on a given interval $[0,L]$, then we have only discrete modes $\omega_n = \frac{n\pi}{L}$ for $n\in \mathbb{N}$. The use of an abstract variable $\omega$ here allows us to make the following calculations as if $\omega$ were a continuous variable, while later, we can restrict to the discrete values $\omega_n$. The Jacobian of \eqref{MCmodel} at $(\bar u, \bar v)$ becomes 
\[ J = \left(\begin{matrix} -d_u \omega^2 + k_u & k_v \\ h_u & h_v\end{matrix}\right),\]
which has trace and determinant 
\[ \mathrm{tr} J = -d_u \omega^2 +\mathrm{tr} A <0,\qquad \det J = -d_u \omega^2 h_v + \det A.\]
Hence we observe diffusion driven instability if $\det J<0$, i.e.,
\begin{equation}\label{omegacond}
    \omega^2 > \frac{\det A}{d_u h_v }. 
\end{equation}
Under this assumption, we find the corresponding eigenvalues of $J$ as 
\[ \lambda_{1,2} =\frac{\mathrm{tr} A}{2} \pm \frac{1}{2} \sqrt{\mathrm{tr} J^2 - 4 \det J},\]
where $\lambda_1>0$ and $\lambda_2<0$. It is interesting to look at the unstable eigenvalue $\lambda_1$ as a function of the mode $\omega$,
\begin{equation}\label{lambdaomega}
\lambda_1(\omega) =\frac{1}{2}(-d_u \omega^2 h_v + \mathrm{tr} A) +\frac{1}{2} \sqrt{ (-d_u \omega^2 h_v +\mathrm{tr} A) ^2 - 4(-d_u \omega^2 h_v + \det A )} .
\end{equation}
Using elementary calculations, we can show (skipping the details) that for 
\[ \omega=\bar \omega =\sqrt \frac{\det A}{d_u h_v } , \]
we have $\lambda_1(\bar \omega)=0$. Moreover, $\lambda_1(\omega)$ is strictly increasing for $\omega\geq \bar \omega$ and it goes into saturation 
\[ \lim_{\omega\to \infty} \lambda_1(\omega) = 2( 2-\mathrm{tr} A) >0 .\]
Hence, $\lambda_1(\omega)$ has the qualitative form shown in Figure~\ref{desmos}, which illustrates that higher modes are more unstable than smaller modes. This is the main reason why smooth non-constant steady states are unstable \cite{Anna1,Anna2}. Moreover, this observation also has direct consequences for numerical computation. Note that different choices of parameters than those chosen in Figure~\ref{desmos} give qualitatively the same form. 

\begin{figure}[!h]
    \centering
    \includegraphics[width=.5\textwidth]{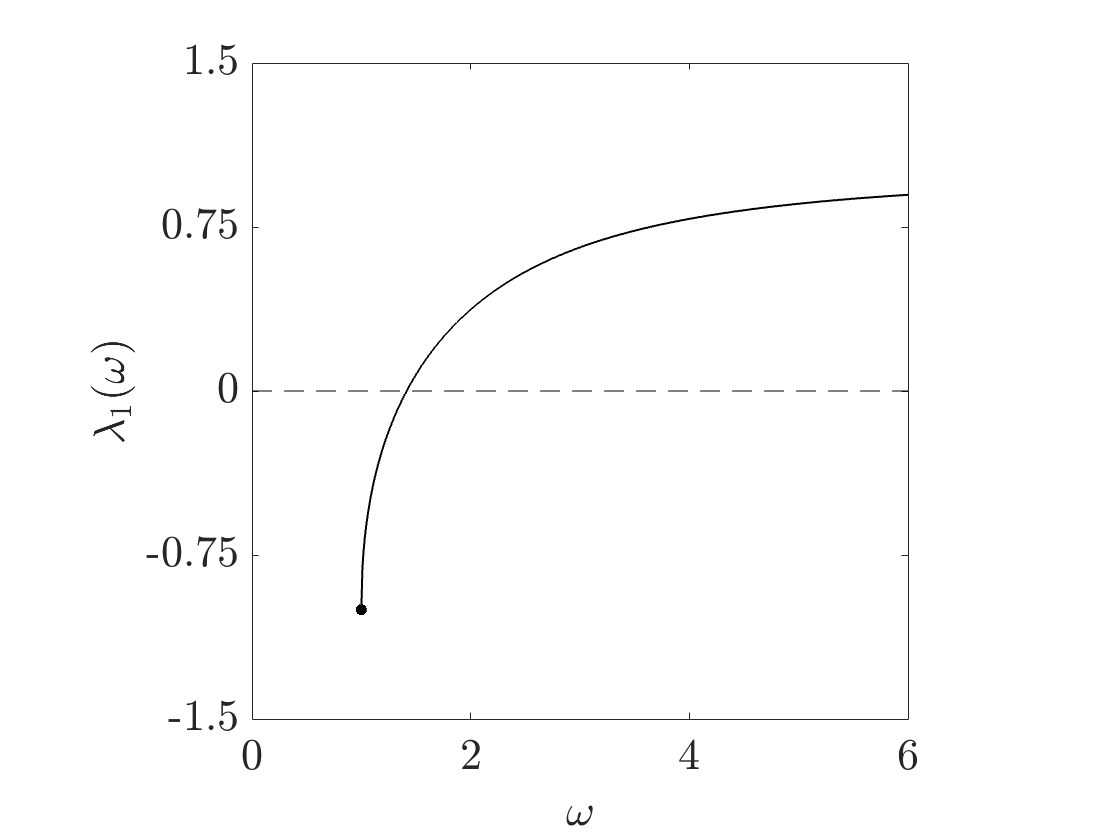}
    \caption{{\bf Qualitative trend of $\lambda_1(\omega)$.} Typical form of the unstable eigenvalue $\lambda_1(\omega)$ computed from \eqref{lambdaomega} as function of the mode $\omega$. Here, the parameters are set to $d_u=1$, $k_u=-2$, $k_v=-2$, $h_v=1$, and $h_u=2$.}
    \label{desmos}
\end{figure}
Now, we consider a discretization of our domain with a step size $\Delta x$. Then the spatial frequency is $\Delta x^{-1}$. According to what we just observed for $\lambda_1(\omega)$, each mode $\omega>\Delta x^{-1}$ is unstable. For example, using homogeneous Neumann boundary conditions on an interval $[0,L]$, then $\omega_k =k\pi/L$ and each mode $k$ with 
$\omega_k>\Delta x^{-1}$ is unstable. Hence, in the Neumann case, and also in other cases, small errors on the scale of the numerical discretization are amplified, making the numerics unstable. Reducing the step size does not help, as then higher modes are still unstable. This appears to be a principal challenge of the ODE-PDE coupled systems, and in such a situation, an accurate numerical solver is inconceivable. 
In Figure \ref{Turing1}, we show a simulation of the balanced model \eqref{balanced-model} on a rectangular domain of size $[0, 50]\times[0,50]$ for the transition and growth functions 
\begin{equation}\label{Func_Pham}
    \Gamma(n)=\dfrac{1}{2}\left(1+\tanh[\nu(n^*-n)]\right),\qquad g(n)=r\left(1-\dfrac{n}{K}\right) .
\end{equation} 
In this case, the above conditions \eqref{autocatalysis}, \eqref{linstable}, and \eqref{omegacond} for diffusion-driven instability are satisfied. We simulate the system \eqref{balanced-model} by using the method of lines and discretizing in space using a hybrid finite volume finite difference scheme, adapted from \cite{thiessen2021anisotropic}.
The solution converges to a smooth solution for $u$ and a spatial pattern for $v$ that is on the size of the discretization. 

\begin{figure}[!h]
\centering
    \includegraphics[width=\linewidth]{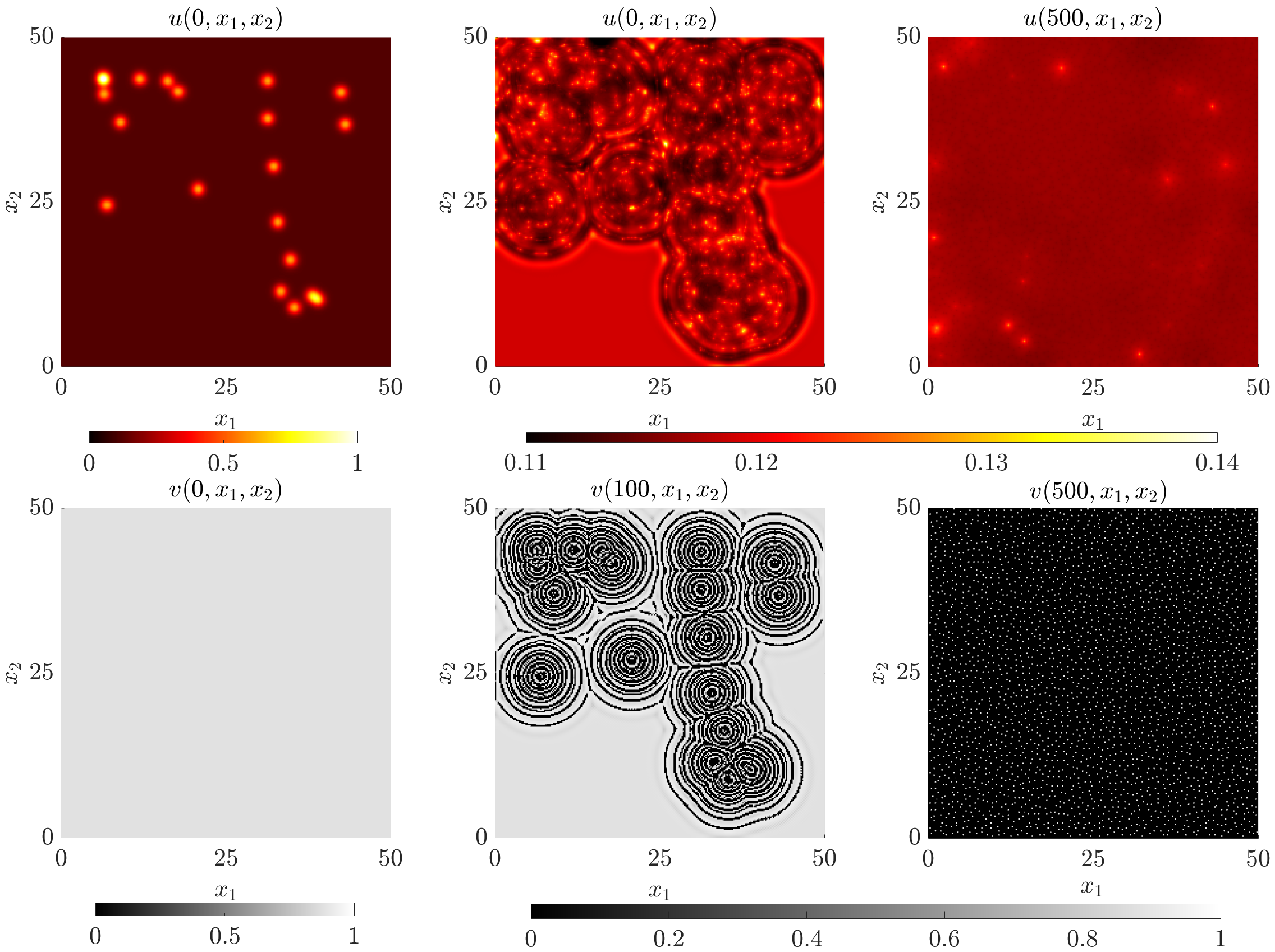}
    \caption{{ \bf  Example of 2D instabilities emerging from model \eqref{balanced-model}.} Two-dimensional simulations of the balanced model \eqref{balanced-model} on a rectangular domain of size $[0, 50]\times[0,50]$ for the transition and growth functions \eqref{Func_Pham}. The first row shows the evolution of the population $u(t,x)$, while the second row illustrates $v(t,x)$, with $x=(x_1,x_2)$. The three columns illustrate the population behavior at $t=0,100,500$, respectively. The regular mesh size is set to $\Delta x_1=\Delta x_2=0.2$, while the parameter values are chosen as $\nu=4$, $K=1$, $r=1$, $\gamma=1.6$, $n^*=0.75$, and $d=1$.}
    \label{Turing1}
\end{figure}
To further analyze the impact of the discretization, we illustrate in Figure \ref{Turing2} the final state of the $v$ variable when we reduce or enlarge $\Delta x$. We clearly see that the solution depends on the choice of $\Delta x$. We admit that normally, when a solution depends on the discretization, it implies that the numerical method is not good enough. But as we describe earlier, this is not due to an inappropriate numerical solver, rather, it is a systematic problem of this kind of models. 
\begin{figure}[!h]
    \centering
    \includegraphics[width=\linewidth]{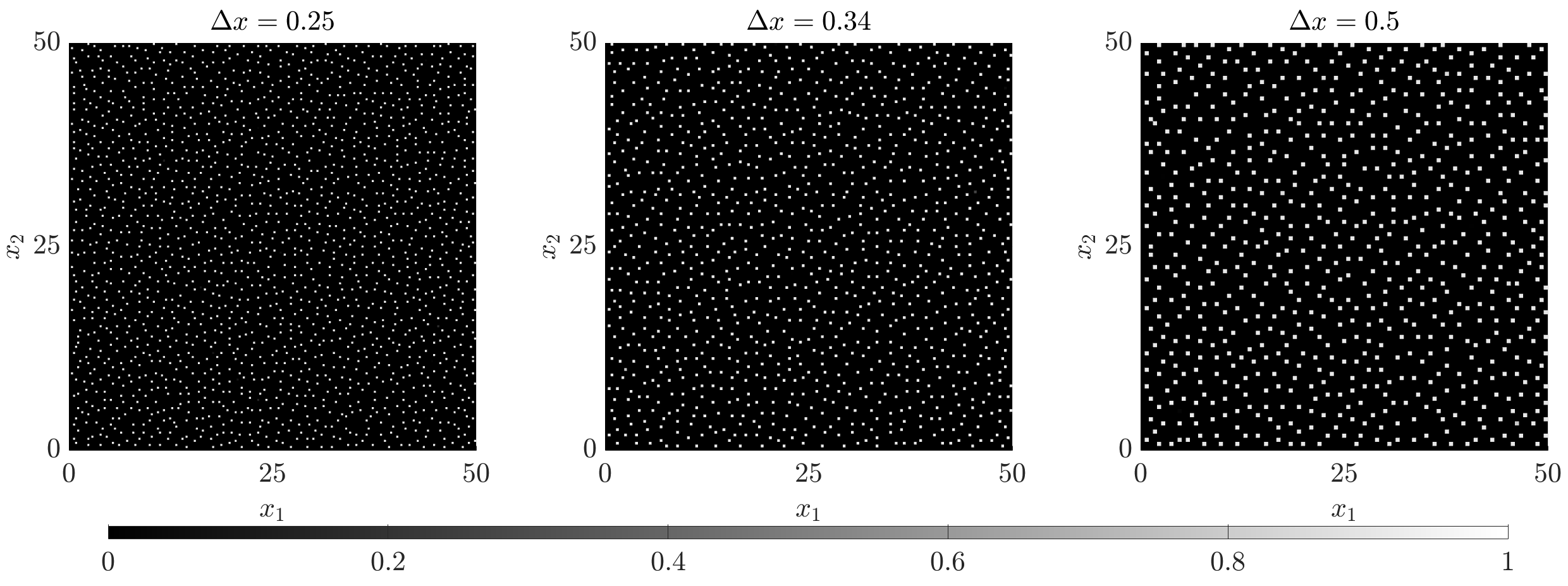}
    \caption{{ \bf Example of 2D instabilities emerging from model \eqref{balanced-model}.} Two-dimensional simulations at time $t=500$ of the population $v(t,x)$, with $x=(x_1,x_2)$, for the balanced model~\eqref{balanced-model} for three increasing value of the mesh size: $\Delta x=0.25, 0.34, 0.5$. Here $\Delta x=\Delta x_i$, for $i=1,2$. The model is simulated on a rectangular domain of size $[0, 50]\times[0,50]$ for the transition and growth functions \eqref{Func_Pham}. Parameters values are set to $\nu=4$, $K=1$, $r=1$, $\gamma=1.6$, $n^*=0.75$, and $d=1$. }
    \label{Turing2}
\end{figure}

\subsection{The go-or-grow case} 
Let us now employ the above results for go-or-grow models. Turing patterns for go-or-grow were previously considered by Pham et al.~\cite{pham2012} for the balanced model \eqref{balanced-model}. The function $g(n)$ is assumed to be non-increasing ($g'\leq 0$ with $g(1)=0$). In this case, we have two homogeneous steady states, $(0,0)$ and $(\Gamma(1), 1-\Gamma(1))$. We assume that $(0,0)$ is linearly unstable, such that a tumor can indeed grow, and consider diffusion driven instabilities at $(\Gamma(1), 1-\Gamma(1))$. Checking the autocatalysis condition \eqref{autocatalysis}, a necessary and sufficient condition for diffusion driven instability is 
\begin{equation}\label{Phamcond}
\gamma(\Gamma(1) +\Gamma'(1)) -g'(1)(1-\Gamma(1)) < 0.
\end{equation}
This condition was derived by Pham et al. \cite{pham2012} for the special choice of $g(n)=1-n$.  Note that since $g$ is assumed to be non-increasing, an increasing probability function $\Gamma(n)$ can never satisfy \eqref{Phamcond}. However, a decreasing function $\Gamma(n)$ with $|\Gamma'(n)|$ large enough can satisfy \eqref{Phamcond}.

Performing a linear stability analysis of the balanced model \eqref{balanced-model}, we find that under condition \eqref{Phamcond} all unstable modes $\omega$ satisfy \eqref{omegacond}, which in this context becomes 
\begin{equation}\label{unstablemodes}
\omega^2 > \frac{\gamma g'(1) (1-\Gamma(1))}{\gamma(\Gamma(1)+\Gamma'(1)) - g'(1) (1-\Gamma(1))}.
\end{equation}
Note that we use again the mode $\omega$ as a continuous variable here. If we consider the balanced model \eqref{balanced-model} on an interval $[0,L]$ with homogeneous Neumann or Dirichlet boundary conditions, then the modes become $\omega_n =n\pi/L$. 

For the probability of switching from stationary to moving \cite{pham2012}, there are two choices:
\begin{itemize}
\item \textit{Attractive case:} $\Gamma(n)$ is a decreasing smooth step function connecting $\Gamma(n)\approx 1$ for small $n$ to  $\Gamma(n)\approx 0$  for large $n$. Specifically, 
\[ \Gamma(n) = \frac{1}{2} ( 1+\tanh (\nu (n^*-n))), \]
where $n^*$ is the population size such that $\Gamma(n^*)=0.5$. In this case, cells are more motile at low densities and more stationary at high densities.
\item \textit{Repulsive case:} $\Gamma(n)$ is an increasing smooth step function connecting $\Gamma(n)\approx 0$ for small $n$ to  $\Gamma(n)\approx 1$  for large $n$. Specifically, 
\[ \Gamma(n) = \frac{1}{2} ( 1-\tanh (\nu (n^*-n))). \]
Cells are more motile at high densities and less motile at low densities. 
\end{itemize}

\noindent In the repulsive case, we have $\Gamma'>0$ and condition (\ref{Phamcond}) cannot be satisfied, but in the attractive case, we have $\Gamma'<0$ and for sufficiently large $\nu$, condition \eqref{Phamcond} can be satisfied. Pham et al.~\cite{pham2012} show numerical simulations for the attractive case that leads to spatial pattern formation and irregular traveling waves. 
However, Pham et al.~\cite{pham2012} do not discuss the high level of instability in their numerics. In Figure \ref{Pham_1D}, we show our own simulations of the traveling invasion waves of Pham et al.~\cite{pham2012}. As in Figures~\ref{Turing1}--\ref{Turing2}, we observe instabilities behind the wave on the scale of the discretization. Hence, we can not ensure that the numerical solver is correct in those cases. 

\begin{figure}[!h]
    \centering
    \includegraphics[width=0.32\linewidth]{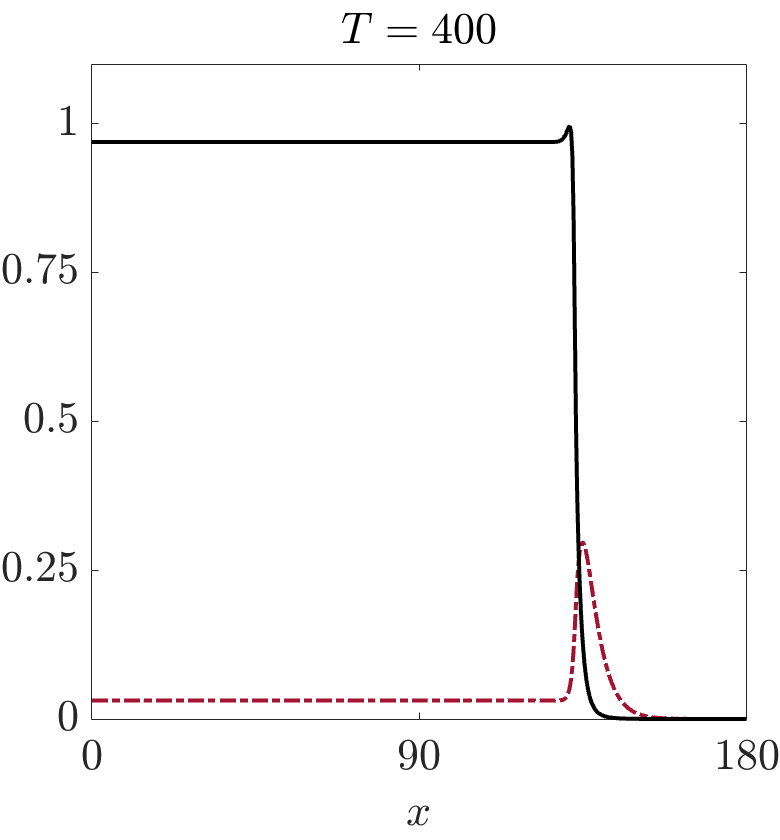}
    \includegraphics[width=0.32\linewidth]{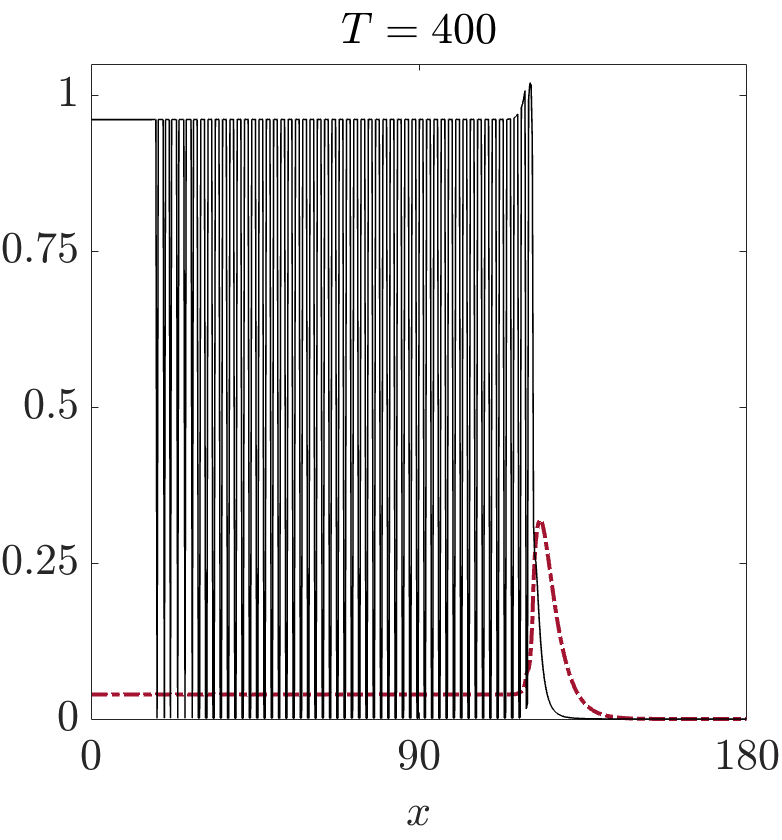}
    \includegraphics[width=0.32\linewidth]{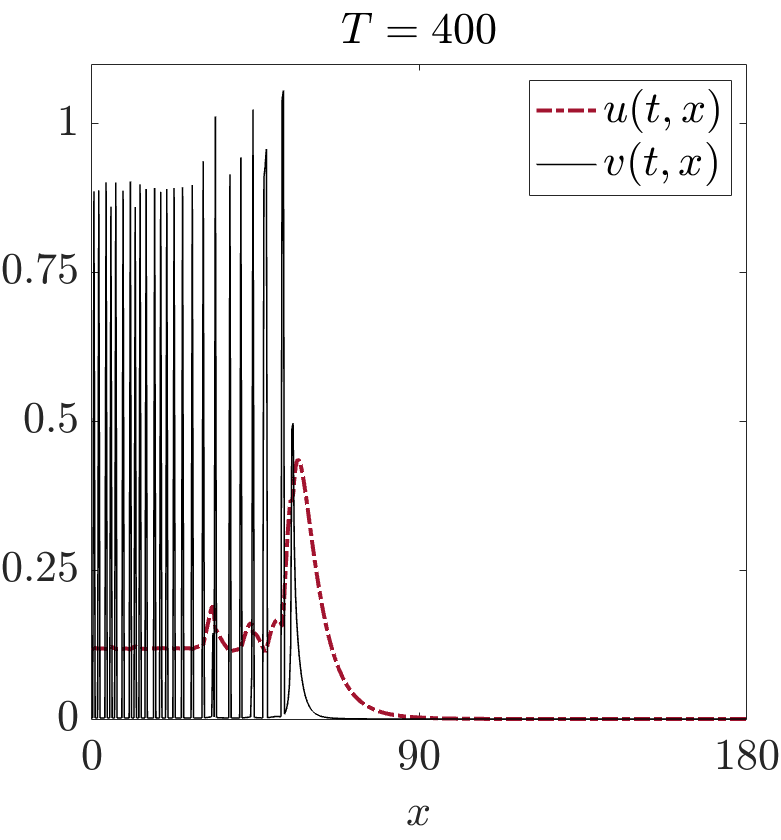}
    \caption{{\bf Examples of 1D instabilities from model \eqref{balanced-model}.} One-dimensional simulations of the balanced model \eqref{balanced-model} on the domain $[0, 180]$ for the transition and growth functions \eqref{Func_Pham}. The value of $n^*$ is increased from left to right: $n^*=0.57$, $n^*=0.6$, and $n^*=0.75$. The other parameters are set to  $\nu=4$, $K=1$, $r=1$, $\mu=1.6$, and $d=1$.}
    \label{Pham_1D}
\end{figure}

We have highlighted the rich array of intriguing mathematical properties related to the emergence of unusual behaviors and instabilities within the proposed framework. In the following, we focus on two well-established mathematical aspects for which more conclusive results can be obtained: the critical domain size problem and the traveling wave analysis.

\section{Critical domain size}\label{sec:criticaldomainsize}

A common focus in spatial ecology is to identify the minimal size of a habitat such that a species has enough resources to survive, known as the critical domain size \cite{deVries:2005:ACM,murray2003-II,britton1986reaction,Levin}. This has relevance in determining how large a habitat must be for species survival or how small it must be so that a parasite cannot establish itself. 

\subsection{The FKPP case}

We first review the analysis of the critical domain size for the FKPP equation~\eqref{FKPP} under homogeneous Dirichlet boundary conditions. We look for nontrivial steady states, which are nonnegative solutions of the boundary value problem,
\begin{equation}\label{FKPP_SS}
    \begin{sistem}
         -d\Delta m = g(m)m\qquad \rm{in}\,\Omega, \\[0.2cm]
         m = 0 \qquad\qquad\qquad \,\,\,\,\rm{on}\, \partial\Omega,
    \end{sistem} 
\end{equation}
where $\Omega \subset$ is a bounded domain with a smooth boundary $\partial \Omega$. We denote by $\lambda_1(\Omega)$ the value of the leading eigenvalue of $-\Delta$ on $\Omega$. This eigenvalue often plays an important role in growth phenomena on a given domain, and it will be the determining value here. In particular, there is a bifurcation in the behavior of solutions of \eqref{FKPP_SS} based on the size of the principal eigenvalue and the growth rate of the population at zero, i.e., $g(0)$.

\begin{theorem}[Theorems 4.60--4.61 in \cite{britton1986reaction}]\label{FKPP_CD}
If the growth function $g\in C^2(\mathbb{R})$ is of logistic type \eqref{logistic}, and satisfies the {\it subtangential condition} \eqref{st}, then we have two cases:
    \begin{enumerate}
        \item If $\lambda_1(\Omega) \leq \dfrac{g(0)}{d}$, then \eqref{FKPP_SS} has a nontrivial solution; 
        \item If $\lambda_1(\Omega) > \dfrac{g(0)}{d}$, then \eqref{FKPP_SS} only has the trivial solution.
    \end{enumerate}
\end{theorem}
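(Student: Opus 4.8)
The plan is to prove the two cases with different tools: the nonexistence statement (Case~2) by testing the equation against the principal eigenfunction, and the existence statement (Case~1) by the method of sub- and supersolutions. Throughout, let $\phi_1>0$ denote the principal Dirichlet eigenfunction of $-\Delta$ on $\Omega$, normalized by $\max_\Omega\phi_1=1$, so that $-\Delta\phi_1=\lambda_1(\Omega)\phi_1$ in $\Omega$ and $\phi_1=0$ on $\partial\Omega$. I would first record the a priori bound $0\le m\le K$ for any nonnegative solution of \eqref{FKPP_SS}: the constant $K$ is a supersolution since $-d\Delta K=0=g(K)K$, and a comparison/maximum-principle argument against $K$ (using that the boundary data $0\le K$) gives $m\le K$. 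With this bound in hand I may use the subtangential condition \eqref{st}, which for $f(m)=g(m)m$ reads $g(0)\,m\ge g(m)\,m$, i.e. $g(m)\le g(0)$ on $[0,K]$, freely.

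For Case~2, assume $\lambda_1(\Omega)>g(0)/d$ and suppose toward a contradiction that $m\not\equiv 0$ is a nonnegative solution. Multiplying $-d\Delta m=g(m)m$ by $\phi_1$, integrating over $\Omega$, and transferring the Laplacian onto $\phi_1$ by Green's identity (all boundary terms vanish because both $m$ and $\phi_1$ vanish on $\partial\Omega$) gives
\[
d\,\lambda_1(\Omega)\int_\Omega m\,\phi_1\,dx \;=\; \int_\Omega g(m)\,m\,\phi_1\,dx \;\le\; g(0)\int_\Omega m\,\phi_1\,dx ,
\]
where the inequality is precisely the subtangential condition. Since $m\ge 0$ is nontrivial and $\phi_1>0$ in $\Omega$, the weight $\int_\Omega m\,\phi_1\,dx$ is strictly positive, and dividing yields $d\,\lambda_1(\Omega)\le g(0)$, contradicting the hypothesis. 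Hence only the trivial solution remains.

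For the existence part I would run monotone iteration between an ordered pair of sub- and supersolutions. The supersolution is $\bar m=K$, as above. For the subsolution I take $\underline m=\varepsilon\phi_1$; the required inequality $-d\Delta(\varepsilon\phi_1)\le g(\varepsilon\phi_1)\,\varepsilon\phi_1$ reduces, after dividing by $\varepsilon\phi_1>0$, to the pointwise condition $d\,\lambda_1(\Omega)\le g(\varepsilon\phi_1)$. Because $g(\varepsilon\phi_1)\to g(0)$ uniformly as $\varepsilon\to 0^+$, this holds for all sufficiently small $\varepsilon$ whenever $d\,\lambda_1(\Omega)<g(0)$; for such $\varepsilon$ one also has $\varepsilon\phi_1\le K=\bar m$, so the pair is ordered. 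The standard sub/supersolution theorem then produces a solution $m$ with $\varepsilon\phi_1\le m\le K$, which is nontrivial because $m\ge\varepsilon\phi_1>0$ in the interior.

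The main obstacle is the threshold $\lambda_1(\Omega)=g(0)/d$, which is exactly the borderline the statement places in Case~1. Here the linear subsolution $\varepsilon\phi_1$ fails: the computation above would demand $g(\varepsilon\phi_1)\ge g(0)$, whereas \eqref{st} forces $g(\varepsilon\phi_1)\le g(0)$, so the two reconcile only if $g\equiv g(0)$ near the origin. Equivalently, testing against $\phi_1$ at the threshold gives
\[
\int_\Omega \bigl(g(0)-g(m)\bigr)\,m\,\phi_1\,dx = 0
\]
with a nonnegative integrand, pinning any solution to the set where $g(m)=g(0)$. I would resolve this by a refined subsolution obtained by correcting the eigenfunction to second order (a Crandall--Rabinowitz bifurcation analysis at the degenerate point), using the sign of the leading nonlinear coefficient supplied by \eqref{st} to fix the bifurcation direction; alternatively, one passes to the limit from the strict case through the monotone dependence of the positive steady state on $\lambda_1(\Omega)$. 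I expect this threshold/degenerate analysis to be where the genuine work lies, while the two generic cases follow cleanly from the eigenfunction identity and the monotone iteration.
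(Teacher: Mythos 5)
Your two generic cases are correct, and they follow the standard textbook route --- which is also all the paper itself offers, since it cites Britton's Theorems 4.60--4.61 rather than proving the statement: nonexistence by pairing the equation with the principal eigenfunction and invoking the subtangential bound $g(m)\le g(0)$, and existence by monotone iteration between $\varepsilon\phi_1$ and $K$. One small repair along the way: the a priori bound $m\le K$, which you need before \eqref{st} applies, does not follow from \eqref{logistic} alone, since those conditions control the sign of $f$ only on $[0,K]$ and infinitesimally above $K$ (via $f'(K)<0$); you must either additionally assume $g(m)\le 0$ for $m\ge K$, as is implicit in ``logistic type,'' or restrict the claim to solutions with $0\le m\le K$.

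The genuine gap is the borderline $\lambda_1(\Omega)=g(0)/d$, which the statement assigns to the existence case and which your proposal defers to a ``refined subsolution'' or a limiting argument. Neither can work, and your own identity shows why: at equality, any nonnegative solution of \eqref{FKPP_SS} satisfies
\[
\int_\Omega \bigl(g(0)-g(m)\bigr)\,m\,\phi_1\,dx = 0
\]
with nonnegative integrand, so $m$ must be supported where $g(m)=g(0)$. For the flagship example $g(m)=r(1-m/K)$ --- which satisfies both \eqref{logistic} and \eqref{st} --- this forces $(r/K)\int_\Omega m^2\phi_1\,dx=0$, i.e.\ $m\equiv 0$: there is \emph{no} nontrivial solution at the threshold, hence no subsolution of any order can produce one. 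The bifurcation picture confirms this: a Crandall--Rabinowitz analysis at the threshold yields the positive branch only on the side $\lambda_1(\Omega)<g(0)/d$, and passing to the limit $\lambda_1(\Omega)\uparrow g(0)/d$ along positive steady states fails because those states bifurcate from zero, so their amplitude tends to $0$ and the limit is trivial. Equality can fall on the existence side only in degenerate situations where $g\equiv g(0)$ on a neighborhood of $0$, in which case $\varepsilon\phi_1$ is an exact solution. So the theorem you can actually prove is: existence for $\lambda_1(\Omega)<g(0)/d$, nonexistence for $\lambda_1(\Omega)>g(0)/d$, with equality joining the nonexistence case whenever $g(m)<g(0)$ for $m>0$ --- a refinement your eigenfunction computation already delivers. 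The borderline in the statement as printed sits on the wrong side, and no amount of threshold analysis will move it; the right response is to flag the misplacement, not to chase a proof of it.
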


This general result leads to the question of how the leading eigenvalue $\lambda_1$ depends on the domain size, which is clarified in the following theorem.

\begin{theorem}[Theorem 2.21 in \cite{wang2024upper}] \label{FKPP_CD_eigenvalue}
    For the eigenvalue problem associated with $-\Delta$, $\lambda_1(\Omega)$ is strictly decreasing with respect to $\Omega$, in the sense that if  $\,\Omega \subset \tilde{\Omega}$, then $\lambda_1(\Omega) \geq \lambda_1(\Tilde{\Omega})$, and  $\Omega \subsetneq \Tilde{\Omega}$ implies $\lambda_1(\Omega) > \lambda_1(\Tilde{\Omega})$.   
\end{theorem}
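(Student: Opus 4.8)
The plan is to use the variational (Rayleigh quotient) characterization of the principal Dirichlet eigenvalue, which is the natural tool for domain comparison. Recall that
\[
\lambda_1(\Omega) = \inf_{\substack{\phi \in H_0^1(\Omega)\\ \phi \not\equiv 0}} \frac{\int_\Omega |\nabla \phi|^2\,dx}{\int_\Omega \phi^2\,dx},
\]
and that this infimum is attained by the principal eigenfunction $\phi_\Omega$, which may be chosen strictly positive in $\Omega$; the identical formula holds with $\tilde\Omega$ in place of $\Omega$. I would first establish the non-strict monotonicity and then upgrade it to the strict inequality.

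For the non-strict bound, the key observation is that extension by zero embeds $H_0^1(\Omega)$ into $H_0^1(\tilde\Omega)$. Given $\phi \in H_0^1(\Omega)$, define $E\phi$ to equal $\phi$ on $\Omega$ and $0$ on $\tilde\Omega \setminus \Omega$; then $E\phi \in H_0^1(\tilde\Omega)$ and, since neither integrand contributes outside $\Omega$, one has $\int_{\tilde\Omega}|\nabla (E\phi)|^2 = \int_\Omega |\nabla\phi|^2$ and $\int_{\tilde\Omega}(E\phi)^2 = \int_\Omega \phi^2$. Thus every admissible test function for $\Omega$ yields an admissible test function for $\tilde\Omega$ with the \emph{same} Rayleigh quotient, so the infimum taken over the larger class cannot increase, giving $\lambda_1(\tilde\Omega) \le \lambda_1(\Omega)$.

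The substantive part is the strict inequality under $\Omega \subsetneq \tilde\Omega$, which I would prove by contradiction. Suppose $\lambda_1(\Omega) = \lambda_1(\tilde\Omega)$. Then $E\phi_\Omega$ realizes the infimum in the Rayleigh quotient for $\tilde\Omega$, so by the standard first-variation argument it is a weak solution of $-\Delta (E\phi_\Omega) = \lambda_1(\tilde\Omega)\, E\phi_\Omega$ on all of $\tilde\Omega$; that is, $E\phi_\Omega$ is a principal Dirichlet eigenfunction of $\tilde\Omega$. But $E\phi_\Omega \equiv 0$ on $\tilde\Omega \setminus \bar\Omega$, a nonempty open subset of the connected domain $\tilde\Omega$, which contradicts the fact that a principal eigenfunction is strictly positive throughout the interior. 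This positivity follows from the strong maximum principle (equivalently, from unique continuation for second-order elliptic operators), and the contradiction forces $\lambda_1(\Omega) > \lambda_1(\tilde\Omega)$.

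I expect the last step to be the main obstacle, in two senses. First, one must know that $\tilde\Omega \setminus \bar\Omega$ genuinely has nonempty interior; this is exactly where strict (not merely set-theoretic) containment, together with the mild boundary regularity assumed throughout the paper so that $\partial\Omega$ has empty interior, is used to rule out pathological containments in which $\tilde\Omega \setminus \Omega \subset \partial\Omega$. Second, the passage from \emph{minimizer of the Rayleigh quotient} to \emph{strictly positive eigenfunction} relies on the regularity and maximum-principle machinery for the Dirichlet Laplacian, which I would cite rather than reprove.
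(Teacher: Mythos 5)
The paper does not prove this statement itself---it is quoted verbatim as Theorem~2.21 of \cite{wang2024upper}---and your argument is the standard proof of it: extension by zero in the Rayleigh quotient gives $\lambda_1(\tilde\Omega)\le\lambda_1(\Omega)$, and equality would make $E\phi_\Omega$ a principal eigenfunction of $\tilde\Omega$ vanishing on a nonempty open set, contradicting strict positivity via the strong maximum principle. Your proposal is correct, and you rightly flag the one delicate point: mere set-theoretic inclusion $\Omega\subsetneq\tilde\Omega$ does not by itself yield $\tilde\Omega\setminus\bar\Omega\neq\emptyset$ (nor strictness---removing a capacity-zero set leaves $\lambda_1$ unchanged), so the smooth, one-sided boundary assumption in force throughout the paper is genuinely needed exactly where you invoke it.
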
 

The dependency on the domain size may furthermore be made more explicit by looking at the one-dimensional setting. Setting  $\Omega = [0,L]$, then the eigenvalue is $\lambda_1=(\pi/L)^2$, and we can rewrite Theorem~\ref{FKPP_CD} in the following manner.

\begin{corollary}
    If $\Omega = [0,L]$, $g\in C^2(\mathbb{R})$, and $g$ is of logistic type \eqref{logistic} and satisfies the {\it subtangential condition} \eqref{st}, then we have two cases:
    \begin{enumerate}
        \item If $L \geq \pi \sqrt{\dfrac{d}{g(0)}}$, then \eqref{FKPP_SS} has a nontrivial solution;
        \item If $L < \pi \sqrt{\dfrac{d}{g(0)}}$, then \eqref{FKPP_SS} only has the trivial solution.
    \end{enumerate}
\end{corollary}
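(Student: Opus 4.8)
The plan is to apply Theorem~\ref{FKPP_CD} directly, since the hypotheses on $g$ here are word-for-word those of that theorem; the only real task is to translate the eigenvalue threshold $\lambda_1(\Omega)\lessgtr g(0)/d$ into an explicit condition on the interval length $L$.

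First I would record the principal Dirichlet eigenvalue of $-\Delta$ on $\Omega=[0,L]$. The one-dimensional eigenvalue problem $-\varphi''=\lambda\varphi$ on $[0,L]$ with $\varphi(0)=\varphi(L)=0$ has the classical eigenfunctions $\varphi_n(x)=\sin(n\pi x/L)$ with eigenvalues $\lambda_n=(n\pi/L)^2$ for $n\geq 1$, so the leading eigenvalue is $\lambda_1([0,L])=(\pi/L)^2$. This is exactly the value already quoted in the text preceding the statement, and it is the single ingredient that specializes Theorem~\ref{FKPP_CD} to the interval.

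Next I would substitute this value into the dichotomy of Theorem~\ref{FKPP_CD}. In Case~1 the condition $\lambda_1(\Omega)\leq g(0)/d$ becomes $(\pi/L)^2\leq g(0)/d$. The logistic hypotheses~\eqref{logistic} give $f'(0)=g(0)>0$ (using $f=g(m)m$, so $f'(0)=g(0)$), hence every quantity in this inequality is strictly positive and I may multiply through by $L^2 d/\pi^2$ and take square roots without reversing the inequality, obtaining $L\geq \pi\sqrt{d/g(0)}$, which by Theorem~\ref{FKPP_CD} yields a nontrivial solution of~\eqref{FKPP_SS}. Case~2 follows identically from the strict inequality $\lambda_1(\Omega)>g(0)/d$, giving $L<\pi\sqrt{d/g(0)}$, for which only the trivial solution exists. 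Since each step is a reversible algebraic manipulation of positive numbers, there is no substantive obstacle; the only point deserving a moment's attention is confirming $g(0)>0$ so that the square root is defined and the inequality orientation is preserved, and this is exactly what the logistic condition $f'(0)>0$ guarantees.
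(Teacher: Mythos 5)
Your proposal is correct and matches the paper's own route exactly: the paper likewise obtains the corollary by substituting the one-dimensional Dirichlet eigenvalue $\lambda_1([0,L])=(\pi/L)^2$ into the dichotomy of Theorem~\ref{FKPP_CD} and rearranging, with $g(0)=f'(0)>0$ from the logistic condition \eqref{logistic} justifying the algebra. Nothing is missing.
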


\noindent In this one-dimensional setting, we have a clear bifurcation in the behavior of solutions based on the domain size $L$, and denote the bifurcation value for $L$ the \ul{\emph{critical domain size}}:
\begin{equation} \label{eq:FKPP_critdomain}
    L_{\small crit} = \pi \sqrt{\frac{d}{g(0)}}.
\end{equation}

\subsection{The go-or-grow case}

Returning to the go-or-grow model \eqref{general-model}, similarly to the previous section, we look for nontrivial steady states, which will be nonnegative solutions of the following boundary value problem  
\begin{equation}\label{criticalDomainProblem}
\begin{sistem}
     -d\Delta u = - \mu u -\alpha(u,v)u + \beta(u,v)v,\quad \rm{in}\, \Omega , \\[0.2cm]
    g(u+v)v +\alpha(u,v)u - \beta(u,v)v =0, \quad \rm{in}\, \Omega , \\[0.2cm]
    (u,v) = (0,0), \qquad \qquad \qquad\qquad \qquad\,\,\,  \rm{on} \, \partial\Omega ,
\end{sistem}
\end{equation}
where $\Omega$ is a bounded domain with a smooth boundary.

The critical domain size problem for the case where $\alpha$ and $\beta$ are constant (the constant rates go-or-grow model~\eqref{constantrates-model}) was studied in \cite{HadelerLewis2002}, where the following result was proven.

\begin{theorem}[Theorem 4 in \cite{HadelerLewis2002}] \label{HL_CD}
Consider (\ref{criticalDomainProblem}) with constant rates $\alpha$ and $\beta$. If $g\in C^2(\mathbb{R})$, and $g$ is of logistic type \eqref{logistic} and satisfies the {\it subtangential condition} \eqref{st}, then we have three cases:
\begin{enumerate}
    \item Assume $\beta > g(0)  $ and  
    \begin{equation}\label{condHL1}
    \lambda_1(\Omega) < \frac{1}{d}\left(\frac{\alpha g(0)}{\beta - g(0)} - \mu\right), 
\end{equation}
then there exists a nontrivial steady state  solution;
\item Assume $\beta > g(0)  $ and   
    \begin{equation}\label{condHL}
    \lambda_1(\Omega) >\frac{1}{d}\left(\frac{\alpha g(0)}{\beta - g(0)} - \mu\right) ,
\end{equation}
then the steady state is zero;
    \item If $\beta < g(0)$, then the zero solution is always unstable, and nontrivial steady states exist. 
\end{enumerate}
\end{theorem}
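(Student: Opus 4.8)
The plan is to collapse the coupled system \eqref{criticalDomainProblem} to a single scalar semilinear elliptic boundary value problem, to which Theorem~\ref{FKPP_CD} applies directly. The second line of \eqref{criticalDomainProblem} is purely algebraic in $v$ (no derivatives of $v$ appear), so for fixed $u$ it constrains $v$. Writing $F(u,v) := g(u+v)v + \alpha u - \beta v$, we have $F(0,0)=0$ and $\partial_v F(0,0) = g(0) - \beta$. When $\beta \neq g(0)$, the implicit function theorem yields a smooth branch $v = \phi(u)$ with $\phi(0)=0$ and
\[ \phi'(0) = -\frac{\partial_u F(0,0)}{\partial_v F(0,0)} = \frac{\alpha}{\beta - g(0)}. \]
Substituting into the first line turns \eqref{criticalDomainProblem} into the scalar problem $-d\Delta u = h(u)$ in $\Omega$, $u=0$ on $\partial\Omega$, with $h(u) := -(\mu+\alpha)u + \beta\,\phi(u)$. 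The nondegeneracy requirement $\beta \neq g(0)$ of the implicit function theorem is exactly what separates the three cases of the theorem.

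For cases 1 and 2, I would assume $\beta > g(0)$, so that $\phi'(0) > 0$ and the branch $\phi$ is nonnegative for small $u>0$. Writing $h(u)=\hat g(u)\,u$, the key computation is the effective per-capita growth rate at zero,
\[ \hat g(0) = h'(0) = -(\mu+\alpha) + \beta\,\phi'(0) = \frac{\alpha g(0)}{\beta - g(0)} - \mu, \]
after simplification. Applying Theorem~\ref{FKPP_CD} to the reduced equation, a nontrivial solution exists when $\lambda_1(\Omega) < \hat g(0)/d$, while the reverse strict inequality forces the trivial solution only; these are precisely \eqref{condHL1} and \eqref{condHL}, reproducing cases 1 and 2.

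The genuine work, and the main obstacle, is to verify that the reduced nonlinearity $h$ really satisfies the hypotheses of Theorem~\ref{FKPP_CD}, i.e.\ that $\hat g$ is of logistic type \eqref{logistic} and subtangential \eqref{st}. One must show that $\phi$ extends to a globally defined, smooth, nonnegative function on $[0,\bar K]$ for an effective carrying capacity $\bar K$ with $h(\bar K)=0$, that $h>0$ on $(0,\bar K)$ with $h'(\bar K)<0$, and that the subtangential inequality $h'(0)u \geq h(u)$ holds. Since $h(u)-h'(0)u = \beta\bigl(\phi(u)-\phi'(0)u\bigr)$, this last point reduces to concavity of the implicit branch $\phi$, which I would extract from the logistic/subtangential structure of $g$ via sign control of $\phi''$. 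This step is routine in spirit but is where all the analytic care resides.

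Finally, for case 3 ($\beta < g(0)$) the branch through the origin has $\phi'(0)<0$, so the local reduction produces negative $v$ and the scalar argument above does not deliver a nonnegative solution; I would argue on the system instead. Linearizing \eqref{criticalDomainProblem} at $(0,0)$ gives the kinetic Jacobian
\[ A = \begin{pmatrix} -(\mu+\alpha) & \beta \\ \alpha & g(0)-\beta \end{pmatrix}, \qquad \det A = -(\mu+\alpha)(g(0)-\beta) - \alpha\beta < 0, \]
since $g(0)-\beta>0$. Adding diffusion only in $u$ shifts the top-left entry to $-d\omega^2-(\mu+\alpha)$, which keeps the determinant negative for every mode $\omega$; hence $(0,0)$ is unstable for every domain, explaining why the conclusion is unconditional in $\lambda_1(\Omega)$. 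For existence of a nontrivial steady state I would then run a sub-/supersolution scheme on the positive constraint branch $v=\psi(u)$ emanating from the interior zero $(0,v^*)$ with $g(v^*)=\beta$: the persistent instability furnishes a small strict subsolution, while the logistic saturation of $g$ furnishes an a priori upper bound and hence a supersolution, and monotone iteration inside the resulting invariant region yields a nontrivial nonnegative equilibrium. The subtlety to address here is that \eqref{criticalDomainProblem} is not cooperative in general (the entry $\partial_u F = \alpha + g'(u+v)v$ may be negative), so the comparison must be organized along the scalar branch $\psi$ rather than naively on the two-component system.
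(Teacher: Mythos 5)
Your overall route is the right one, and it is essentially the argument behind the cited result: the paper itself does not prove Theorem~\ref{HL_CD} (it quotes Theorem~4 of \cite{HadelerLewis2002}), and the Hadeler--Lewis proof proceeds exactly by solving the algebraic constraint for the sedentary compartment, reducing to a scalar Dirichlet problem, and reading off the threshold from the linearization at zero; your computation $h'(0)=\frac{\alpha g(0)}{\beta-g(0)}-\mu$ and the appeal to Theorem~\ref{FKPP_CD} reproduce \eqref{condHL1}--\eqref{condHL} correctly, and your determinant argument for the unconditional instability in case~3 is sound.

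Two technical points in your write-up deserve repair, both fixable by the same observation. First, you route the subtangential condition for $h$ through concavity of the implicit branch $\phi$; concavity is neither needed nor guaranteed by \eqref{logistic}--\eqref{st} (the hypotheses allow $g$ to wiggle below $g(0)$, and sign control of $\phi''$ can fail). Instead, rewrite the constraint as $\bigl(\beta-g(u+v)\bigr)v=\alpha u$; since \eqref{st} gives $g(m)\le g(0)$ and case~1/2 assumes $\beta>g(0)$, every nonnegative solution of the constraint satisfies $v\le \alpha u/(\beta-g(0))=\phi'(0)u$, which immediately yields $h(u)\le h'(0)u$. Second, the same inequality closes a genuine gap in your case~2: your IFT branch is only local and the constraint may admit several nonnegative roots in $v$ when $g$ is non-monotone, so nonexistence must be proved for \emph{all} nonnegative solutions of \eqref{criticalDomainProblem}, not just those on the branch through the origin. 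But from $v\le\phi'(0)u$ one gets $-d\Delta u\le h'(0)u$, and pairing with the positive principal Dirichlet eigenfunction forces $u\equiv 0$ (hence $v\equiv 0$) whenever $d\lambda_1(\Omega)>h'(0)$, with no branch selection at all. Finally, in case~3 note that your branch $\psi$ has $\psi(0)=v^*>0$, so the $v$-component of the resulting steady state cannot vanish continuously on $\partial\Omega$: since $v$ satisfies no elliptic equation, it is only an $L^\infty$ quantity and the admissible steady state has a jump in $v$ at the boundary---consistent with the paper's emphasis (Section~\ref{sec:instabilities}) that non-constant steady states of these ODE--PDE systems are discontinuous. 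With that understanding, your sub-/supersolution scheme on the $\psi$-branch works, and more directly: the reduced nonlinearity satisfies $\tilde h(0)=\beta v^*>0$, so $0$ is a strict subsolution on every domain, which is exactly why no critical domain size exists in case~3.
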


\noindent Similar to the FKPP case, we have a bifurcation that is based on the domain size in cases 1 and 2. The threshold in \eqref{condHL1}--\eqref{condHL} depends on the growth rate and, somewhat unsurprisingly, on the transition rates $\alpha$ and $\beta$ and the death rate $\mu$. What is surprising is that a third case appears for a small transition rate $\beta$. If the growth rate of the $v$ population at 0 is greater than the transition rate out of the $v$ compartment, i.e., $\beta < g(0) $, then there is no critical domain or bifurcation on the domain size. This means that for any size or shape of domain $\Omega$, there exist nontrivial populations of $u$ and $v$.

This property is another example of the unexpected properties of the go-or-grow systems. In fact, we can link this case to the autocatalysis condition \eqref{autocatalysis} discussed earlier. If we compute \eqref{autocatalysis} for the constant rates model \eqref{constantrates-model}, we find indeed
\begin{align*}
    \frac{\partial h}{\partial v}\bigg|_{(u,v) = (0,0)} = g(0) - \beta > 0. 
\end{align*}
In the autocatalysis case where $\beta - g(0) > 0$, we can directly relate the results from Theorem \ref{FKPP_CD} and Theorem \ref{HL_CD} for $\Omega = [0,L]$ and $\mu =0$. The threshold condition \eqref{condHL} can be written as a condition for the domain length $L$
\begin{equation}\label{domainsizeHL} 
    L> L_{\tiny\mbox{crit} } = \pi \sqrt{\frac{d(\beta-g(0)) }{g(0)\alpha }} = \pi\sqrt{\frac{d}{g(0)}}\underbrace{\sqrt{\frac{\beta - g(0)}{\alpha}}}_{K_{\alpha\beta}}.
\end{equation}
In this example, the critical domain of FKPP~\eqref{eq:FKPP_critdomain} and the critical domain of the constant rates go-or-grow model~\eqref{domainsizeHL} are related by the factor $K_{\alpha\beta}$. Setting $K_{\alpha\beta} = 1$ gives the equation when the two critical domains are equal, $\alpha + g(0) = \beta$. If we increase the transition rate to the moving population, $\beta$, or decrease the transition rate to the growing population, $\alpha$, from the $K_{\alpha\beta} = 1$ values, the critical domain of the go-or-grow model will be larger than the critical domain of FKPP. Similarly, if we reduce $\beta$ or increase $\alpha$,  the critical domain of the go-or-grow model will be smaller than the critical domain of FKPP. 

\section{Traveling wave analysis and invasion speeds}\label{sec:tw} 

To quantify the spread of populations in terms of the speed of invasion, traveling wave analysis has become a widely adopted approach, with a particular focus on determining the existence of traveling waves and their minimum speed. This is due to the fact that both the FKPP equation~\eqref{FKPP} and go-or-grow models~\eqref{general-model} inherently describe situations where waves of population density propagate at a constant speed. 

\subsection{The FKPP case} \label{sec:travwave_FKPP}
The analysis of traveling waves for the FKPP equation~\eqref{FKPP} is a well-researched and well-described textbook method \cite{murray2002-I,smoller,britton1986reaction,Levin,LewisLiWeinberger}. Consider \eqref{FKPP} on the real line and assume the growth function $f$ is of logistic type \eqref{logistic}, we introduce the {\it traveling wave coordinate} $z=x-ct$ with {\it wave speed} $c$, and define a traveling wave solution to be a self-similar solution of the form $m(t,x)=M(z)$. Substituting this ansatz into \eqref{FKPP} yields a second-order ODE for $M$ of the form
\begin{equation} \label{eq:FKPP_travwave_ODE}
- c M' = d M'' + g(M) M,
\end{equation}
with boundary conditions $M(-\infty)=1$ and $M(\infty)=0$. This ODE can be analyzed with phase plane methods, where a traveling wave of \eqref{FKPP} corresponds to a heteroclinic connection between two equilibria in the phase space. This heteroclinic connection depends on the choice of the speed $c$. As done in \cite{murray2002-I,smoller,britton1986reaction,Levin,LewisLiWeinberger}, it can be shown that, under the subtangential condition \eqref{st}, the minimum wave speed is 
\[  \bar c_{\text{\tiny\rm FKPP}} =2\sqrt{dg(0)}.\]

\subsection{The go-or-grow case} \label{sec:travwave_go-or-grow}
As in the FKPP case, traveling wave solutions for go-or-grow models can be seen as connections between two equilibria of the system. Ahead of the wave, we find the trivial equilibrium $(u,v)=(0,0)$, while at the back of the wave we typically see a coexistence equilibrium $(u^*, v^*)$. To ensure these states exist, we make the following assumption:
\begin{subequations}
\begin{align}
 \begin{split}
\bullet&\,\,g \,\text{is a continuous function that is differentiable at }\, 0\,\text{ and }\, 1,\text{ it has one root at } \,n = 1 ,\\
&\text{ and } g(n)n \leq g(0)n\,\text{ for }\,0 \leq n \leq 1 ;
\end{split}\label{ga}\\
\bullet&\,\,\mbox{there is only one non-zero solution }\,(u^*,v^*)\in \mathbb{R}^2_+\,\mbox{ of the system}\label{TW1}\\[0.2cm]
   &\hspace{2cm} \begin{sistem}
        0 = -\mu u^* - \alpha(u^*,v^*)u^* + \beta(u^*,v^*)v^*, \\[0.2cm]
        \mu u^* = g(u^*+v^*)v^*.
   \end{sistem}\nonumber
\end{align}
\end{subequations}
Assumptions \eqref{ga}-\eqref{TW1}  guarantees that ${\bf F}(0,0) = {\bf F}(u^*,v^*) = (0,0)$ in \eqref{vector}, allowing us to formalize the question of a traveling wave with the following definitions.

\begin{definition}
    Consider the general go-or-grow model \eqref{general-model} and introduce the \ul{\emph{traveling wave coordinate}} $z=x-ct$ for a fixed speed $c$. A solution of the form $u(t,x) = U(x-ct)$ and $v(t,x) = V(x-ct)$ is called a \ul{\emph{traveling wave solution with speed $c$}}. The traveling wave is called a \ul{\emph{wavefront}} if $\big(U(z),V(z)\big)$ has limits
    \begin{equation*}
        \lim_{z\to + \infty} \big(U(z),V(z)\big) =(0,0), \quad 
        \lim_{z\to - \infty} \big(U(z),V(z)\big) =(u^*,v^*).
    \end{equation*}
\end{definition}
\noindent If $c>0$, the invading traveling wave moves to the right, while if $c<0$, the invading traveling wave moves to the left. 

There are two distinct notions for the wave speed, $c$, which are relevant in our analysis.

\begin{definition}
\mbox{}
\begin{itemize} 
\item The \ul{\emph{nonlinear spread rate}}, $c^*$, denotes the observed minimal spreading speed of solutions of the full nonlinear system \eqref{general-model}.
\item The \ul{\emph{linear spread rate}}, $\bar c$, denotes the spreading wave speed of the solutions of its linearization at the leading edge (see  \cite{HadelerLewis2002}). 
\end{itemize}
Specifically, we say that the system is \ul{\emph{linearly determined}} if $c^*=\bar{c}$.
\end{definition}

The first results for the existence of wave fronts and the expression for the corresponding spreading speed were obtained for the constant rates go-or-grow model~\eqref{constantrates-model} by Hadeler, Lewis, and collaborators~\cite{lewis1996biological,HadelerLewis2002}. The connection between nonlinear spread speed, $c^*$, and linear spread speed, $\bar{c}$, was established in the case that the function $g(u+v)$ only depends on the stationary population $g(v)$.

For the balanced go-or-grow model~\eqref{balanced-model} with $g(n)=1-n$, Pham et al.~\cite{pham2012} showed numerically that invasion waves exist with different qualitative behavior. In particular, the wake of the wave might be a homogeneous steady state, a periodic wave train, or even an irregular pattern, as we have shown in Section \ref{sec:instabilities}.

More recently, Falc{\'o} et al.~\cite{falco2024traveling} examined the minimum invasion speed in the case of the total population go-or-grow model~\eqref{totalpop-model} with no death rate ($\mu = 0$), logistic growth for the proliferating population ($g(n) = 1-n$), and $\alpha(n)$ and $\beta(n)$ as nonincreasing and nondecreasing functions, respectively. By assuming $\alpha(n) > 0$ for all $n >0$, they ensure the existence of only two equilibria in the system,
\begin{align*}
    E_0 = (0,0),\qquad E_1 = \left(\frac{\beta(1)}{\alpha(1)+ \beta(1)}, \frac{\alpha(1)}{\alpha(1) +\beta(1)}\right),
\end{align*}
and find the traveling wave solutions as heteroclinic connections between these two equilibria in phase space. Moreover, considering the linearized version of \eqref{totalpop-model} about the equilibrium $E_0$ and transforming to wave coordinates, they obtain the expression for the linear spread rate
\begin{equation}\label{FalcoLinearspeed}
  \bar{c} =  \inf_{\rho > 0}\frac{\rho^2 + 1-\alpha(0) - \beta(0) + \sqrt{\big(\rho^2 + 1-\alpha(0) -\beta(0)\big)^2 - 4\big(1-\beta(0)\big)\rho^2 + 4\alpha(0)}}{2\rho}\,. 
\end{equation}
Although this formula still includes the infimum operation, it is observed that
\begin{align*}
    \left. \dfrac{dc}{d\rho}(\rho) \right|_{\rho = 1} = \frac{\beta(0)- \alpha(0)}{\alpha(0) + \beta(0)} ,
\end{align*}
and thus, some special cases emerge.
\begin{itemize}
    \item For $\alpha(0) = \beta(0)$, \eqref{FalcoLinearspeed}  obtains its minimum at $\rho = 1$, which implies that $\bar{c} = 1$. 
    \item For $\beta(0) = 0$, it is possible to explicitly solve \eqref{FalcoLinearspeed}, obtaining 
\begin{align*}
    \bar{c} = \frac{1}{\sqrt{1+\alpha(0)}}. 
\end{align*}
In this case, if $\beta(n)\ne 0$ for $n > 0$, it has been demonstrated that ensuring the positivity of traveling wave solutions requires a wave speed  $c>\bar{c}$. Therefore, for the case where $\beta(0) = 0$, the traveling wave solutions exhibit a wave speed greater than $\bar{c}$ meaning that the wave speed is nonlinearly determined.
\item Similarly, if $\alpha(0) =0 $,
\begin{equation*}
    \bar{c}=
    \begin{sistem}
    \sqrt{1 - \beta(0)} , \quad \text{if}\,\,\, \beta(0) < 1,\\[0.2cm]
    0, \quad\quad\quad\quad\quad\,\, \text{otherwise} .
    \end{sistem}
\end{equation*} 
\item When $\alpha(0) \neq \beta(0)$, $\bar{c} \leq c(1)=1$.
\end{itemize}

Similar results for the total population go-or-grow model \eqref{totalpop-model} were obtained in Stepien et al.~\cite{stepien2018} in the case $\mu\ne0$, assuming logistic growth with constant rate, $r$, and the following specific forms for the transition functions,
\begin{align} \label{stepienTF}
    \alpha(n)=\kappa\dfrac{K_v^k}{n^k+K_v^k}, \qquad \beta(n)=\epsilon\kappa\dfrac{n^k}{n^k+K_u^k} ,
\end{align}
with $\epsilon$, $\kappa$, $k$, $K_u$, and $K_v$ positive constants. A closed form for the dispersion relation was obtained, which provides the minimum wave speed
\begin{equation}\label{Tracyc}
\bar{c}=r\sqrt{\dfrac{d}{r+\kappa+\mu}}\,.
\end{equation}
Moreover, the existence of a heteroclinic orbit in phase space that corresponds to the approximate traveling wave (under certain conditions) was proven.

\subsection{A unifying framework and a general result} \label{sec:travwave_unifyingframework}
The previous results can be cast into the framework of cooperative reaction--diffusion systems. The lemmas and theorems presented here are new, and they are based on a general theory of cooperative reaction--diffusion systems developed in  \cite{liang2007asymptotic, fang2009monotone,weinberger2002analysis}. The proofs require a straightforward check of the general assumptions, which is presented in Appendix \ref{Proof-section}. 

Before we show the existence of traveling waves, we first characterize the wave speed $\bar c$  through the linearization at the leading edge. For this, we make the following assumptions: 
{\allowdisplaybreaks
\begin{subequations}
\begin{align}
\begin{split}\label{TW2}
\bullet\,\,& \alpha(u,v)\; \mbox{and } \beta(u,v)\;\mbox{are piecewise continuously differentiable in}\; (u,v)\; \mbox{for}\; {0\leq u\leq u^*}\\&\mbox{and}\; 0\leq v\leq v^*\mbox{, differentiable at}\;(0,0) \mbox{, and}\; \hspace{7.5cm} \\[0.1cm]
&\hspace{2.5cm}(g(0)-\beta(0,0))(\mu +\alpha(0,0)) +\alpha(0,0)\beta(0,0) >0; 
\end{split}\\
\begin{split}\label{TW3}
\bullet\,\,& \mbox{For any}\; (u,v)\in [0,u^*]\times [0,v^*], \hspace{7.5cm} \\[0.1cm] 
 &\hspace{2.5cm} \alpha(u,v) + \frac{\partial}{\partial u}\big(\alpha(u,v)\big) u + \frac{\partial}{\partial u}\big(g(u+v) - \beta(u,v)\big)v \geq 0, \\
 & \mbox{and} \\[0,1cm]
 &\hspace{2.5cm} \beta(u,v) +\frac{\partial}{\partial v}\big(\beta(u,v)\big) v - \frac{\partial}{\partial v}\big(\alpha(u,v)\big)u \geq 0.
 \end{split}
    \end{align}
\end{subequations}}
Assumption \eqref{TW2} gives regularity such that we can take the Jacobian at $(0,0)$ and ensures that the Jacobian at $(0,0)$ has a positive eigenvalue. Note that, if $\mu = 0$, the condition ${(g(0)-\beta(0,0))(\mu +\alpha(0,0)) +\alpha(0,0)\beta(0,0) >0}$ is satisfied. The assumption $\eqref{TW3}$  is sufficient to ensure that \eqref{general-model} is cooperative. 

We transform system (\ref{vector}) the into wave coordinate $z = x -ct$ and linearize  about $(0,0)$ to obtain
\begin{align}\label{wavesystem}
    -c\mathbf{w}' = D\mathbf{w}'' + \textbf{F}'(0,0)\mathbf{w}, 
\end{align}
where ${\bf w} =(u,v)^T$, $D=\mbox{diag}(d, 0)$,  $\textbf{F}$ is as defined in (\ref{F_vec}), and $\textbf{F}'(0,0)$ is the Jacobian of ${\bf F}$ at $(0,0)$.  
We write the solution of (\ref{wavesystem}) as  $\mathbf{w} = \bnu e^{-\rho z}$, where $\rho$ denotes the exponential decay rate at the leading edge and $\nu$ is an arbitrary vector. Substituting this ansatz into \eqref{wavesystem} yields the algebraic system 
\begin{align*}
    -c\rho \nu = D\rho^2 \nu +  \textbf{F}'(0,0)\nu.
\end{align*}
Defining the matrix $A(\rho) = D\rho^2 + \textbf{F}'(0,0)$, we arrive at the eigenvalue problem given by
\begin{align*}
    (A(\rho) - c\rho\mathbb{I})\nu =0 ,
\end{align*}
for the eigenvalue $\lambda_A=c\rho$. 
We find that the principal and positive eigenvalue of $A$ is 
\begin{multline} \label{eq:principal_eig_A}
    \lambda_A(\rho) = \frac{d\rho^2 + g(0) - \mu -\alpha(0,0) - \beta(0,0)}{2} \\
    + \frac{\sqrt{\big[d\rho^2 + g(0)- \mu -\alpha(0,0) - \beta(0,0)\big]^2 + 4\big[(g(0) -\beta(0,0))(\mu +\alpha(0,0)-d\rho^2)+ \alpha(0,0)\beta(0,0)\big]}}{2},
\end{multline}
with associated eigenvector  
\begin{align}
    \nu(\rho)  =\left(\frac{\lambda_A(\rho) - g(0) + \beta(0,0)}{\alpha(0,0)}, \; 1\right)^T,\label{eq:principal_eigvector_A}
\end{align}
which is a positive vector for all $\rho \geq 0$. To find the minimum wave speed of the linearized system, $\bar c$, we take the minimum ratio of the eigenvalue and decay rate as 
\begin{align} \label{barc}
    \bar c := \inf_{\rho>0}\;   \frac{\lambda_A(\rho)}{\rho} .
\end{align}
We verified that, under the assumptions $\mu=0$, $\alpha=\alpha(n)$, and $\beta=\beta(n)$, condition \eqref{barc} coincides with \eqref{FalcoLinearspeed} derived in \cite{falco2024traveling} for the total population model \eqref{totalpop-model}. Furthermore, when $\mu\ne0$ and the transition functions are given by \eqref{stepienTF}, condition \eqref{barc} matches \eqref{Tracyc}, which was previously derived in \cite{stepien2018}.

\begin{lemma}\label{phiProp}
Let $\alpha$, $\beta$, and $g$ satisfy \eqref{TW1}, \eqref{TW2}, and \eqref{TW3} and consider the initial condition $\phi = (\phi_1,\phi_2)\in \mathcal{C}_{u^*,v^*} = \{(\phi_1,\phi_2) \in C(\mathbb{R};\mathbb{R}^2): 0\leq \phi_1(z) \leq u^*,\; 0 \leq \phi_2(z) \leq v^*\; \forall z\in \mathbb{R}\}$. Let  ${\mathbf{w}(t,x;\phi):= \big(u(t,x;\phi_1),v(t,x;\phi_2)\big)}$ be the unique solution of \eqref{general-model} through $\phi$.
Then, there exists a real number $c^*$ with $0< c^* \leq \bar{c}$, such that the following statements are valid:
\begin{enumerate}
    \item If $\phi$ has compact support, then $\lim\limits_{\substack{t\rightarrow \infty\\|x|\geq ct}}  \mathbf{w}(t,x;\phi)  = 0,$ for all $c > c^*$;
    \item For any $c\in(0,c^*)$ and $\mathbf{r}>\mathbf{0}$, there is a positive number $R_{\mathbf{r}}$ such that for any $\phi \in \mathcal{C}_{u^*,v^*}$ with $\phi \leq \mathbf{r}$ on an interval of length $2R_{\mathbf{r}}$, there holds $\lim\limits_{\substack{t\rightarrow \infty\\|x|\leq ct}} \mathbf{w}(t,x;\phi)  =(u^*,v^*)$;
    \item If, in addition, we have a linear bound for any $\eta >0$, i.e.,
    \begin{equation}\label{TW5} 
    g\big(\eta(1+\nu_1(\rho))\big) \leq-\alpha\big(\eta \nu_1(\rho),\eta\big)\nu_1(\rho) + \beta\big(\eta\nu_1(\rho),\eta\big) + \lambda_A(\rho)\leq g(0),
    \end{equation}
    for all $\rho \in (0,\rho^*]$, then $c^*= \bar{c}$, where $\rho^*$ is the value of $\rho$ at which $\lambda_A(\rho)/\rho$ attains its infimum, and $\nu_1(\rho)$ is the first component of the eigenvector \eqref{eq:principal_eigvector_A}.
\end{enumerate}
\end{lemma}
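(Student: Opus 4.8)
The plan is to recognize Lemma~\ref{phiProp} as an application of the abstract spreading-speed theory for cooperative reaction--diffusion systems from \cite{liang2007asymptotic, fang2009monotone,weinberger2002analysis}, so that the main work is \emph{verifying the hypotheses of that theory} rather than building a comparison argument from scratch. First I would collect the structural facts already assembled in the excerpt: assumption \eqref{TW3} makes the nonlinearity $\textbf{F}$ cooperative (the off-diagonal Jacobian entries are nonnegative on $[0,u^*]\times[0,v^*]$), assumption \eqref{TW1} guarantees exactly two ordered equilibria $(0,0)\le(u^*,v^*)$ with $\textbf{F}$ vanishing at both, and Theorem~\ref{Rothe} supplies existence, uniqueness, and boundedness of the solution $\mathbf{w}(t,x;\phi)$ for $\phi\in\mathcal{C}_{u^*,v^*}$. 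Because the diffusion matrix $D=\mathrm{diag}(d,0)$ is only degenerate, I would note that the abstract theory (in the monotone-semiflow formulation of \cite{liang2007asymptotic, fang2009monotone}) does not require strict parabolicity, only that the time-$t$ solution map be an order-preserving, translation-invariant, continuous semiflow on $\mathcal{C}_{u^*,v^*}$ that is compact in the appropriate local topology; this is exactly what cooperativity plus Theorem~\ref{Rothe} deliver.

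Next I would invoke the abstract theorem to produce the single spreading speed $c^*$ together with its two defining dichotomies, which are precisely statements (1) and (2) of the lemma: solutions with compactly supported data vanish ahead of any front moving faster than $c^*$, and solutions dominating a small plateau converge to $(u^*,v^*)$ behind any front slower than $c^*$. The positivity $c^*>0$ I would extract from the instability of $(0,0)$, which is guaranteed by the inequality in \eqref{TW2} forcing the principal eigenvalue of $\textbf{F}'(0,0)$ to be positive; this same eigenvalue computation underlies the formula \eqref{eq:principal_eig_A} for $\lambda_A(\rho)$. The general comparison principle for cooperative systems then yields the upper bound $c^*\le\bar c$ by constructing the exponential supersolution $\bnu(\rho)e^{-\rho(x-\bar c t)}$ from the principal eigenpair $(\lambda_A(\rho),\nu(\rho))$ of $A(\rho)$ in \eqref{eq:principal_eig_A}--\eqref{eq:principal_eigvector_A}, using that $\nu(\rho)$ is a positive vector and that $\bar c=\inf_{\rho>0}\lambda_A(\rho)/\rho$ in \eqref{barc}; this is the standard linear-determinacy \emph{upper} estimate and it holds unconditionally.

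The substance of part (3) is the reverse inequality $c^*\ge\bar c$, i.e.\ \emph{linear determinacy}, and this is where I expect the real obstacle to lie. The criterion I would use is the one from \cite{weinberger2002analysis}: linear determinacy holds when the full nonlinear reaction term stays below the linearization at $(0,0)$ along the relevant eigendirection, so that the linear front is not slowed by the nonlinearity but also not overtaken by a pushed front. Concretely, condition \eqref{TW5} is exactly the sub-tangentiality statement that, evaluated on the one-parameter family $(\eta\nu_1(\rho),\eta)$ tracing the principal eigenvector \eqref{eq:principal_eigvector_A}, the nonlinear per-capita contribution is dominated above by the linear growth coefficient $\lambda_A(\rho)$ and below by $g(0)$ uniformly for $\rho\in(0,\rho^*]$. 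I would show these two inequalities provide, respectively, a supersolution that forces $c^*\le\bar c$ (already established) and an invasion lower bound $c^*\ge\bar c$, matching the two speeds. The delicate point is checking that \eqref{TW5} is precisely the hypothesis required by the chosen abstract linear-determinacy theorem after accounting for the degenerate second component of $D$; I would therefore phrase the verification so that the sign conventions of the cooperative comparison and the parametrization by the eigenvector agree, deferring the explicit inequality manipulations to Appendix~\ref{Proof-section} as the excerpt indicates.
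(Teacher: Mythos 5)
Your proposal is correct and takes essentially the same route as the paper: the paper's Appendix~\ref{Proof-section} likewise reduces Lemma~\ref{phiProp} to the abstract spreading-speed theory of cooperative systems in \cite{liang2007asymptotic,fang2009monotone,weinberger2002analysis}, verifying that \eqref{TW1}, \eqref{TW2}--\eqref{TW3}, and \eqref{TW5} imply the hypotheses \eqref{K1}--\eqref{K5} (cooperativity from \eqref{TW3}, positivity of the principal eigenvalue $\lambda_A(0)$ from \eqref{TW2}, and the subtangential condition \eqref{K5} from \eqref{TW5} along the eigendirection $(\eta\nu_1(\rho),\eta)$), then applies the cited results. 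Your additional remarks on the exponential supersolution and on the theory accommodating the degenerate diffusion matrix $D=\mathrm{diag}(d,0)$ are consistent with, though not explicitly spelled out in, the paper's verification.
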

\begin{proof}
    The proof is based on the general theory of \cite{liang2007asymptotic, fang2009monotone,weinberger2002analysis} and is presented in Appendix~\ref{Proof-section}.
\end{proof}

Lemma~\ref{phiProp} provides sufficient conditions for the existence of a constant spreading speed of a compact initial condition in items 1 and 2. Item 3 gives conditions such that the model is linearly determined and that the nonlinear spread rate, $c^*$, coincides with the linearized spread rate, $\bar c$. However, to prove the existence of traveling waves, we also require condition \eqref{TW5} as well as the following assumption:
\begin{equation}
    \begin{split}\label{TW4}
        \bullet \,\,& \mbox{There exist}\; a>0,\; \sigma > 1,\; \mbox{and}\; {\bf r} > {\bf 0}\;\mbox{such that}\\[0.1cm]
        &\hspace{2cm}  \big(\alpha(0,0) - \alpha(u,v)\big)u + \big(\beta(u,v)-\beta(0,0)\big)v \geq -a|u + v|^{\sigma}, \\[0.1cm]
        &\hspace{2cm}\big(g(u+v) -g(0)\big)v -\big(\alpha(0,0) - \alpha(u,v)\big) u -\big(\beta(u,v)-\beta(0,0)\big)v \geq - a|u+v|^{\sigma}, \\[0.1cm]
        &\mbox{for all}\; {\bf 0}\leq (u,v) \leq \mathbf{r}.
    \end{split}
\end{equation}
Although condition \eqref{TW4} looks quite complex, it is already satisfied when ${\bf F}(u,v)$ is twice differentiable in both components \cite{fang2009monotone}. The assumption \eqref{TW5} comes from a variant of the subtangential condition \eqref{st}, requiring that ${\bf F}(u,v) \leq {\bf F}'(0,0)$ along the direction of the eigenvector associated with the principal eigenvalue $\lambda_A(\rho)$. With assumptions \eqref{TW1}, \eqref{TW2}--\eqref{TW3}, and \eqref{TW4}--\eqref{TW5}, we can show that \eqref{general-model} has traveling wave solutions.

\begin{theorem} \label{general-model-cooperative-TW}
    If the general grow-or-grow model~\eqref{general-model} satisfies the conditions \eqref{TW1}, \eqref{TW2}--\eqref{TW3}, and \eqref{TW5}--\eqref{TW4}, then for each $c < \bar{c}$, it has a nonincreasing wavefront $\big(U(x-ct),V(x-ct)\big)$ connecting $(u^*,v^*)$ to $(0,0)$. However, for any $c \in(0,\bar{c})$, there is no wavefront \mbox{$\big(U(x-ct),V(x-ct)\big)$} connecting $(u^*,v^*)$ to $(0,0)$.
\end{theorem}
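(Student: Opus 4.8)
The plan is to restrict \eqref{general-model} to the order interval $\mathcal{C}_{u^*,v^*}$, recast it as a monotone (cooperative) semiflow, and then invoke the abstract spreading-speed and traveling-wave theory of \cite{liang2007asymptotic, fang2009monotone, weinberger2002analysis}, whose hypotheses Lemma~\ref{phiProp} has already largely verified. First I would confirm cooperativity: assumption \eqref{TW3} is precisely the statement that the off-diagonal entries of the Jacobian $\mathbf{F}'(u,v)$ are nonnegative on $[0,u^*]\times[0,v^*]$, i.e.\ $\partial F_1/\partial v \ge 0$ and $\partial F_2/\partial u \ge 0$, so the system is quasimonotone and its time-$t$ solution map is order-preserving on $\mathcal{C}_{u^*,v^*}$. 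Combined with \eqref{TW1}, which singles out $(0,0)$ and $(u^*,v^*)$ as the only relevant ordered pair of equilibria, this places us squarely within the monotone-semiflow framework.

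For the existence half (speeds $c \ge \bar{c}$), I would appeal to the theorem of \cite{fang2009monotone}, which asserts that a monotone semiflow admitting a spreading speed $c^*$ possesses a nonincreasing traveling wave for every $c \ge c^*$. The relevant $c^*$ is exactly the one produced in Lemma~\ref{phiProp}, and item~3 of that lemma---via the subtangential-type bound \eqref{TW5}---supplies the linear determinacy $c^* = \bar{c}$, so that waves exist for all $c \ge \bar{c}$. Monotonicity of the profile in $z$ is automatic from the monotone construction. The sublinearity condition \eqref{TW4} enters precisely here, guaranteeing that the nonlinearity does not generate accelerating fronts and hence that the abstract existence theorem applies with the linearly determined speed.

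For the non-existence half (speeds $0 < c < \bar{c} = c^*$), I would argue by contradiction using item~2 of Lemma~\ref{phiProp}. Suppose a wavefront $(U,V)$ of speed $c \in (0,c^*)$ existed. Taking its profile as initial datum, the solution is the rigidly translating wave moving at speed $c$; since the profile tends to $(u^*,v^*)$ as $z \to -\infty$, it majorizes a positive constant on an arbitrarily long interval, so Lemma~\ref{phiProp}(2) forces $\mathbf{w}(t,x;\phi) \to (u^*,v^*)$ on $|x| \le c't$ for any $c < c' < c^*$. Evaluating the translating wave at $x = c't$ gives argument $z = (c'-c)t \to +\infty$, where the profile tends to $(0,0)$---a contradiction. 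Hence no wavefront of speed below $\bar{c}$ can exist.

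The main obstacle I anticipate is verifying the compactness and continuity hypotheses of the abstract framework, which is genuinely delicate here because the diffusion matrix $D = \mathrm{diag}(d,0)$ is degenerate: the $v$-equation is a pure ODE, so the parabolic smoothing that normally supplies compactness of the time-$t$ map is unavailable for the $v$-component. I would handle this by exploiting the triangular structure---$v$ solves an ODE driven by the smoothing $u$-field---to recover the weaker asymptotic-smoothness property actually required by \cite{fang2009monotone}, rather than full compactness. This is the precise point at which the ``monster on a leash'' character of the model renders the otherwise routine verification subtle, and it is the step I expect to consume most of the effort in Appendix~\ref{Proof-section}.
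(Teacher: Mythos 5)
Your proposal is correct and takes essentially the same route as the paper: Appendix~\ref{Proof-section} likewise verifies that \eqref{TW1}, \eqref{TW2}--\eqref{TW3}, and \eqref{TW5}--\eqref{TW4} imply the abstract hypotheses \eqref{assum_K} (cooperativity from \eqref{TW3}, the irreducible Jacobian with positive eigenvalue from \eqref{TW2}) and then cites Lemma~\ref{lemma_Liang} and Theorem~\ref{Teo_Fang2009} of \cite{fang2009monotone}, whose theory is built precisely for partially degenerate diffusion $D=\mathrm{diag}(d,0)$, so the compactness issue you flag is absorbed by the cited framework rather than argued separately, and your spreading-speed contradiction for $c\in(0,\bar c)$ is a correct unfolding of the nonexistence clause the paper simply imports. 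Two minor remarks: you swap the roles of the last two hypotheses---\eqref{TW4} (i.e.\ \eqref{K4}) is the lower bound forcing spread at least at the linear speed, while \eqref{TW5} (i.e.\ \eqref{K5}) is what rules out faster-than-linear fronts---and you correctly read the existence clause as $c\ge\bar c$, the ``$c<\bar c$'' in the theorem statement being a typo (compare Theorem~\ref{Teo_Fang2009}).
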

\begin{proof}
    The proof requires a check that the conditions for the general theory of cooperative systems are satisfied, and it is presented in Appendix~\ref{Proof-section}.
\end{proof} 

Theorem~\ref{general-model-cooperative-TW} allows us to establish the existence of traveling wave solutions with a formula for the wave speed \eqref{barc} for a large class of go-or-grow models. Although many go-or-grow models fall into the above framework, there exists plenty of examples that fail one of the assumptions, such as those studied in \cite{stepien2018,tursynkozha2023traveling,falco2024traveling} that numerically display traveling wave solutions, but they fail the cooperativity condition \eqref{TW3}. To prove traveling wave solutions in those cases, one can attempt to construct upper and lower solutions and use a fixed point proof, as done in \cite{zhang2017minimal,zhang2016minimal,zhang2014existence}, or use a geometric shooting method, as in \cite{li2008traveling,hsu2012existence,huang2003existence,lin2011traveling,huang2012traveling,huang2016geometric}. Another common method is to use approximations, such as the fast transition rate scaling introduced in Section~\ref{sec:fast_trans_rate}, where the general go-or-grow model~\eqref{general-model} can be transformed into a FKPP equation \eqref{FKPP} for the total population depending on the form of the transition functions $\alpha$ and $\beta$. Since the FKPP equation \eqref{FKPP} has traveling wave solutions and a simple formula for the spreading speed, the fast-transition scaling can provide an approximation on the spreading speed of \eqref{general-model}.

\subsection{Connection between the spread rate of the FKPP equation and go-or-grow models}
To link the foundational analysis of the FKPP equation~\eqref{FKPP} as summarized in Section~\ref{sec:travwave_FKPP} with analysis of go-or-grow types of models~\eqref{vector} (Sections~\ref{sec:travwave_go-or-grow}--\ref{sec:travwave_unifyingframework}), we compare the spread rate between the cases.

\begin{theorem}
        Assume $\mu=0$, ${\bf F}({\bf w})$ is given by \eqref{F_vec}, and $\alpha$, $\beta$, and $g$  satisfy assumptions \eqref{logistic},  \eqref{st}, \eqref{TW1}, \eqref{TW2}--\eqref{TW3}, and \eqref{TW5}--\eqref{TW4}. Then the minimum wave speed, $\bar c$ of the go-or-grow model \eqref{general-model} given by \eqref{barc} satisfies the inequality
        \begin{equation}\label{minwave}
            \bar{c} \leq  \frac{1}{2} \bar{c}_{\text{\tiny\rm FKPP}} = \sqrt{d g(0)}.
        \end{equation}
\end{theorem}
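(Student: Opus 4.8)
The plan is to exploit the fact that $\bar c$ in \eqref{barc} is an infimum over $\rho>0$, so it is automatically bounded above by $\lambda_A(\rho_0)/\rho_0$ for any single decay rate $\rho_0$ I care to choose. The whole argument then collapses to picking one clever $\rho_0$ and evaluating the principal eigenvalue \eqref{eq:principal_eig_A} there. To lighten notation I would set $a=\alpha(0,0)$, $b=\beta(0,0)$, and $G=g(0)$, recalling that $a,b\ge 0$ by nonnegativity of the transition rates and $G>0$ from the logistic condition \eqref{logistic}; with $\mu=0$ the matrix governing the leading edge is $A(\rho)=\begin{pmatrix} d\rho^2-a & b \\ a & G-b\end{pmatrix}$.

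The natural candidate is the FKPP-optimal decay rate $\rho_0=\sqrt{G/d}$, the value for which $d\rho_0^2=G$ and at which the scalar FKPP dispersion relation $(d\rho^2+G)/\rho$ attains its minimum $2\sqrt{dG}$. First I would substitute $d\rho_0^2=G$ into $A(\rho_0)$, giving $A(\rho_0)=\begin{pmatrix} G-a & b \\ a & G-b\end{pmatrix}$, and compute its characteristic polynomial: the trace is $2G-a-b$ and the determinant is $(G-a)(G-b)-ab=G(G-a-b)$, so the two eigenvalues are exactly $G$ and $G-a-b$. Equivalently, one can plug $d\rho_0^2=G$ directly into \eqref{eq:principal_eig_A} and watch the discriminant collapse to the perfect square $(a+b)^2$, again yielding $G$ on the upper branch. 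Either way $\lambda_A(\rho_0)=G$, and the associated eigenvector $(b/a,\,1)^T$ is consistent with \eqref{eq:principal_eigvector_A}.

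With this in hand the conclusion is immediate: $\lambda_A(\rho_0)/\rho_0 = G/\sqrt{G/d}=\sqrt{dG}$, and therefore $\bar c=\inf_{\rho>0}\lambda_A(\rho)/\rho \le \sqrt{dG}=\tfrac12\,\bar c_{\text{\tiny\rm FKPP}}$, which is exactly \eqref{minwave}. The only point requiring genuine care — and the closest thing to an obstacle — is verifying that $G$ is the principal (larger) eigenvalue returned by the $+$ branch of \eqref{eq:principal_eig_A}, rather than the subdominant root. This is precisely where nonnegativity of the rates enters: since $a+b\ge 0$ we have $G-a-b\le G$, so $\lambda_A(\rho_0)=G$ as claimed, with the two roots coinciding in the degenerate case $a=b=0$. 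It is worth noting that, beyond $\mu=0$ and $a,b\ge 0$, the inequality itself uses only the explicit formulas \eqref{eq:principal_eig_A}--\eqref{barc}; the remaining hypotheses \eqref{TW1}, \eqref{TW2}--\eqref{TW3}, and \eqref{TW5}--\eqref{TW4} serve to guarantee, via Theorem~\ref{general-model-cooperative-TW}, that $\bar c$ is the meaningful (linearly determined) spreading speed for which this bound is being asserted.
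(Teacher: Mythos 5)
Your proposal is correct and takes essentially the same route as the paper: both bound the infimum in \eqref{barc} from above by evaluating $\lambda_A(\rho)/\rho$ at the single point $\rho_0=\sqrt{g(0)/d}$, where the ratio equals $\sqrt{dg(0)}\le 2\sqrt{dg(0)}=\bar c_{\text{\tiny\rm FKPP}}$. Your explicit factorization of the characteristic polynomial at $\rho_0$ into the eigenvalues $g(0)$ and $g(0)-\alpha(0,0)-\beta(0,0)$, together with the check that nonnegativity of the rates makes $g(0)$ the principal root, is just a more detailed verification of the substitution the paper performs directly in \eqref{eq:spreadspeed_principleeig}.
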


\begin{proof}
Since ${\bf F}(u,v)$ satisfies assumptions \eqref{TW1}, \eqref{TW2}--\eqref{TW3}, and \eqref{TW5}--\eqref{TW4}, Theorem \ref{general-model-cooperative-TW} can be applied, and
we have that the minimum wave speed of the corresponding go-or-grow model is given by 
\begin{equation}
    \bar{c} = \inf_{\rho \geq 0}\frac{\lambda_+(\rho)}{\rho} ,
\end{equation}
where $\lambda_+(\rho)$ is the principal eigenvalue of the matrix 
\begin{equation}
    A = \begin{pmatrix}
        d\rho^2 & 0 \\
        0 & 0 
    \end{pmatrix}  + 
    \begin{pmatrix}
        -\alpha(0,0)  &+ \beta(0,0) \\
        \alpha(0,0) &- \beta(0,0) + g(0) 
    \end{pmatrix} .
\end{equation}
This can be written explicitly as  
\begin{multline} \label{eq:spreadspeed_principleeig}
    \lambda_+(\rho) = \frac{d\rho^2 + g(0) -\alpha(0,0)-\beta(0,0)}{2} \\ + \frac{\sqrt{\big(d\rho^2 + g(0) -\alpha(0,0)-\beta(0,0)\big)^2 -4\big(d\rho^2(g(0) - \beta(0,0))-\alpha(0,0) g(0)\big)}}{2} .
\end{multline}
We notice that substituting the choice of $\rho = \sqrt{g(0)/d}$ into \eqref{eq:spreadspeed_principleeig}, we get 
\begin{equation*}
    \frac{\lambda_+\left(\sqrt{\frac{g(0)}{d}}\right)}{\sqrt{\frac{g(0)}{d}}} = \sqrt{dg(0)}.
\end{equation*}
Thus, we have the set of inequalities 
\begin{align}
    \bar{c} = \inf_{\rho \geq 0} \frac{\lambda_+(\rho)}{\rho} \leq \sqrt{dg(0)} \leq 2\sqrt{dg(0)} = \bar{c}_{\text{\tiny\rm FKPP}}.
\end{align}
\end{proof}
Some special cases of this result were already proven in the literature for the constant rates model~\eqref{constantrates-model} in \cite{Fedotov2008,lewis1996biological} and for the total population model \eqref{totalpop-model} in \cite{falco2024traveling}. However, the results provided here are a generalization for the case of the general go-or-grow model \eqref{general-model}. As an example, in Figure \ref{fig:twicespeed} we illustrate the traveling wave profile for the constant rates model~\eqref{constantrates-model} (black and red lines) and the FKPP model \eqref{FKPP} (blue lines). These numerical simulations show how the FKPP model is at least two times faster than the constant rates go-or-grow model.
\begin{figure}[h!]
    \centering
    \includegraphics[width=0.6\linewidth]{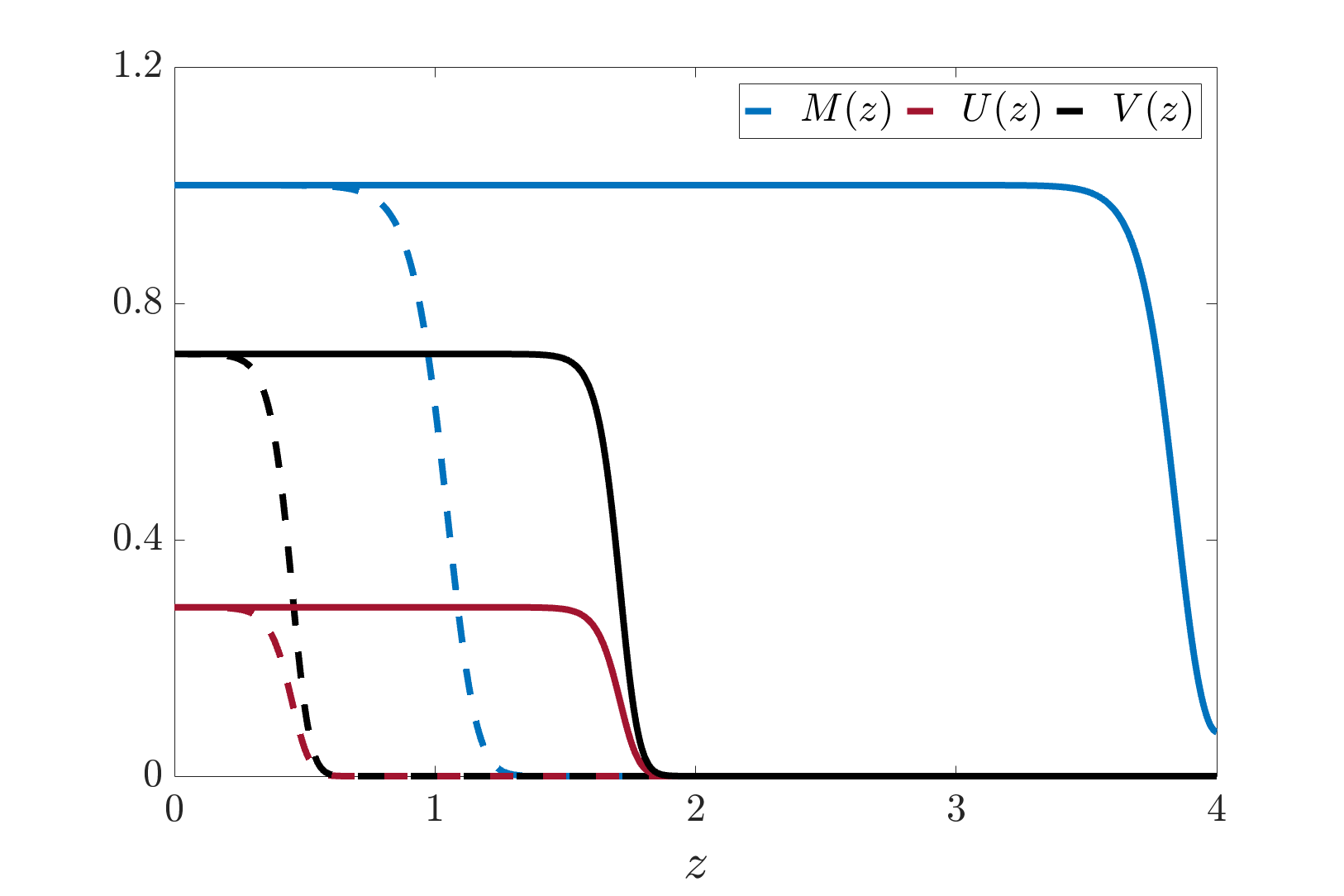}
    \caption{{\bf Traveling wave profiles for models \eqref{FKPP} and \eqref{constantrates-model}.} Traveling wave profiles for the FKPP model \eqref{FKPP} ($M(z)$ in blue) and the constant rate go-or-grow \eqref{constantrates-model} (moving cells $U(z)$ in red, proliferating cells $V(z)$ in black). Simulations are shown at two different time steps: an earlier time $T_1$ (dashed curves) and a later time $T_2$ (solid curves). Parameters are set to $\alpha = 1$, $\beta = 2$, $g(0) =1$, $\mu = 0$, and $d = 0.0001$.}
    \label{fig:twicespeed}
\end{figure}

\subsection{Large wave speed scaling}\label{sec:large_wave_speed_scaling}
In addition to studying the wave speed of traveling wave solutions of the general go-or-grow model~\eqref{general-model}, here we introduce the large wave speed scaling approach, which provides insights into the shape of the traveling waves. Originally developed by Canosa~\cite{canosa1973nonlinear}, this method allows for an approximation of the traveling wave solution by considering an asymptotic expansion of the traveling wave system in the limit of large wave speeds. For systems for which no general traveling wave solution theory exists, i.e., those systems which are noncooperative and not bounded above and below by cooperative systems, this technique is a useful way to obtain the form of the traveling wave by solving a much simpler approximate system. Furthermore, this approach facilities the investigation of the conditions under which the approximation aligns closely with the traveling wave system, providing a clearer understanding of the wave dynamics.

Introducing the traveling wave coordinate, $z= x-ct$, such that $U(z)=u(t,x)$ and $V(z)=v(t,x)$, the general go-or-grow model~\eqref{general-model} is written as
\begin{equation}\label{FWSS}
\begin{sistem}
    -cU' = DU''-\mu U -\alpha(U,V)U+\beta(U,V)V, \\[0.4cm]
    -cV' = g(U+V)V +\alpha(U,V)U-\beta(U,V)V.
    \end{sistem}
\end{equation}
By using the scaling $\xi = -\frac{1}{c}z$, we obtain the approximation
\begin{equation*}
\begin{sistem}
    U' = \dfrac{D}{c^2}U''-\mu U -\alpha(U,V)U+\beta(U,V)V , \\[0.4cm]
    V' = g(U+V)V +\alpha(U,V)U-\beta(U,V)V.
    \end{sistem}
\end{equation*}
Defining the perturbation parameter $\delta := 1/c^2 $ and assuming it is small, we expand $U$ and $V$ as regular perturbation expansions
\begin{equation*}
    U = \sum^{\infty}_{n=0}U_n\delta^n,\qquad  V = \sum^{\infty}_{n=0}V_n\delta^n. 
\end{equation*}
As in the fast transition rate scaling (Section~\ref{sec:fast_trans_rate}), we require that $\alpha$, $\beta$, and $g$ are sufficiently smooth to allow for a Taylor expansion at $(U_0,V_0)$. In addition, we require each order of $\delta$ to vanish independently. With both of these assumptions, the zeroth order of $\delta$ reads
\begin{equation}\label{gen-large-speed1} 
\begin{sistem}
    U_0' = -\mu U_0 -\alpha(U_0,V_0)U_0+\beta(U_0,V_0)V_0 , \\[0.4cm]
    V_0' = g(U_0+V_0)V_0 +\alpha(U_0,V_0)U_0-\beta(U_0,V_0)V_0.
    \end{sistem}
\end{equation}
\noindent The effective result of scaling is removing the diffusion term from the analysis, reducing the dimension of the phase space of the problem by one. In this case, we reduce the phase space to two dimensions, allowing for the examination of the shape of the traveling wave solutions.
\paragraph{Example} We illustrate the large wave speed approximation in the case of the constant rates go-or-grow model \eqref{constantrates-model} with $\mu = 0$ and logistic growth function, ${g(u+v)v = (1 -(u+v))v}$. In this example, system \eqref{gen-large-speed1} becomes  
\begin{equation}\label{constantrates-large-speed1} 
\begin{sistem}
    U_0' =  -\alpha U_0+\beta V_0 , \\[0.4cm]
    V_0' = (1 - U_0 -V_0)V_0 +\alpha U_0-\beta V_0.
    \end{sistem}
\end{equation}
The classical phase plane analysis shows that the origin (0,0) is a saddle node and the equilibrium ${(U_0^*,V_0^*) = \left(\frac{1}{\alpha / \beta +1},\frac{\alpha / \beta }{\alpha / \beta  +1} \right)}$ is a stable node. We calculate the nullclines of the above system in terms of $U_0$, obtaining
\begin{subequations}
\begin{align} \label{U_0nullcline}
U_0\mbox{-nullcline:}\qquad     V_{\mathcal{N}_1}(U_0) &= \frac{\alpha}{\beta}U_0, \\ 
 V_0\mbox{-nullcline:}\qquad    V_{\mathcal{N}_2}(U_0) &= \frac{(1-U_0) - \beta}{2} + \frac{\sqrt{((1-U_0) - \beta)^2+ 4\alpha U_0}}{2}.\label{V_0nullcline}
\end{align}
\end{subequations}
From these nullclines, it is possible to define the trapping region
\begin{align}
    \mathcal{W} = \{(U_0,V_0)\in \mathbb{R}^2_+: 0\leq U_0 \leq U_0^*,\;  V_{\mathcal{N}_1}(U_0) \leq V_0 \leq  V_{\mathcal{N}_2}(U_0)\}\,.
\end{align}
Indeed,  for trajectories on the line $\{U_0 = 0,\;V_0 > 0 \}$, we have $U_0' > 0$, thus they point inward into $\mathcal{W}$. Then, we consider the curve $\{V_0 =  V_{\mathcal{N}_1}(U_0),\; 0 <U_0 < U_0^*\}$, whose trajectories have $U_0' = 0$ and 
\begin{align*}
    V_0' = \frac{\alpha}{\beta}\left[1 - \left(1+\frac{ \alpha}{\beta}\right)U_0 \right]\,U_0  > 0,
\end{align*}
thus, the flux points upwards into $\mathcal{W}$. Finally, on the curve ${\{V_0 =  V_{\mathcal{N}_2}(U_0),\; 0 < U_0 < U_0^*\}}$, the trajectories have $V_0' = 0$ and 
\begin{align*}
    U_0' = -\alpha U_0 + \beta  V_{\mathcal{N}_2}(U_0) > -\alpha U_0 + \beta V_{\mathcal{N}_1}(U_0) = 0,
\end{align*}
meaning that the flux points into $\mathcal{W}$. Therefore, by Poincaré--Bendixson theorem (Theorem 3.7.1 in \cite{perko2013differential}), we have that every $\omega$-limit set of a trajectory inside $\mathcal{W}$ is either a fixed point, a periodic orbit, or a heteroclinic orbit between the two fixed points $(0,0)$ and $(U_0^*,V_0^*)$. Considering the Dulac function $\phi = -1/V$, we can rule out periodic orbits by applying Dulac's criterion (Theorem 3.9.2 in \cite{perko2013differential}). Thus, we obtain the existence of a heteroclinic orbit between $(0,0)$ and $(U_0^*,V_0^*)$. Figure \ref{fig:phasePortrait} illustrates the phase portrait of the system \eqref{constantrates-large-speed1} and the heteroclinic orbit between the two equilibria on the left, as well as the heteroclinic orbit and the traveling wave solution of \eqref{constantrates-large-speed1} in the wave coordinate $\xi$ on the right.

\begin{figure}[h!]
    \centering
    \includegraphics[width=0.43\linewidth]{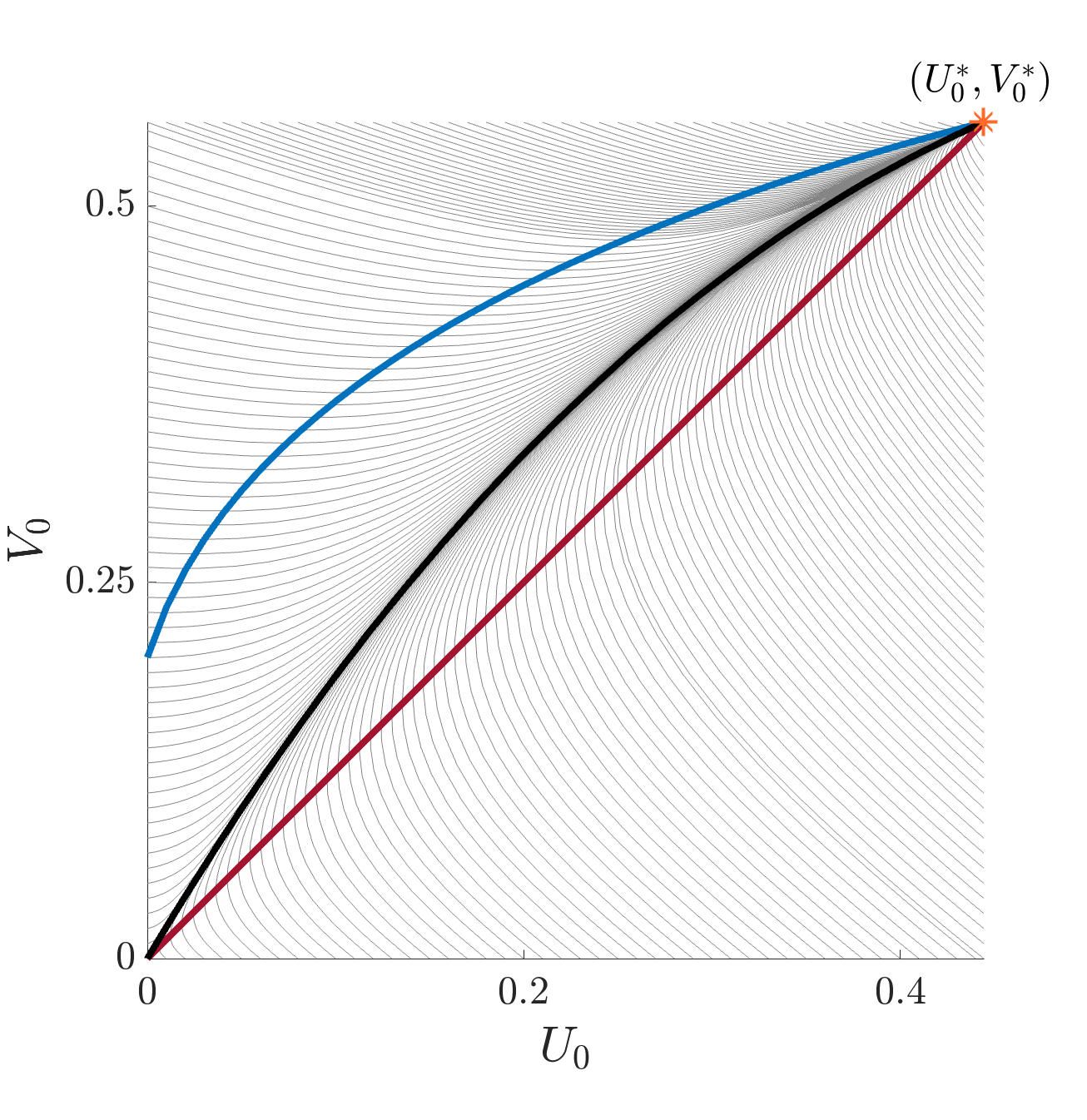}
    \includegraphics[width=0.43\linewidth]{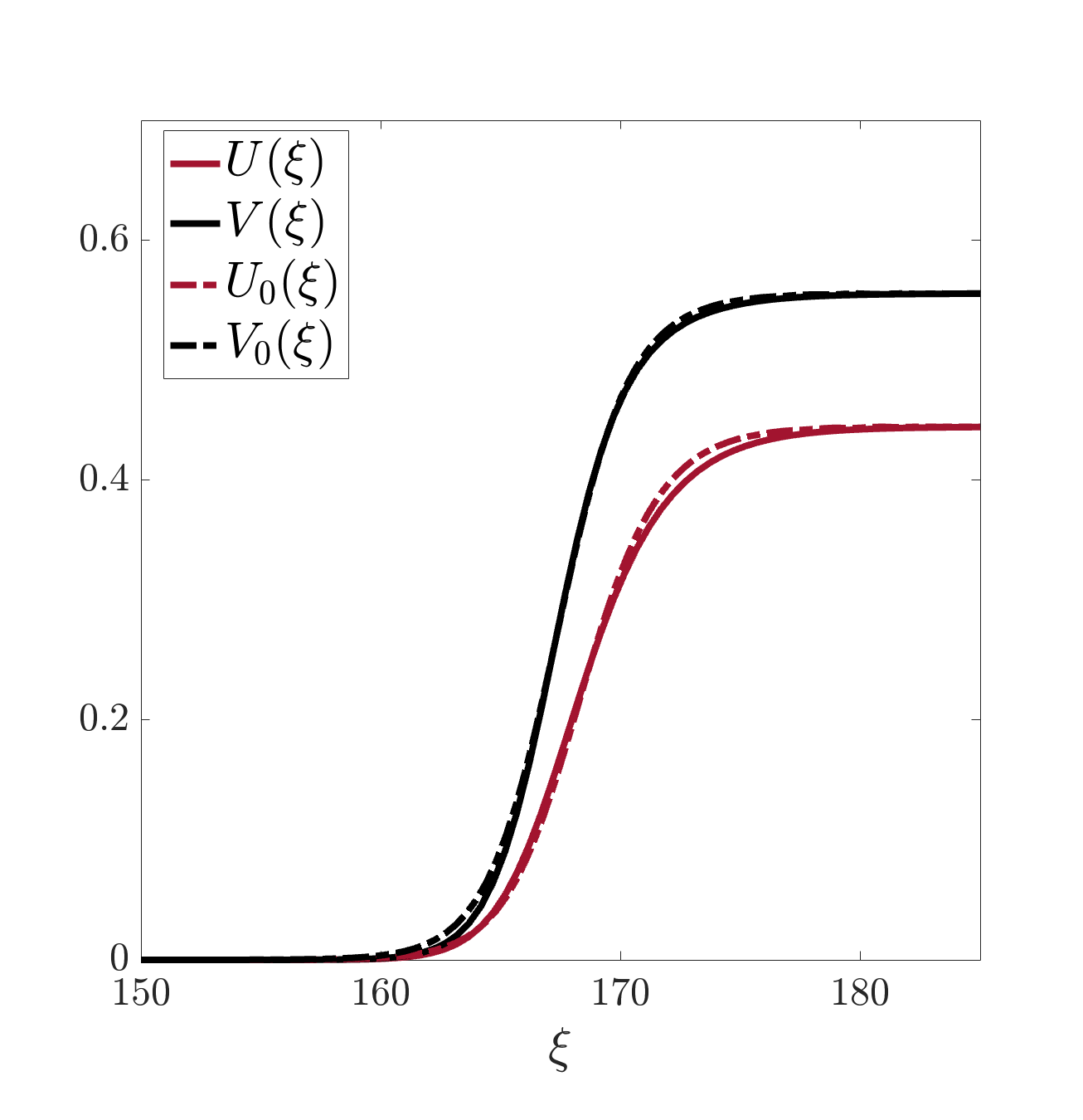}
    \caption{{\bf Large wave speed scaling applied to model \eqref{constantrates-model}.} (Left) Phase portrait of system \eqref{constantrates-large-speed1}. Trajectories are depicted in light grey, while the blue curve represents the $V_0$ nullcline \eqref{V_0nullcline}, the red curve represents the $U_0$ nullcline \eqref{U_0nullcline}, and the black curve represents the heteroclinic connection from the $(0,0)$ equilibrium to $(U_0^*,V_0^*)$ (orange marker). (Right) Comparison between the traveling wave profile (solid curves) from the model \eqref{constantrates-model} with the corresponding large wave speed approximation \eqref{constantrates-large-speed1} (dot-dashed curves) in the coordinate $\xi$. Logistic growth is assumed in \eqref{constantrates-model}, and the parameters (for both plots) are set to $\alpha =1$, $\beta =0.8$, $\mu=0$, and $d = 0.0001$.}
    \label{fig:phasePortrait}
\end{figure}

\bigskip
The large wave speed scaling has been used extensively in mathematical biology, especially in cancer growth models  \cite{sherratt2000wavefront,sherratt2001new,zhu2008justification,quinn2009dynamics,zhu2011existence}. It was explicitly used for the total population go-or-grow models in \cite{stepien2018,tursynkozha2023traveling} for obtaining an approximation of the traveling wave solution. In particular, in \cite{stepien2018}, the authors numerically show that for small values of the wave speed, the simulations do not agree, while for larger values of wave speed they observe that the approximation is valid.

\section{Discussion} \label{sec:discussion}
In this article, we have examined the significance of a class of go-or-grow models that are widely employed to describe various biological phenomena, particularly those related to cell dynamics and cancer progression. What started as a simple review of go-or-grow models quickly became a mathematically involved exposition. Our primary objective has been to provide a comprehensive review of the intriguing mathematical features that characterize this approach, while also highlighting the challenges and open problems that may inspire future mathematical investigations. Despite the biological relevance of these models and their extensive application across various mathematical contexts, a systematic exploration of their mathematical properties and the associated analytical and numerical challenges has been lacking. While we do not claim our analysis is exhaustive, we have endeavored to delve into the diverse array of analytical results available in the literature, examining different mathematical assumptions related to the switching terms of the models and establishing clear connections between the classical FKPP equation and its go-or-grow counterparts.

Beginning with a brief overview of the biological evidence for the emergence of dichotomic dynamics in biological systems, we focused specifically on glioma cell evolution. Research has demonstrated that the progression of this tumor is driven by a proliferation/migration dichotomy, where proliferating cells remain stationary or move slowly, while migrating cells do not proliferate \cite{giese1996,giese2003cost}. This serves as the primary biological motivation for our exploration of this class of models. We provided a comprehensive summary of the extensive mathematical literature addressing the go-or-grow dichotomy in glioma progression, spanning from stochastic and discrete approaches to continuous frameworks, with migratory dynamics and population switches influenced by various external and intrinsic factors.

We focused on a specific continuous modeling framework, which we named as the {\it general go-or-grow model} \eqref{general-model}. This model features linear diffusion for the moving population and a characteristic switch between the two populations: migrating cells ($u)$ and proliferating ($v$) cells. The transition functions $\alpha(\cdot)$ and $\beta(\cdot)$ govern this switching and depend solely on the densities of the populations. Depending on the nature of this dependency, we proposed various specialized versions of the model and reviewed the key papers that have contributed to their analysis. Moreover, we revisited a general existence result for the general model and discussed common scaling techniques that facilitate its analysis, particularly in relation to the classical FKPP equation. 

Although these go-or-grow models have been widely employed, they present a range of analytical and numerical challenges associated with the emergence of highly unstable patterns and irregular traveling waves under specific assumptions on the transition functions. While these issues have been observed in different contexts \cite{pham2012,Anna2}, we highlighted and integrated them into a common framework. Singularities arise, driven by an extreme version of the Turing instability, where the diffusion coefficient of the activator is zero. All high frequencies are unstable, leading to oscillations and patterns on fine scales. Of course, this instability has conditions, and if these conditions are not satisfied, then the model is well-behaved.

Focusing on specific analytical properties of go-or-grow models that can be effectively addressed, we extensively reviewed results related to the critical domain problem and the analysis of the existence and minimal speed of traveling wave solutions for the general go-or-grow model and its specialized variants. For the critical domain problem, we revisited classical results concerning the FKPP equation \cite{britton1986reaction,wang2024upper} and connected them to findings related to the general go-or-grow model, which show the bifurcation between domain size and the existence of non-trivial steady-state solutions, under specific assumptions about the transition and growth functions. In our examination of traveling wave solutions, we outlined the necessary assumptions that the general go-or-grow model~\eqref{general-model} must meet to be classified as a cooperative system. This classification allows us to apply a robust theoretical framework concerning the existence of traveling waves and invasion speed, as established in the literature \cite{fang2009monotone,weinberger2002analysis}. Similar to our analysis of the critical domain size, we establish a connection to the FKPP equation, thereby generalizing previous results to the context of the general go-or-grow model~\eqref{general-model} introduced here.

\paragraph{Open Problems}
While we have reviewed a wide array of analytical and numerical results concerning this special class of mathematical models, several important mathematical questions remain unresolved. One intriguing question that arises from our analysis pertains to the convergence of the fast-transition rate system \eqref{Scaling_FTR} to system \eqref{gen-fast-transition1} as $\ep \to 0$, specifically in the context of traveling wave solutions: what assumptions must be satisfied for the limit $\ep \to 0$ to ensure convergence (in some sense) of \eqref{Scaling_FTR} to \eqref{gen-fast-transition1}? Furthermore, if \eqref{Scaling_FTR} meets these necessary conditions, what does the wave speed of \eqref{Scaling_FTR} converge to in this limit? How does this relate to the minimum wave speed of the corresponding FKPP equation given by \eqref{FKPP_FTS}?

Similar questions arise when considering large wave speed scaling. Many biological systems can be described by models of the form \eqref{general-model}, with specific expressions for the transition functions $\alpha(u,v)$ and $\beta(u,v)$ that may result in noncooperative dynamics. Under what conditions is the existence of traveling waves guaranteed in these noncooperative models, and what is the minimum wave speed associated with them? For cooperative systems, when can we derive an explicit expression for the minimum wave speed, as indicated in \eqref{barc}?

Additional challenges emerge from the instabilities and patterns discussed in Section \ref{sec:instabilities}. How can the autocatalysis condition \eqref{autocatalysis} be formulated for a general model within the go-or-grow family? Is it feasible to develop more sophisticated numerical solvers capable of correctly identifying the sources of the observed instabilities?

These open questions present intriguing opportunities for future research. As we intended to showcase the usefulness of go-or-grow modeling in biology while highlighting interesting mathematical properties, this article is also a warning not to underestimate the monster on a leash.

\subsection*{Acknowledgements} {\small We are grateful to the Fields Institute in Toronto, who hosted us for a few weeks in Summer and Fall 2024 to work on this project. 
 RT acknowledges support from a Canadian Graduate Scholarship of the Natural Sciences and Engineering Research Council of Canada (NSERC) and Alberta Innovates Graduate Student Scholarship.
MC was supported by the National Group of Mathematical Physics (GNFM-INdAM) through the INdAM–GNFM Project (CUP E53C22001930001) ``From kinetic to macroscopic models for tumor-immune system competition'', and by the European Union - NextGenerationEU and the MUR-Italian Ministry of Universities and Research through the project PRIN 2022 PNRR {\it Mathematical modeling for a Sustainable Circular Economy in Ecosystems} (project code P2022PSMT7, CUP D53D23018960001). MC has been partially supported by the State Research Agency of the Spanish Ministry of Science and FEDER-EU, project PID2022-137228OB-I00 (MICIU/AEI /10.13039/501100011033); by Modeling Nature Research Unit, Grant QUAL21-011 funded by Consejería de Universidad, Investigaci\'on e Innovaci\'on (Junta de Andalucía). TLS acknowledges support from a National Science Foundation (NSF) grant DMS-2151566.
TH is supported through a discovery grant of the Natural Science and Engineering Research Council of Canada (NSERC), RGPIN-2023-04269. }

\subsection*{Conflict of interest}
We declare that TH is currently an Editor-in-Chief of the Journal of Mathematical Biology.
The research was conducted in the absence of any commercial or financial interests that could be construed as a potential conflict of interest.  

\subsection*{Data availability statement}
Data sharing is not applicable to this article as no data sets were generated or analyzed during the current study.

\begin{appendices}

\section{Appendix: Proofs of Lemma \ref{phiProp} and Theorem \ref{general-model-cooperative-TW}}\label{Proof-section}

The proofs of Lemma \ref{phiProp} and Theorem \ref{general-model-cooperative-TW} primarily require showing that the general go-or-grow model~\eqref{general-model} with the assumptions \eqref{TW1}, \eqref{TW2}--\eqref{TW3}, and \eqref{TW5}--\eqref{TW4} lives inside the general cooperative theory proposed by \cite{liang2007asymptotic, fang2009monotone,weinberger2002analysis}. In these works, the authors are concerned the $n$-equation reaction--diffusion system
\begin{equation} \label{super-general-model}
    {\bf w}_t = D\Delta {\bf w} + {\bf F}({\bf w}),
\end{equation}
where ${\bf w} = (w_1,\dots,w_n)$ and $D = {\rm diag}(d_1,\dots,d_n)$, with $d_i\geq 0$ and at least one $d_i \neq 0$. Moreover, we require the following assumptions on the reaction term ${\bf F}:\mathbb{R}^n\longrightarrow \mathbb{R}^n$:
\begin{subequations}\label{assum_K}
\allowdisplaybreaks
 \begin{align}
 \begin{split}\label{K1}
 \bullet \,\,&{\bf F} \text{ is continuous with }\, {\bf F}(\mathbf{0}) = {\bf F}({\bf \bar{w}}) = {\bf 0},\,\text{ and there is no }\,{\bf \nu}\,\text{ such that }{\bf F}({\bf \nu}) = \mathbf{0}\,\\
 &\text{ and } \, \mathbf{0}< \nu <{\bf \bar{w}};
 \end{split}\\
  \begin{split}\label{K2}
\bullet \,\,& \text{ The system is cooperative, i.e., each component }\, f_i \,\text{ of }\, {\bf F}\,\text{ is nondecreasing in all the }\\
&\text{ components of }\, \mathbf{w}, \,\text{ with the possible exception of the } i\text{th one}; 
\end{split}\\
 \begin{split}\label{K3}
 \bullet \,\,& {\bf F}(\mathbf{w})\,\text{ is piecewise continuously differentiable in }\, \mathbf{w}\,\text{ for }\,\mathbf{0} \leq \mathbf{w}\leq {\bf \bar{w}}\,\text{ and differentiable at } \mathbf{0}\,,\\
&\text{ and the matrix }\, {\bf F}'(\mathbf{0})\,\text{ is irreducible with a positive eigenvalue;}  
 \end{split}\\
 \begin{split}\label{K4}
 \bullet \,\,&\text{There exists } \, a>0\,, \sigma >1\,, \text{and } {\bf r}>0\,\text{ such that }\\[0.1cm] 
 &\hspace{5.5cm}{\bf F}(\mathbf{w}) \geq {\bf F}'(\mathbf{0})\mathbf{w} - a\|\mathbf{w}\|^{\sigma}\mathbf{1}, \\
&\text{for all }\,\mathbf{0}\leq \mathbf{w} \leq {\bf r}, \text{ where } \mathbf{1} \text{ denotes the vector with all entries equal to 1;}
  \end{split}\\
  \begin{split}\label{K5}
\bullet \,\,&\text{For any } \,\eta >0, \\[0.1cm]
&\hspace{5cm}{\bf F}(\min\{\eta \mathbf{w})(\rho),{\bf 1}\} \leq \eta\, D{\bf F}(\mathbf{0}) \mathbf{w}(\rho),\,\\
& \text{for all} \, \rho \in (0,\rho^*]\,,\text{ where }\,\rho^* \, \text{ is the value of }\,\rho\, \text{ at which }\, \lambda_+(\rho)/\rho\,\text{ attains its infimum.}
 \end{split}
\end{align}
\end{subequations}

In all the assumptions, the operations are defined component-wise. We show in the following lemma that the general go-or-grow model~\eqref{general-model}, with the appropriate assumptions, is a special case of the reaction--diffusion equation~\eqref{super-general-model}.

\begin{lemma}
    If the general go-or-grow model~\eqref{general-model} satisfies \eqref{TW1}, \eqref{TW2}--\eqref{TW3}, and \eqref{TW5}--\eqref{TW4}, then \eqref{general-model} also satisfies the set of assumptions \eqref{assum_K}.
\end{lemma}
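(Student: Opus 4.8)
The plan is to treat this as a verification lemma: setting $n=2$, $\mathbf{w}=(u,v)^T$, $D=\mathrm{diag}(d,0)$, $\bar{\mathbf{w}}=(u^*,v^*)$, and $\mathbf{F}$ as in \eqref{F_vec}, I would check each of the five structural hypotheses \eqref{K1}--\eqref{K5} in turn and show that it is the componentwise transcription of one of the traveling-wave assumptions. The whole argument reduces to computing the Jacobian of $\mathbf{F}$ and performing a handful of direct substitutions, so it is bookkeeping rather than hard analysis; the only genuine content lies in matching the eigenstructure in \eqref{K5}.

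I would first dispatch the algebraic conditions. Condition \eqref{K1} follows from \eqref{TW1}: $\mathbf{F}(\mathbf{0})=\mathbf{0}$ is immediate, $\mathbf{F}(u^*,v^*)=\mathbf{0}$ is exactly the system in \eqref{TW1} once the two scalar equations are added, and the uniqueness of the nonzero root asserted there rules out an intermediate equilibrium $\mathbf{0}<\nu<\bar{\mathbf{w}}$. For \eqref{K2} I would compute the two off-diagonal partials
\[ \partial_v f_1 = \beta(u,v)+\beta_v v-\alpha_v u, \qquad \partial_u f_2 = \alpha(u,v)+\alpha_u u+\big(g'(u+v)-\beta_u\big)v, \]
and note that their nonnegativity is precisely what \eqref{TW3} demands, so cooperativity is \eqref{TW3}. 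For \eqref{K3}, regularity is inherited from \eqref{TW2}, and the Jacobian at the origin reads
\[ \mathbf{F}'(\mathbf{0}) = \begin{pmatrix} -\mu-\alpha(0,0) & \beta(0,0) \\ \alpha(0,0) & g(0)-\beta(0,0) \end{pmatrix}, \]
whose off-diagonal entries yield irreducibility while $\det \mathbf{F}'(\mathbf{0}) = -\big[(g(0)-\beta(0,0))(\mu+\alpha(0,0))+\alpha(0,0)\beta(0,0)\big]<0$ by the inequality in \eqref{TW2}; a $2\times2$ quasi-positive matrix with negative determinant has real eigenvalues of opposite sign, so its principal (Perron) eigenvalue is positive.

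For the two analytic conditions I would expand $\mathbf{F}(\mathbf{w})-\mathbf{F}'(\mathbf{0})\mathbf{w}$ componentwise and read off its entries
\[ (\alpha(0,0)-\alpha(u,v))u+(\beta(u,v)-\beta(0,0))v, \qquad (g(u+v)-g(0))v-(\alpha(0,0)-\alpha(u,v))u-(\beta(u,v)-\beta(0,0))v, \]
which are exactly the left-hand sides of the two inequalities of \eqref{TW4}; thus \eqref{K4} holds after replacing $|u+v|^{\sigma}$ by $\|\mathbf{w}\|^{\sigma}$ via the cone equivalence $\|\mathbf{w}\|\le u+v\le\sqrt2\,\|\mathbf{w}\|$ and absorbing the constant. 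For \eqref{K5} I would evaluate $\mathbf{F}$ along the ray $\eta\,\nu(\rho)=(\eta\nu_1,\eta)$, with $\nu(\rho)$ the eigenvector \eqref{eq:principal_eigvector_A}, and use the second-row eigenrelation $\alpha(0,0)\nu_1+g(0)-\beta(0,0)=\lambda_A(\rho)$. Dividing the second component by $\eta$ turns the required bound $\mathbf{F}(\eta\nu)\le\eta\,\mathbf{F}'(\mathbf{0})\nu$ into $g(\eta(1+\nu_1))+\alpha(\eta\nu_1,\eta)\nu_1-\beta(\eta\nu_1,\eta)\le\lambda_A(\rho)$, the left inequality of \eqref{TW5}; substituting the same eigenrelation into the first component converts its bound into the right inequality of \eqref{TW5}. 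Hence the double inequality \eqref{TW5} is exactly \eqref{K5}.

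The main obstacle is \eqref{K5}: it is the only hypothesis in which the eigenstructure enters nontrivially, and the matching must hold for every $\rho\in(0,\rho^*]$ rather than only at the infimizing $\rho^*$, while correctly accounting for the truncation $\min\{\eta\,\nu(\rho),\mathbf{1}\}$ on the region where $\eta\nu_1$ or $\eta$ exceeds the upper state. A secondary point requiring care is irreducibility in \eqref{K3}: a $2\times2$ matrix is irreducible only when both off-diagonal entries are nonzero, so I would make the strict positivity $\alpha(0,0)>0$ and $\beta(0,0)>0$ explicit — this is consistent with, but slightly stronger than, the nonnegativity that \eqref{TW3} forces at the origin, and it is in any case implicit in the eigenvector formula \eqref{eq:principal_eigvector_A}, which divides by $\alpha(0,0)$.
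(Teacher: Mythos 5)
Your proposal is correct and takes essentially the same route as the paper's own proof: a condition-by-condition verification matching \eqref{K1} to \eqref{TW1}, \eqref{K2} to \eqref{TW3} via the off-diagonal Jacobian entries, \eqref{K4} to the rearranged inequalities of \eqref{TW4}, and \eqref{K5} to \eqref{TW5} through the second-row eigenrelation $\alpha(0,0)\nu_1(\rho)+g(0)-\beta(0,0)=\lambda_A(\rho)$. The only cosmetic divergence is that you certify the positive eigenvalue in \eqref{K3} via $\det \mathbf{F}'(\mathbf{0})<0$ for a quasi-positive $2\times 2$ matrix rather than the paper's explicit evaluation of $\lambda_A(0)$, and your flagged caveats --- strict positivity $\alpha(0,0)>0$, $\beta(0,0)>0$ for irreducibility, the passage from $|u+v|^{\sigma}$ to $\|\mathbf{w}\|^{\sigma}$ in \eqref{K4}, and the truncation $\min\{\eta\,\nu(\rho),\mathbf{1}\}$ in \eqref{K5} --- are all points the paper's proof silently glosses over, so they sharpen rather than depart from its argument.
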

\begin{proof} Condition \eqref{K1}, and the first part of condition \eqref{K3} are directly satisfied by the assumptions \eqref{TW1} and \eqref{TW2}. In addition, during the wavefront calculation (Section~\ref{sec:travwave_unifyingframework}), we computed the eigenvalue of the matrix $A = \rho^2D + {\bf F}'(0,0)$ \eqref{eq:principal_eig_A}, which for $\rho = 0$ reads
\begin{align*}
    \lambda_A(0) =& \frac{g(0) - \mu -\alpha(0,0) - \beta(0,0)}{2} \\
    &+ \frac{\sqrt{\big[g(0) - \mu -\alpha(0,0) - \beta(0,0)\big]^2 + 4\big[(g(0) -\beta(0,0))(\mu +\alpha(0,0))+ \alpha(0,0)\beta(0,0)\big]}}{2}\\
    &>\frac{g(0) - \mu -\alpha(0,0) - \beta(0,0)}{2} + \frac{\sqrt{\big[g(0) - \mu -\alpha(0,0)-  \beta(0,0)\big]^2}}{2} \geq 0
\end{align*}
completing the requirements for \eqref{K3}.

For assumption \eqref{K2}, consider the Jacobian of ${\bf F}$ with components
\begin{equation*}
    \begin{split}
& {\bf F}'_{11}(u,v) =-(\mu -\alpha(u,v)) - \dfrac{\partial}{\partial u}(\alpha(u,v)) u + \dfrac{\partial}{\partial u} \beta(u,v) v , \\[0.2cm]
& {\bf F}'_{12}(u,v) =\beta(u,v) -\dfrac{\partial}{\partial v} \alpha(u,v)u + \dfrac{\partial}{\partial v}(\beta(u,v)) v , \\[0.2cm]
& {\bf F}'_{21}(u,v)= \alpha(u,v) + \dfrac{\partial}{\partial u} (\alpha(u,v))u + \dfrac{\partial}{\partial u}(g(u+v) -\beta(u,v))v , \\[0.2cm]
& {\bf F}'_{22}(u,v) = (g(u+v) -\beta(u,v)) +  \dfrac{\partial}{\partial v}(g(u+v) -\beta(u,v))v + \dfrac{\partial}{\partial v}\alpha(u,v)u .
    \end{split}
\end{equation*}
\normalsize
The cooperativity assumption is satisfied by having the off-diagonal terms greater than or equal to zero, which is exactly assumption \eqref{TW3}.

Now, the more technical assumptions \eqref{K4} and \eqref{K5} need to be satisfied. Considering the first inequality of assumption \eqref{TW4}, rearranging terms, and adding $-\mu\, u$ to both sides yields
\begin{align*}
    -(\mu +\alpha(u,v))u + \beta(u,v) \geq -(\mu +\alpha(0,0))u + \beta(0,0)v - a|u+v|^{\sigma},
\end{align*}
which is exactly ${\bf F}_1(u,v) \geq \big({\bf F}'_{11}(0,0),{\bf F}'_{12}(0,0)\big)\cdot\begin{pmatrix}
    u \\ v
\end{pmatrix} - a|u +v|^{\sigma}$. For the second component, considering the second inequality of assumption \eqref{TW4} and rearranging terms yields 
\begin{align*}
    g(u+v)v +\alpha(u,v) u -\beta(u,v)v \geq g(0) -\alpha(0,0)u + \beta(0,0)v- a|u+v|^{\sigma},
\end{align*}
which is exactly ${\bf F}_2(u,v) \geq \big({\bf F}'_{21}(0,0),{\bf F}'_{22}(0,0)\big)\cdot\begin{pmatrix}
    u \\ v
\end{pmatrix} - a|u +v|^{\sigma}$.
Finally for assumption \eqref{K5}, we compute
\begin{align*}
    {\bf F}'(0,0)\nu(\rho) &= 
    \begin{pmatrix}
        -(\mu - \alpha(0,0)) & \beta(0,0) \\[0.5cm]
        \alpha(0,0) & g(0) - \beta(0,0)
    \end{pmatrix}
    \begin{pmatrix}
        \dfrac{\lambda_A(\rho) -g(0) +\beta(0,0)}{\alpha(0,0)} \\[0.3cm] 1
    \end{pmatrix}\\[0.2cm] &= \begin{pmatrix}
        g(0) - \mu\nu_1(\rho) - \lambda_A(\rho) \\[0.5cm] \lambda_A(\rho)
    \end{pmatrix}.
\end{align*}
For the first component, taking the middle inequality of assumption~\eqref{TW5}, i.e.,
\begin{align} \label{TW5K}
    -\alpha\big(\eta\nu_1(\rho),\eta\big)\nu_1(\rho) + \beta\big(\eta\nu_1(\rho),\eta\big) \leq g(0) -\lambda_A(\rho) ,
\end{align}
adding $-\mu\nu_1(\rho)$ to both sides, and multiplying by $\eta > 0$, we get
\begin{align*}
 -\alpha\big(\eta\nu_1(\rho),\eta\big)\eta-\mu\nu_1(\rho)\eta + \beta\big(\eta\nu_1(\rho),\eta\big)\eta \leq \big(g(0) -\mu\nu_1(\rho)-\lambda_A(\rho)\big)\eta.
\end{align*}
Notice that this inequality is exactly ${\bf F}_1(\eta\nu_1(\rho),\eta) \leq \eta \big({\bf F}'_{11}(0,0),{\bf F}'_{12}(0,0)\big)\cdot(\eta\nu_1(\rho),\eta)^T$. For the second component, we consider the first inequality of assumption~\eqref{TW5}, i.e.,
\begin{align*}
    g\big(\eta + \eta \nu_1(\rho)\big) \leq -\alpha\big(\eta\nu_1(\rho),\eta\big)+ \beta\big(\eta\nu_1(\rho),\eta\big) + \lambda_A(\rho) . 
\end{align*}
Rearranging terms and multiplying by $\eta$, we get
\begin{align*}
    \alpha\big(\eta\nu_1(\rho),\eta\big)\eta- \beta\big(\eta\nu_1(\rho),\eta\big)\eta + g\big(\eta + \eta \nu_1(\rho)\big)\eta \leq \lambda_A(\rho)\eta,
\end{align*}
which is ${\bf F}_2(\eta\nu_1(\rho),\eta) \leq \eta \big({\bf F}'_{21}(0,0),{\bf F}'_{22}(0,0)\big)\cdot(\eta\nu_1(\rho),\eta)^T$.

Therefore, if \eqref{general-model} satisfies \eqref{TW1}, \eqref{TW2}--\eqref{TW3}, and \eqref{TW5}--\eqref{TW4}, then \eqref{general-model} also satisfies \eqref{assum_K}.
\end{proof}

Since the general go-org-grow model~\eqref{general-model} satisfies the general conditions \eqref{assum_K}, we can apply the following lemma to obtain information on the spreading speed.
\begin{lemma}[\cite{liang2007asymptotic,fang2009monotone,weinberger2002analysis}]\label{lemma_Liang}
Let $\bf F$ satisfy \eqref{K1}--\eqref{K3}, and consider the initial condition $\phi\in \mathcal{C}_{{\bf \bar{w}}}= \{(\phi_1,\dots,\phi_n) \in C(\mathbb{R};\mathbb{R}^n): 0\leq \phi_i(z) \leq \bar{w}_i\; \forall z\in \mathbb{R}\}$ and $ {\bf w}(t,x;\phi)$ be the unique solution of \eqref{super-general-model} through $\phi$. Then there exists a real number $0<\Tilde{c} \leq \bar{c}$, where
\begin{align}
    \bar{c} =\inf_{\mu > 0 }\Phi(\mu),
\end{align}
such that the following statements are valid:
\begin{itemize}
    \item (i)  $\Phi(\mu) = \dfrac{ \lambda_A(\mu) }{\mu}$ is decreasing as $\mu$ nears $0$ and tends to infinity as $\mu\rightarrow 0$; $\Phi'(\mu)$ changes sign at most once in $(0,\infty)$ and $\lim\limits_{\mu \rightarrow \infty} \Phi(\mu) = + \infty$;
    \item (ii) If $\phi$ has compact support, then $\lim\limits_{\substack{t\rightarrow \infty\\ |x|\geq ct}}  {\bf w}(t,x;\phi)  = 0,$ for all $c > \Tilde{c} $;
    \item (iii) For any $c\in(0,\Tilde{c})$ and  any vector $\mathbf{r}> \mathbf{0}$, there is a positive number $R_r$ such that for any $\phi \in \mathcal{C}_{{\bf \bar{w}}}$ with $\phi \leq \mathbf{r}$ on an interval of length $2R_{\mathbf{r}}$, it holds that $\lim\limits_{\substack{t\rightarrow \infty\\|x|\leq ct}} {\bf w}(t,x;\phi)  = {\bf\bar{w}}$;
    \item (iv) If, in addition, ${\bf F}(\min\{\rho \nu(\mu^*),\textbf{1}\} )\leq \rho {\bf F}'({\bf 0})\nu(\mu),$ for all $\rho >0,$ then $\Tilde{c} = \bar{c}$, where $\mu^*$ is the value of $\mu$ at which $\Phi(\mu)$ attains its infimum.
\end{itemize}
\end{lemma}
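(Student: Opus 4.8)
The plan is to read Lemma~\ref{lemma_Liang} as a special case of the abstract theory of \emph{spreading speeds for monotone semiflows}, so that the strategy splits into three essentially independent pieces: a self-contained analysis of the scalar function $\Phi$ in part~(i); the realization of the solution map of \eqref{super-general-model} as an order-preserving semiflow; and the transfer of the convergence and linear-determinacy conclusions (ii)--(iv) from the machinery of \cite{weinberger2002analysis, liang2007asymptotic, fang2009monotone} once its structural hypotheses have been checked for our (possibly degenerate) diffusion matrix $D=\mathrm{diag}(d_1,\dots,d_n)$.

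For part~(i), I would work directly with the matrix $A(\mu)=D\mu^2+{\bf F}'({\bf 0})$. By \eqref{K2}--\eqref{K3} the Jacobian ${\bf F}'({\bf 0})$ is irreducible with nonnegative off-diagonal entries, so $A(\mu)$ is essentially nonnegative and Perron--Frobenius theory gives a simple, analytic principal eigenvalue $\lambda_A(\mu)$. The key analytic input is that $\lambda_A(\mu)$ is \emph{convex} in $\mu$ --- a standard property of the Perron eigenvalue of an essentially nonnegative matrix under a convex diagonal perturbation, and one that is directly verifiable from \eqref{eq:principal_eig_A} in the $2\times2$ go-or-grow case. Writing $\Phi(\mu)=\lambda_A(\mu)/\mu$, the sign of $\Phi'(\mu)$ equals the sign of $q(\mu):=\mu\lambda_A'(\mu)-\lambda_A(\mu)$, and $q'(\mu)=\mu\lambda_A''(\mu)\geq0$; hence $q$ is nondecreasing and $\Phi'$ changes sign at most once. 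Together with $\lambda_A(0)>0$ (immediate from the positive eigenvalue in \eqref{K3}), which forces $\Phi(\mu)\to+\infty$ as $\mu\to0^+$, and with $\lambda_A(\mu)\sim(\max_i d_i)\mu^2$, which forces $\Phi(\mu)\to+\infty$ as $\mu\to\infty$, this establishes (i) and that $\bar c=\inf_{\mu>0}\Phi(\mu)$ is attained at a unique $\mu^*$.

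For (ii)--(iv) the first step is the \emph{comparison principle}: cooperativity \eqref{K2} makes the time-$t$ solution operator $Q_t[\phi]:={\bf w}(t,\cdot;\phi)$ order-preserving, and \eqref{K1} makes the order interval $[{\bf 0},\bar{\bf w}]$ invariant. I would then verify that $\{Q_t\}$ satisfies the hypotheses of the monotone-semiflow spreading-speed theorem (translation invariance, continuity, and a compactness-type condition), which directly yields a spreading speed $\tilde c\in(0,\bar c]$ together with the decay statement~(ii) and the convergence statement~(iii). The upper bound $\tilde c\le\bar c$ comes from the explicit linear supersolution ${\bf w}(t,x)=\min\{M\,\nu(\mu)\,e^{-\mu(x-ct)},\bar{\bf w}\}$ built from the eigenpair $(\lambda_A(\mu),\nu(\mu))$: for any $c>\bar c$ one picks $\mu$ with $\Phi(\mu)<c$ so that this profile travels at speed $c$ and decays ahead of the front, and comparison forces the true solution to decay there as well. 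The lower bound~(iii) is the ``hair-trigger''/spreading estimate of the abstract theory, produced from a compactly supported subsolution that grows and saturates at $\bar{\bf w}$. Finally, for part~(iv) the extra subtangential-type inequality \eqref{K5}, ${\bf F}(\min\{\rho\,\nu(\mu^*),{\bf 1}\})\le\rho\,{\bf F}'({\bf 0})\nu(\mu)$, guarantees that the linear supersolution along the principal eigendirection is in fact a supersolution for the \emph{full nonlinear} system; this upgrades the estimate to $\tilde c\ge\bar c$, and combined with $\tilde c\le\bar c$ gives linear determinacy $\tilde c=\bar c$.

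The hard part will be verifying the compactness-type hypothesis of the abstract framework in the \emph{degenerate} case where some $d_i=0$ (in the go-or-grow application $D=\mathrm{diag}(d,0)$, so the proliferating compartment neither diffuses nor is smoothed by its own equation). The classical Liang--Zhao theory draws its compactness from diffusive smoothing, which is simply absent in the non-diffusing component, so the argument does not close naively. Overcoming this is exactly why \cite{fang2009monotone} is needed: it replaces pointwise compactness by a weaker point-$\alpha$-contraction condition and exploits the coupling through the diffusing component to recover enough regularity. Checking that the cooperative go-or-grow semiflow meets this weakened condition is the genuine obstacle, whereas the eigenvalue analysis of part~(i) and the super/subsolution constructions are comparatively routine.
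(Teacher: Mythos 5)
The paper itself contains no proof of Lemma~\ref{lemma_Liang}: it is imported verbatim from the cited works of Weinberger--Lewis--Li, Liang--Zhao, and Fang--Zhao, and the appendix only verifies the hypotheses \eqref{K1}--\eqref{K5} for the go-or-grow system before invoking it. You are therefore reconstructing the literature's proof rather than the paper's, and much of your reconstruction is faithful: your part~(i) analysis --- simplicity and analyticity of $\lambda_A(\mu)$ from the irreducibility in \eqref{K3}, convexity of the Perron root under convex diagonal perturbations, the monotonicity of $q(\mu)=\mu\lambda_A'(\mu)-\lambda_A(\mu)$, and the limits as $\mu\to 0^+$ and $\mu\to\infty$ --- is correct, and your identification of the degenerate diffusion matrix $D=\mathrm{diag}(d,0)$ as the obstruction to the classical compactness hypothesis, resolved by the weakened point-$\alpha$-contraction condition of Fang--Zhao, is exactly why that reference is needed in this setting.

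There is, however, a genuine logical gap in your treatment of (ii)--(iv): you have interchanged the roles played by \eqref{K4}- and \eqref{K5}-type conditions. The profile $\min\{M\,\nu(\mu)e^{-\mu(x-ct)},\bar{\bf w}\}$ is a supersolution of the \emph{linearized} system, but it is a supersolution of the full nonlinear system only if ${\bf F}$ is dominated by ${\bf F}'({\bf 0})$ along that profile --- which is precisely the content of \eqref{K5}. Hence the bound you present as ``unconditional'' in your step for (ii) is not available under \eqref{K1}--\eqref{K3} alone; indeed it cannot be, since already the scalar equation with $f(u)=u(1-u)(1+au)$, $a>2$, satisfies \eqref{K1}--\eqref{K4} yet spreads strictly faster than the linearized speed $2\sqrt{f'(0)}$ (the pushed-front regime), so no naive supersolution argument closes there --- and, taken literally under \eqref{K1}--\eqref{K3} alone, the printed inequality relating $\tilde c$ and $\bar c$ can only be read with the domination hypothesis of (iv) in force. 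Conversely, your claim in (iv) that \eqref{K5} ``upgrades the estimate to $\tilde c\geq\bar c$'' is backwards: a hypothesis of the form ${\bf F}(\min\{\rho\nu,\mathbf{1}\})\leq\rho\,{\bf F}'({\bf 0})\nu$ caps the nonlinear semiflow by the linear one along the principal eigendirection and therefore yields the \emph{upper} bound $\tilde c\leq\bar c$; the complementary \emph{lower} bound is the hair-trigger estimate, obtained from subsolutions of a truncated linear problem using differentiability of ${\bf F}$ at ${\bf 0}$ (as in \eqref{K3}, quantified by \eqref{K4}) together with irreducibility. In the cited sources the number $\tilde c$ is produced abstractly by the monotone-recursion machinery, not by your supersolution, and the two one-sided comparison hypotheses then pin it against $\bar c$ from opposite sides, linear determinacy in (iv) being their conjunction. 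The assertion that the min-profile is a nonlinear supersolution without \eqref{K5} is the step that would fail if your sketch were expanded.
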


Finally, we use the following theorem to prove the existence of traveling wave solutions.
\begin{theorem}[Theorem 3.1 in \cite{fang2009monotone}]\label{Teo_Fang2009} Assume the set of assumptions \eqref{assum_K} hold, then for each $c\geq\bar{c}$, system \eqref{super-general-model} has a nondecreasing wavefront $W(x+ct)$ connecting $\textbf{0}$ and ${\bf \bar{w}}$; while for any $c\in (0,\bar{c})$, there is no wavefront $W(x+ct)$ connecting $\textbf{0}$ and ${\bf \bar{w}}$
\end{theorem}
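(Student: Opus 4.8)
The plan is to recognize that assumptions \eqref{K1}--\eqref{K5} place system \eqref{super-general-model} precisely inside the abstract theory of monotone semiflows, so that the statement follows by verifying the hypotheses of that theory rather than by a direct phase-plane or shooting argument. First I would let $Q_t[\phi] := \mathbf{w}(t,\cdot;\phi)$ denote the solution map of \eqref{super-general-model} on the order interval $\mathcal{C}_{\bar{\mathbf w}}$. The cooperativity in \eqref{K2} gives, via the comparison principle for the coupled parabolic/ordinary system, that $\{Q_t\}_{t\geq 0}$ is an order-preserving (monotone) semiflow, while \eqref{K1} ensures that $\mathbf{0}$ and $\bar{\mathbf w}$ are its only fixed points in the interval, with $\bar{\mathbf w}$ attracting from below. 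This reduces the theorem to the existence/nonexistence dichotomy for monotone wavefronts of such semiflows, which is the content of \cite{liang2007asymptotic,fang2009monotone,weinberger2002analysis} and of Lemma~\ref{lemma_Liang}.

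For the existence half ($c\geq\bar c$) I would build a monotone profile by the method of upper and lower solutions combined with monotone iteration. Writing $\xi=x+ct$, the profile solves $cW' = DW'' + \mathbf{F}(W)$, with $W(-\infty)=\mathbf{0}$ and $W(+\infty)=\bar{\mathbf w}$. A natural supersolution is $\overline{W}(\xi)=\min\{\eta\,\nu(\mu)e^{\mu\xi},\,\bar{\mathbf w}\}$, where $\mu>0$ is the decay rate associated with $c$ and $\nu(\mu)$ is the principal positive eigenvector of $D\mu^2+\mathbf F'(\mathbf 0)$ (cf.\ \eqref{eq:principal_eigvector_A}); the linear-determinacy bound \eqref{K5}, namely $\mathbf F(\min\{\eta\nu(\mu),\bar{\mathbf w}\})\leq \eta\,\mathbf F'(\mathbf 0)\nu(\mu)$, is exactly what certifies $\overline W$ as a supersolution. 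For the subsolution I would use a perturbed exponential such as $\underline W(\xi)=\max\{\nu(\mu)e^{\mu\xi}-q\,\nu(\mu')e^{\mu'\xi},\,\mathbf 0\}$ with $\mu<\mu'$ and $q$ large, the lower bound $\mathbf F(\mathbf w)\geq\mathbf F'(\mathbf 0)\mathbf w-a\|\mathbf w\|^{\sigma}\mathbf 1$ from \eqref{K4} controlling the nonlinear correction. Monotonicity then lets the iteration converge between $\underline W$ and $\overline W$ to a nondecreasing profile, and the irreducibility and positive-eigenvalue part of \eqref{K3} forces the limit to connect $\mathbf 0$ to $\bar{\mathbf w}$ rather than collapse to an intermediate state, which \eqref{K1} forbids in any case. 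The characterization $\bar c=\inf_{\mu>0}\lambda_A(\mu)/\mu$ drops out of this construction: the supersolution exists only for $c\geq\bar c$, while \eqref{K5} upgrades the a priori spreading speed $\tilde c$ of Lemma~\ref{lemma_Liang} to equal $\bar c$ (linear determinacy).

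The hard part will be the partial degeneracy of $D=\mathrm{diag}(d_1,\dots,d_n)$ with some $d_i=0$. At the profile level this is harmless, since the degenerate components obey first-order ODEs in $\xi$ and can be integrated, leaving a genuinely lower-order system; the real difficulty is at the PDE level, where the absence of smoothing in those components makes $Q_t$ noncompact, so the standard fixed-point and compactness arguments that deliver a profile in the nondegenerate case cannot be applied directly. I would resolve this by regularization: replace $D$ by $D_\varepsilon=D+\varepsilon\,\mathbb I$, for which the classical uniformly parabolic theory yields a monotone wavefront $W_\varepsilon$ at each $c\geq\bar c_\varepsilon$, and then pass to the limit $\varepsilon\to 0^+$. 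Uniform bounds in $[\mathbf 0,\bar{\mathbf w}]$ together with the monotonicity of each $W_\varepsilon$ let Helly's selection theorem extract a pointwise limit $W$, and a uniform two-sided exponential control at the leading edge (again from \eqref{K4}--\eqref{K5}) prevents the translates from degenerating. Checking this non-triviality, i.e.\ that after normalizing the translates the limit does not collapse onto the constant states $\mathbf 0$ or $\bar{\mathbf w}$, is the delicate technical point of the whole argument.

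Finally, the nonexistence for $c\in(0,\bar c)$ follows from the spreading-speed characterization. If a monotone wavefront with speed $c<\bar c$ existed, I would use it to sandwich the Cauchy solution arising from a compactly supported datum, forcing its leading edge to propagate no faster than $c$; this contradicts item (iii) of Lemma~\ref{lemma_Liang}, which guarantees that below the spreading speed $\tilde c=\bar c$ the solution fills up to $\bar{\mathbf w}$ on every interval $|x|\leq c't$ with $c'<\bar c$. Hence no wavefront with speed $c\in(0,\bar c)$ can exist, completing the dichotomy.
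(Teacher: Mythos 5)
This statement is not proved in the paper at all: it is imported verbatim as Theorem~3.1 of \cite{fang2009monotone} and used as a black box (together with Lemma~\ref{lemma_Liang}) to establish Lemma~\ref{phiProp} and Theorem~\ref{general-model-cooperative-TW}. So there is no internal proof to compare against; what you have done is reconstruct a proof of the cited result itself, and your reconstruction in fact tracks the strategy of Fang and Zhao's original argument quite faithfully. In particular, you correctly identify the genuine obstruction --- the partial degeneracy of $D$ destroys the compactness of the solution semiflow $Q_t$ on $\mathcal{C}_{\bar{\mathbf w}}$, so the Liang--Zhao wavefront theory \cite{liang2007asymptotic} cannot be invoked directly --- and your remedy (regularize to $D_\varepsilon = D + \varepsilon\,\mathbb{I}$, extract monotone profiles $W_\varepsilon$ for the uniformly parabolic systems, normalize translates, pass to the limit by Helly's selection theorem, and rule out collapse onto the constant states via exponential control at the leading edge coming from \eqref{K4}--\eqref{K5}) is essentially the published route. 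Your nonexistence argument for $c\in(0,\bar c)$ via the spreading-speed dichotomy of Lemma~\ref{lemma_Liang} is also the standard and correct one; note that it uses linear determinacy $\tilde c = \bar c$, which is legitimate here because the theorem assumes all of \eqref{assum_K}, including \eqref{K5}.

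Two caveats at the sketch level. First, your upper/lower-solution construction needs the usual quantitative choices spelled out: the subsolution $\max\{\nu(\mu)e^{\mu\xi} - q\,\nu(\mu')e^{\mu'\xi},\,\mathbf 0\}$ works only for $\mu' \in (\mu, \min\{\sigma\mu, \mu''\})$ (with $\mu''$ the second root of $\lambda_A(\cdot)=c\,\cdot$) so that \eqref{K4} with exponent $\sigma>1$ absorbs the nonlinear error, and the monotone iteration between the two barriers relies on the quasi-monotonicity furnished by \eqref{K2}; neither is automatic. Second, your remark that $\bar{\mathbf w}$ is ``attracting from below'' does not follow from \eqref{K1} alone --- it requires the irreducibility of ${\bf F}'(\mathbf 0)$ in \eqref{K3} together with the absence of intermediate equilibria in \eqref{K1}. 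These are fixable bookkeeping points, not gaps in the approach; as a blind reconstruction of the cited theorem's proof, the proposal is sound.
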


Applying Lemma~\ref{lemma_Liang} and Theorem~\ref{Teo_Fang2009} to the general go-org-grow model \eqref{general-model}, with assumptions \eqref{TW1}, \eqref{TW2}--\eqref{TW3}, and \eqref{TW5}--\eqref{TW4}, proves Lemma~\ref{phiProp} and Theorem~\ref{general-model-cooperative-TW}. 

\end{appendices}

\printbibliography

\end{document}